\numberwithin{equation}{section}
\newtheorem{theorem}{Theorem}[section]
\newtheorem{lemma}[theorem]{Lemma}
\newtheorem{corollary}[theorem]{Corollary}
\newenvironment{proof}[1][Proof]{\begin{trivlist}
\item[\hskip \labelsep {\bfseries #1}]}{\end{trivlist}}
\newcommand{\REM}[1]{}
\newcommand{\EE}{\mathbb{E}}
\title{
\Huge{\textbf{IBM Research Report}} \\
\vspace*{0.4in}
\Large{\textbf{Multidimensional Balanced Allocation for Multiple Choice \& $(1+\beta)$ Processes}}
\vspace*{0.2in}
}
\author{
\textbf{Ankur Narang} \\
IBM Research Division \\
IBM India Research Lab \\
Plot - 4, Block - C, ISID Campus, Vasant Kunj \\
New Delhi - 110070, India. \\
annarang@in.ibm.com\\ \\
\textbf{Sourav Dutta} \\
IBM Research Division \\
IBM India Research Lab \\
Plot - 4, Block - C, ISID Campus, Vasant Kunj\\
New Delhi - 110070, India. \\
sodutta3@in.ibm.com \\ \\
\textbf{Souvik Bhattacherjee} \\  
IBM Research Division \\
IBM India Research Lab \\
Plot - 4, Block - C, ISID Campus, Vasant Kunj \\
New Delhi - 110070, India. \\
souvikbh@in.ibm.com\\ \\
}
\date{}
\begin{document}

\begin{titlepage}

\maketitle
\thispagestyle{fancy}
\lhead{RI11018, 20 October 2011}
\rhead{Computer Science}
\small{
\noindent\textbf{IBM Research Division} \\
\textbf{Almaden - Austin - Beijing - Delhi - Haifa - T.J. Watson - Tokyo - Zurich} \\

\noindent\textbf{LIMITED DISTRIBUTION NOTICE:} This report has been submitted for publication outside of IBM and will probably be copyrighted is accepted for publication. It has been issued as a Research Report for the early dissemination of its contents. In view of the transfer of copyright to the outside publisher, 
its distribution outside of IBM prior to publication should be limited to peer communications and specific requests. After outside publication, requests should be filled only by reprints or legally obtained copies of the article (e.g., payment of royalties). Copies may be requested from IBM T.J. Watson Research 
Center, Publications, P.O. Box 218, Yorktown Heights, NY 10589 USA (email: reports@us.ibm.com). Some reports are available on the internet at \mbox{http://domino.watson.ibm.com/library/CyberDig.nsf/home}.
}

\end{titlepage}

\begin{abstract}
Allocation of balls into bins is a well studied abstraction for load balancing problems. The literature hosts numerous results for sequential (single dimensional) allocation case when $m$ balls are thrown into $n$ bins; such as: for multiple choice paradigm the expected gap between the heaviest bin and the average load is $O(\frac{\log\log(n)}{\log(d)})$~\cite{petra-heavy-case}, $(1+\beta)$ choice paradigm with $O(\frac{\log(n)}{\beta})$ gap~\cite{kunal-beta} as well as for single choice paradigm having $O(\sqrt{\frac{m\log(n)}{n}})$ gap~\cite{mm-thesis}. However, for multidimensional balanced allocations very little is known. Mitzenmacher~\cite{md-mm} proved $O(\log\log(nD))$ gap for the multiple choice strategy and $O(\log(nD))$ gap for single choice paradigm (where $D$ is the total number of dimensions with each ball having exactly $f$ populated dimensions) under the assumption that for each ball $f$ dimensions are uniformly distributed over the $D$ dimensions. In this paper we study the symmetric multiple choice process for both unweighted and weighted balls as well as for both multidimensional and scalar modes. Additionally, we present the results on bounds on gap for the $(1+\beta)$ choice process with multidimensional balls and bins.

In the first part of this paper, we study multidimensional balanced allocations for the symmetric $d$ choice process with $m >> n$ unweighted balls and $n$ bins. We show that for the symmetric $d$ choice process and with $m = O(n)$, the upper bound (assuming uniform distribution of $f$ populated dimensions over $D$ total dimensions) on the gap is $O(\ln\ln(n))$ w.h.p.. This upper bound on the gap is within $D/f$ factor of the lower bound. This is the first such tight result along with detailed analysis for $d$ choice paradigm with multidimensional balls and bins. 
This improves upon the best known prior bound of $O(\log\log(nD))$~\cite{md-mm}. For the general case of $m >> n$ the expected gap is bounded by $O(\ln\ln(n))$. For variable $f$ and non-uniform distribution of the populated dimensions (using analysis for weighted balls), we obtain the upper bound on the expected gap as $O(\log(n))$.
\par Further, for the multiple round parallel balls and bins, using symmetric $d$-choice process in multidimensional mode, we show that the gap is also bounded by $O(\log\log(n))$ for $m = O(n)$. The same bound holds for the expected gap when $m >> n$.
\par Our analysis also has the following strong implications for the sequential scalar case. For the weighted balls and bins and general case $m >> n$, we show that the upper bound on the expected gap is $O(\log(n))$ (assuming $E[W] = 1$ and second moment of the weight distribution is finite) which improves upon the best prior bound of $n^c$ ($c$ depends on the weight distribution that has finite fourth moment) provided in~\cite{kunal-weighted}. Our analysis also provides a much easier and elegant proof technique (as compared to~\cite{petra-heavy-case}) for the $O(\log\log(n))$ upper bound on the gap for scalar unweighted $m >> n$ balls thrown into $n$ bins using the symmetric multiple choice process. 
\par Moreover, we study multidimensional balanced allocations for the $(1+\beta)$ choice process and the multiple ($d$) choice process. We show that for the $(1+\beta)$ choice process and $m=O(n)$ the upper bound (assuming uniform distribution of $f$ populated dimensions over $D$ total dimensions) on the gap is \textbf{$O(\frac{\log(n)}{\beta})$}, which is within $D/f$ factor of the lower bound. For fixed $f$ with non-uniform distribution and for random $f$ with Binomial distribution the expected gap remains \textbf{$O(\frac{\log(n)}{\beta})$} and is independent of the total number of balls thrown, $m$. This is the first such tight result along with detailed analysis for $(1+\beta)$ paradigm with multidimensional balls and bins.
%
\end{abstract}

\section{Introduction}
\label{sec:intro}

Balls-into-bins processes serve as a useful abstraction for resource balancing tasks in distributed and parallel systems. Assume $m$ balls are to be put sequentially into $n$ bins, where typically the goal is to minimize the load, measured by the number of balls, in the most loaded bin. In the classic \textit{single choice} process each ball is placed in a bin chosen independently and uniformly at random.  For the case of $n$ bins and $m = n$ balls it is well known that the load of the heaviest bin is at most $(1 + o(1)) \frac{\ln (n)}{\ln\ln (n)}$ balls with high probability (w.h.p.). Further, if $m > n\ln (n)$ then the load in the heaviest bin is given by at most $\frac{m}{n} + \sqrt{(\frac{m\log(n)}{n})}$~\cite{simple-analysis-bb}.

An interesting and substantial decrease in the maximum load is achieved by the use of the \textbf{{\it multiple choice}} paradigm (also referred to as \textbf{$d$ choice} paradigm), given as: Let $Greedy(U,d)$ denote the algorithm where each ball is inserted into the lesser loaded among the $d \ge 2$ bins, independently sampled from $U$, where $U$ denotes the uniform distribution over the bins. In a seminal paper Azar et. al.~\cite{bal-alloc-azar} proved that when $m = n$ and the balls are inserted by $Greedy(U,d)$ the heaviest bin has load of $\frac{\ln\ln(n)}{\ln (d)} + \theta(1)$ w.h.p.. The case for $d = 2$ was proved by Karp et.al. in~\cite{karp-pram-sim}, later being generalized by Berenbink et.al.~\cite{petra-heavy-case} to prove the following: 

\begin{theorem}  
 Let $\gamma$ denote a suitable constant. If $m$ balls are allocated into $n$ bins using $Greedy(U,d)$ with $d \ge 2$ then the number of bins with load at least $\frac{m}{n} + i + \gamma$ is at most $n.exp(-d^i)$ with probability at least $1 - 1/n$. (\cite{petra-heavy-case})
\end{theorem}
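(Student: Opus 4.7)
My plan is to adapt the Azar–Broder–Karlin–Upfal layered induction to the heavy load regime $m \gg n$. Define the shifted level $i$ to mean "load at least $m/n + i + \gamma$," and let $\nu_i(t)$ denote the number of bins at level $i$ after $t$ balls. I would prove by induction on $i$ that $\nu_i(m) \le n\,\exp(-d^i)$ with high probability, with the constant $\gamma$ absorbing the $O(1)$ slack accumulated at each level and at the induction base.

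The inductive step rests on a transition-counting observation: since bins never lose load, a bin ends at level $\ge i$ iff at some earlier moment it received a ball while sitting at level exactly $i-1$. Under $Greedy(U,d)$, each such placement requires all $d$ independently sampled bins to lie at level $\ge i-1$ at that instant. Conditional on a uniform-in-time bound $\nu_{i-1}(s) \le k_{i-1}$, this happens with probability at most $(k_{i-1}/n)^d$ per ball, so a Chernoff/Bernstein bound over the at-most-$m$ transition attempts should yield $\nu_i(m) \lesssim m(k_{i-1}/n)^d$ plus an additive slack, with failure probability exponentially small in the mean.

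The main obstacle is precisely the $m \gg n$ regime, where the naive recursion $k_i = m(k_{i-1}/n)^d$ does not contract. To handle this I would couple the true process to a dominating Markov chain that is already in equilibrium, and instead of counting all balls, argue via an exponential potential $\Phi(t) = \sum_b \exp(\alpha(L_b(t) - m/n))$. Showing $\EE[\Phi(t+1) \mid \mathcal{F}_t] \le \Phi(t) + O(1)$ for a suitably small $\alpha$, one gets $\EE[\Phi(m)] = O(n)$, and Markov's inequality applied to the contribution of level-$i$ bins removes the dependence on $m$ and recovers the clean recursion $k_i \le n\exp(-d^i)$. The drift calculation, which must distinguish the "upper half" bins (where the $d$-choice gives strong downward pressure) from the "lower half" (where it only marginally increases), is the technical heart of the argument.

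Finally, a union bound over the $O(\log\log n)$ meaningful levels, together with a separate tail-level argument for the last few layers where $k_i$ becomes $O(1)$ and Chernoff concentration is too weak, gives the claimed $1 - 1/n$ success probability; the constant $\gamma$ is chosen large enough to absorb the cumulative slacks from each inductive step and from the base case $\nu_0(m) \le n$.
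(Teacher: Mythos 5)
First, a point of context: the paper never proves this statement — it is quoted as background from Berenbrink et al.\ \cite{petra-heavy-case}, whose actual argument (as the paper's related-work section recounts) combines a layered induction over the \emph{entire} load distribution for $m$ polynomially bounded in $n$ with a separate \emph{short memory} theorem, proved by a coupling (``neighboring coupling'') analysis of the mixing of the underlying Markov chain, to lift the result to arbitrary $m \gg n$. Your plan — layered induction plus an exponential potential to neutralize the $m$-dependence — is a different route, closer in spirit to the potential-function machinery this paper uses for its (weaker) gap statements, but as written it has two concrete gaps.

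First, the drift bound you state, $\mathbb{E}[\Phi(t+1)\mid\mathcal{F}_t] \le \Phi(t) + O(1)$, only gives $\mathbb{E}[\Phi(m)] = O(n+m) = O(m)$, which does not remove the dependence on $m$; you need a multiplicative contraction $\mathbb{E}[\Phi(t+1)\mid\mathcal{F}_t] \le (1-c/n)\Phi(t) + O(1)$ — exactly the ``upper half versus lower half'' case analysis you allude to, which is the technical content of the paper's Theorem~\ref{thm-super-mart} — to obtain the stationary bound $\mathbb{E}[\Phi(t)] = O(n)$ for all $t$. Second, and more seriously, the step ``Markov's inequality applied to the contribution of level-$i$ bins \ldots recovers the clean recursion $k_i \le n\exp(-d^i)$'' is a non sequitur: a potential with a fixed exponent $\alpha$ can only give singly exponential decay, $\nu_i(m) \lesssim n e^{-\alpha i}$, i.e.\ a gap of $O(\log n)$, never the doubly exponential bound $n\exp(-d^i)$. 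That bound must come from actually running the layered induction, and there your transition-counting step fails for $m \gg n$: summing $(\nu_{i-1}(s)/n)^d$ over all $m$ balls yields $\nu_i(m) \lesssim m(k_{i-1}/n)^d$, and the factor $m$ in place of $n$ destroys the contraction — this cannot be absorbed into the constant $\gamma$. What is missing is a mechanism to run the induction only over a recent window of $O(n\,\mathrm{poly}\log n)$ balls while controlling how loads relative to the average move within that window; Berenbrink et al.\ supply it via their coupling-based short-memory theorem, and the simpler modern proofs supply it by taking the $O(\log n)$ potential bound at time $m - \Theta(n\log n)$ as the base configuration and inducting only over the final window. Without one of these ingredients your sketch asserts the conclusion at precisely the point where the known proofs do their hardest work.
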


An immediate corollary is that w.h.p. the heaviest bin has a load of $\frac{m}{n} + \frac{\log\log(n)}{\log(d)} + O(1)$. Thus, the additive gap between the maximum load and the average load is \textit{independent} of the number of balls thrown.
 
The multiple-choice paradigm and balls-and-bins models have several interesting applications. In particular, the two-choice paradigm can be used to reduce the maximum time required to search a hash table. If instead of using a single perfectly random hash function as in a typical hash table implementation (with maximum chain length as $O(\ln(n))$), we use two perfectly random hash functions, then the length of the longest chain reduces to $O(\ln\ln(n))$. In efficient PRAM simulation, the two-choice paradigm helps in reducing the contention (~\cite{pram-sim-meyer}) of processors to access the same memory (DRAM). Further, the two-choice scheme can be advantageous in situations, for example when one hopes to fit one full chain in a single cache line~\cite{lookup-mm}. The multiple-choice approach has also proven useful in online (dynamic) assignment of tasks to servers (disk servers, network servers etc). By using multiple-choice one would get much better load balance across the servers as compared to the single-choice approach.


In many practical problems, the underlying data can be \textit{multidimensional}. This is especially true for parallel data mining and machine learning problems, where the input data has many dimensions such as text search where the distinct words in the document set can be considered as the dimensions and the total number of dimensions equals the size of the vocabulary that could potentially run into millions of words. Because the collection of pages to be indexed is so large, it has to be split among $n$ servers.  When a user makes a query to a front-end machine, the query is sent to all $n$ servers; results are returned to the front-end machine for merging and presentation.  Hence, the time to serve the query is determined by the slowest of the servers, the critical process. The time for each server to process a one-word query is roughly proportional to the number of documents at that server containing the word of interest. Thus, to achieve better efficiency, it is necessary to efficiently split the documents among servers in such way that the number of documents containing a given word is roughly equal.

Further, many application domains such as Telecommunication, Finance and others also involve huge number of dimensions such as genres and sub genres of songs and videos for collaborative filtering~\footnote{http://en.wikipedia.org/wiki/Collaborative\_filtering} type correlational analysis between the users. Here, one would like to predict what type of item (song or video) one user could prefer based on his inferred relationship with other similar users. Due to \textit{massive size} of the multidimensional data in such distributed data mining and machine learning problems, one needs to devise \textit{online load balancing} algorithms. 
While, dimensionality reduction techniques can reduce the total number of dimensions to work on, even then, one needs to handle data with large number of dimensions. Further, this data is highly sparse, i.e. number of filled entries in the \textit{(user * item)} matrix is a small fraction of total possible entries in the matrix. Thus, distributed data mining and machine learning (for example in cloud computing environments), suffer from severe scalability and parallel efficiency issues due to huge load imbalance across the machines in the compute cluster (cloud). Hence, there is a strong need to address load balancing for multidimensional datasets. 

\subsection{Probability Distribution for Bin Selection}
\label{subsec:prob}

The $d$-choice scheme can be characterized by a probability vector $p = (p_1, p_2, p_3,...,p_n)$, where $p_i$ denotes the probability a ball falls in the $i^{th}$ most loaded bin. Here, the bins are ordered from the most loaded to the least loaded (ties are broken arbitrarily). Then, $p_1$ denotes the probability that the most loaded bin receives the current ball, $p_2$ denotes the probability that the second bin (in the order) receives the ball and so on. In general, in the $d$-choice scheme, $p_i = (\frac{i}{n})^d - (\frac{i-1}{n})^{d}$. For $d = 1$, $\forall i: p_i = 1/n$, while for $d > 1$, $p_i > p_j$ for $i > j$. Thus, for $d > 1$, the process has bias towards the lighter bins. This biasing leads to an overall lower gap for $d > 1$ choice as compared to single choice ($d = 1$) process.
\par In this paper, we consider the multidimensional variant of the balls and bins problem. One multidimensional variant, proposed by~\cite{md-mm} is as follows: Consider throwing $m$ balls into $n$ bins, where each ball is a uniform $D$-dimensional $0$-$1$ vector of weight $f$. Here, each ball has exactly $f$ non-zero entries chosen uniformly among all $\binom{D}{f}$ possibilities (Fig.~\ref{fig:md-balls-bins} (in the Appendix~\ref{app:fig})). The average load in each dimension for each bin is given as $mf/nD$. Let $l(a,b)$ be the load in the dimension $a$ for the $b^{th}$ bin. The gap in a dimension (across the bins) is given by $gap(a) = \max_b l(a,b) - avg(a)$, where $avg(a)$ is the average load in the dimension $a$. The maximum gap across all the dimensions, $\max_a gap(a)$, then determines the load balance across all the bins and the dimensions. 
Thus, for the multidimensional balanced allocation problem, the objective is to minimize the maximum gap (across any dimension). We refer to the multidimensional ball as \textit{md-ball} and the multidimensional bin as \textit{md-bin}.

In another variation of multidimensional balanced allocation the constraint of uniform distribution for populated entries is removed. Here again, each ball is a $D$ dimensional $0$-$1$ vector and each ball has exactly $f$ populated dimensions, but these populated dimensions can have an arbitrary distribution. In the third variation that is most general of the three, the number of populated dimensions, $f$, may be different across the balls, where $f$ then is a random variable with an appropriate distribution.

Mitzenmacher et.al. in ~\cite{md-mm} addressed both the single choice and $d$-choice paradigm for multidimensional balls and bins under the assumption that balls are uniform $D$-dimensional $(0,1)$ vectors, where each ball has exactly $f$ populated dimensions. They show that the gap for multidimensional balls and bins, using the two-choice process, is bounded by $O(\log\log(nD))$. However, this result is not tight and assumes that $f$ is $polylog(n)$. Due to arbitrary number of dimensions and the resulting discrepancy across the dimensions along with the general case of $m >> n$, the balanced allocations for multidimensional balls and bins is a challenging problem. In this paper, we compute bounds on the gap for the symmetric $d$-choice process for multidimensional balls and bins for both sequential and parallel scenarios.

\subsection{Summary of Key Results \& Techniques}
\label{subsec:key}
\par We present detailed analysis for the online sequential and parallel multidimensional balls and bins using the symmetric $d$-choice process and show that for $n$ bins and $m = O(n)$ balls, the gap (assuming that exactly $f$ populated dimensions are uniformly distributed over $D$ per ball) achieved is $O(\ln\ln(n))$. We establish the first ever known bound for $d$-choice process and also show that this bound is tight (within $D/f$ factor) by providing the lower bound as well. This improves upon the best prior bound of $O(\log\log(nD))$~\cite{md-mm}. For the general case of $m >> n$, the upper bound on the gap is $O((\frac{mf}{nD})^{(1/2+\zeta)}\ln\ln(n))$ w.h.p., while the expected gap is still $O(\ln\ln(n))$. For non-uniform distribution with fixed $f$ and for variable $f$ with binomial distribution, we show that the expected gap is still independent of $m$. 
\par In order to arrive at these results, a novel generic potential function based approach along with \textit{sum load} across the dimensions per bin is used. This is much more challenging than the analysis presented by~\cite{kunal-beta} for $(1+\beta)$-choice process, as we obtain a much tighter bound of $O(\ln\ln(n))$ (as compared to $O(\frac{\log(n)}{\beta})$ in~\cite{kunal-beta}). This requires a novel potential function as well as a much tighter analysis in each lemma to ensure that the expected value of the potential function is less than $O(\ln(n))$ at all time $t$ and satisfies the super-martingale property. 
\par For parallel multidimensional balls and bins with multiple rounds using the $d$ choice process, we show the upper bound on the gap as $O(\log\log(n))$ for $m = O(n)$ balls; and extend this bound on the gap to the general case of $m >> n$. This is tighter than the $O(\log\log(nD))$ that can be obtained using the analysis similar to~\cite{md-mm}.
\par For the weighted and heavy case ($m >> n$) using symmetric multiple choice sequential process for scalar balls, we prove an upper bound of $O(W^*\log(n))$ (where $W^*$ is the expected weight of the distribution), which improves upon the best prior bound of $O(n^c)$~\cite{kunal-weighted}. Our analysis technique also provides an alternate proof for the symmetric $d$-choice process with scalar unweighted $m >> n$ balls into $n$ bins, that is simpler and elegant as compared to~\cite{petra-heavy-case}.  
\par Further, we present the analysis for bounds on the gap for $(1+\beta)$ choice multidimensional process and prove that for $m = O(n)$ the upper bound on the gap is $O(\frac{\log(n)}{\beta})$ w.h.p. for uniform distribution of $f$ dimensions over the $D$ dimensions. For non-uniform distribution with fixed $f$ and also for variable $f$ the expected gap is $O(\frac{\log(n)}{\beta})$ which is independent of $m$. Table~\ref{table-comp} summarizes the comparison between our upper bounds and the best known prior bounds, with key results highlighted.
\begin{table}[h]
\begin{center}
\begin{tabular}{|l|l|r|l|}
\hline
\textbf{Process: $d$-choice} & \textbf{Best Prior Bound} & \textbf{Our Bound}\\
\hline
Multidim, Fixed-f,$m=O(n)$ & $O(\log\log(nD))$~\cite{md-mm} & $O(\ln\ln(n))$\\
\hline
\textbf{Multidim, Fixed-f},$m>>n$ & None & $O(\ln\ln(n))$ (expected)\\
\hline
\textbf{Multidim, Var-f} & None & $O(\log(n))$ (expected)\\
\hline
\textbf{Weighted Scalar}, $m>>n$ & $O(n^c)$ (short memory via coupling~\cite{kunal-weighted}) & $O(\log(n))$\\
\hline
Unweighted Scalar, $m>>n$ & $O(\ln\ln(n))$ (using layered induction & $O(\ln\ln(n))$\\
 & and short memory~\cite{petra-heavy-case}) & (using potential function)\\
\hline
Parallel Multidim, $m=O(n)$ & $O(\log\log(nD))$ (adaptation of~\cite{md-mm}) & $O(\ln\ln(n))$\\
\hline
\textbf{Parallel Multidim}, $m>>n$ & None & $O(\ln\ln(n))$ (expected)\\
\hline
Parallel Scalar & $O(\ln\ln(n))$ (for $m=O(n)$~\cite{adler95}) & $O(\ln\ln(n))$ (for $m>>n$)\\
\hline
\hline
\textbf{Process: $(1+\beta)$-choice} & \textbf{Best Prior Bound} & \textbf{Our Bound}\\
\hline
Multidim, Fixed-f,$m=O(n)$ & None & $O(\frac{\log(n)}{\beta})$\\
\hline
\textbf{Multidim, Fixed-f},$m>>n$ & None & $O(\frac{\log(n)}{\beta})$ (expected)\\
\hline
\textbf{Multidim, Var-f} & None & $O(\frac{\log(n)}{\beta})$ (expected)\\
\hline
\end{tabular}
\caption{Upper Bound Comparison for $d$-choice and $(1+\beta)$ Process}
\label{table-comp}
\end{center}
\end{table}
\section{Related Work}
\label{sec:related}

Balls into bins is a well studied abstraction for load balancing problems. Numerous results are known for sequential (single dimensional) allocation case when $m$ balls are thrown into $n$ bins; such as: for multiple choice paradigm the expected gap between the heaviest bin and the average load is $O(\frac{\log\log(n)}{\log(d)})$~\cite{petra-heavy-case}, $(1+\beta)$ choice paradigm with $O(\frac{\log(n)}{\beta})$ gap~\cite{kunal-beta} as well as for single choice paradigm having $O(\sqrt{\frac{m\log(n)}{n}})$ gap~\cite{mm-thesis}.~\cite{bal-alloc-azar} showed that the bound of $O(\frac{\log\log(n)}{\log(d)})$ for the symmetric $d$ choice process is stochastically optimal, i.e. any other greedy approach using the placement information of the previous balls to place the current ball majorizes to their approach. However, if the alternatives are drawn from different groups then different rules for tie breaking result in different allocations.~\cite{vocking-tree} presents such an \textit{asymmetric} strategy and using witness tree based analysis proves that this leads to improvement in load balance to $O(\frac{\log \log(n)}{d\log(\phi_d)})$ w.h.p. where, $\phi_2$ is the golden ratio and $\phi_d$ is a simple generalization.
\par The multiple choice and in particular the two-choice paradigm and balls-and-bins models have several interesting applications.
In particular, the two-choice paradigm can be used to reduce the maximum search time in a hash table. Instead of using a single perfectly random hash function as in a typical hash table implementation (with maximum chain length as $O(\ln(n))$), if we use two perfectly random hash functions, then the length of the longest chain reduces to $O(\ln\ln(n))$. In the latter case, when inserting a key, we apply both hash functions to determine the two possible table entries where the key can be inserted. Then, of the two possible entries, we add the key to the shorter of the two chains. To search for an element, we have to search through the chains at the two entries given by both hash functions. If $n$ keys are sequentially inserted into the table, the length of the longest chain is $O(\log\log n)$ with high probability, implying that the maximum time needed to search the hash table is $O(\log\log n)$ with high probability. 
Further, the two-choice scheme can be advantageous in situations for example, when one hopes to fit one full chain in a single cache line~\cite{lookup-mm}. The two-choice approach has also proven useful in online (dynamic) assignment of tasks to servers (disk servers, network servers etc). By using two-choice one would get much better load balance across the servers as compared to the single-choice approach. If we use $(1+\beta)$ choice then, we would get around $\log(n)$ gap (as compared to $O(\log\log(n)$ gap for the two-choice) but the communication cost to query the load of the servers will be lesser by $(1 + \beta)/2$ factor as compared to the two-choice approach.

Cole et al.~\cite{routing-cole} show that the two-choice paradigm can be applied effectively in a different context, namely, that of routing virtual circuits in interconnection networks with low congestion. They show how to incorporate the two-choice approach to a well-studied paradigm due to Valiant for routing virtual circuits to achieve significantly lower congestion.

Kunal et.al.~\cite{kunal-beta} present that for online sequential $(1+\beta)$ choice process with $n$ bins and $m >> n$ balls, a tight gap of $O(\frac{\log(n)}{\beta})$ can be obtained. They use a potential function based technique and further use a majorization argument to generalize their result. We present a novel generic potential function based approach with \textit{sum load} function across all dimensions of a bin for multidimensional balls and bins and obtain tight bounds on the gap for the $d$-choice process for both sequential and parallel scenarios. Our analysis is much more challenging than~\cite{kunal-beta} since we prove a tighter bound that requires a much tighter analysis in each lemma to prove that the expected value of potential function is less than $O(\ln(n))$ at all time $t$. Further, the lower and upper bounds for the $(1+\beta)$ choice process with multidimensional balls and bins have also been provided in this paper.

\par Mitzenmacher et.al. in~\cite{md-mm} address both the single choice and $d$-choice paradigm for multidimensional balls and bins under the assumption that balls are uniform $D$-dimensional $(0,1)$ vectors, where each ball has exactly $f$ populated dimensions. They show that the gap for multidimensional balls and bins, using the two-choice process, is bounded by $O(\log\log(nD))$. We provide better bound on the gap ($O(\log\log(n))$) and also provide the bound for the general case of $m >> n$. Further, while~\cite{md-mm} assumes that $f$ is $polylog(n)$ we don't make any such assumptions. For the multiple round multidimensional parallel balls and bins process where in each round, each bin accepts at max only a single ball, one can use layered induction based proof~\cite{md-mm} to get a similar bound on the gap as $O(\log\log(nD))$. Using our novel potential function based analysis we show a tighter upper bound of $O(\log\log(n))$. The bound for the general case of $m >> n$ is also provided. 
\par Berenbrink et.al.~\cite{petra-heavy-case} prove an upper bound of $O(\log\log(n))$ for the general case of $m >> n$ balls and $n$ bins using a sophisticated analysis involving two main steps. In the first step, they show that when the number of balls is polynomially bounded by the number of bins the gap can be bounded by $O(\ln\ln(n))$, using the concept of layered induction and some additional tricks. In particular, they consider the entire distribution of the bins in the analysis (while in typical $m = O(n)$ case the bins with load smaller than the average could be ignored). In the second step, they extend this result to general $m >> n$ case, by showing that the multiple-choice processes are fundamentally different from the classical single-choice process in that they have \textit{short memory}. This property states that given some initial configuration with gap $\Delta$, after adding $poly(n)$ more balls the initial configuration is \textit{forgotten}. The proof of the short memory property is done by analyzing the mixing time of the underlying Markov chain describing the load distribution of the bins. The study of the mixing time is via a new variant of the coupling method (called \textit{neighboring coupling}). We prove the same result on the gap ($O(\log\log(n))$) for the symmetric $d$ choice process with $m >> n$ but by using a much simpler and elegant potential function based approach.
\par Kunal et.al.~\cite{kunal-weighted} prove that for weighted balls (weight distribution with finite fourth moment) and $m >> n$, the expected gap is independent of the number of balls and is less than $n^c$, where $c$ depends on the weight distribution. They first prove the weak gap theorem which says that w.h.p $Gap(t) < t^{2/3}$. Since in the weighted case the $d$ choice process is not dominated by the one choice process, they prove the weak gap theorem via a potential function argument. Then, the \textit{short memory theorem} is proved.  While in~\cite{petra-heavy-case} the short memory theorem is proven via coupling,~\cite{kunal-weighted} uses similar coupling arguments but defines a different distance function and use a sophisticated argument to show that the coupling converges. ~\cite{kunal-weighted} also presents a reduction from the real-weighted case to the integer-weighted case. We present the results for weighted case (with integer and real weights and weight distribution with finite second moment) using an elegant and much simpler potential function based argument and show that the gap for arbitrary $m >> n$ is bounded by $O(W^*\log(n))$, where $W^*$ is the expected weight of the distribution.
Adler et.al.~\cite{adler95} consider parallel balls and bins with multiple rounds. They present analysis for $O(\frac{\log\log(n)}{\log(d)})$ bound on the gap (for $m = O(n)$) using $O(\frac{\log\log(n)}{\log(d)} + O(d))$ rounds of communication. We generalize this result to the case of parallel multidimensional balls and bins and arbitrary $m >> n$ balls with similar bound on the gap.
\newcommand{\Xten}{{\sf X10}}
\newcommand{\gasnet}{{\sf GASNet}}

\section{Symmetric $d$-choice Process}
\par In this section, we present various results on the bounds on the gap using the symmetric $d$-choice process including unweighted sequential and parallel multidimensional balls and bins and the sequential weighted scalar case.
\subsection{Markov Chain Specification}\label{sec:markov-chain}
\label{sec:markov}

As mentioned earlier, a balls-and-bins process can be characterized by a probability distribution vector $(p_1, p_2, p_3,...p_n)$, where, $p_i$ is the probability a ball is placed in the $i^{th}$ most loaded multidimensional bin. Let $x_i^d(t)$ be the random variable, that denotes the \textit{weight in dimension $d$ for bin $i$} and is equal to the load of the $d^{th}$ dimension of the $i^{th}$ bin minus the average load in dimension $d$. So, $\sum_{i=1}^{n} x_i^d(t) = 0, \forall d \in [1..D]$. Each md-ball has $f$ populated dimensions, where $f$ could be constant across the balls or a random variable with a given distribution. Let, $s_i(t)$ denote the sum of the loads (minus corresponding dimension averages) across all $D$ dimensions for the bin $i$ at time $t$, expressed as $s_i(t) = \sum_{d=1}^D x_i^d$. It is assumed that bins are sorted by $s_i(t)$. So, $s_i \ge s_{i+1} \forall i \in [1..n-1]$. The process defines a Markov chain over the matrices, $x(t)$ as follows:

\begin{itemize}
\item Sample $j \in_p [n]$.
\item Set $r_i = s_i(t) + f(1 - 1/n)$, for $i = j$. Since, an md-ball has $f$ non-zero entries , so each of these $f$ dimensions in the bin, $i$, will be incremented by $1 - 1/n$.
\item Set $r_i = s_i(t) - f/n$, for $i \ne j$. Since, an md-ball has $f$ non-zero entries, so the each of the corresponding $f$ dimensions in the bin, $i$, will be decremented by $1/n$. This ensures that for each dimension the sum across all the bins is $0$.
\item Obtain $s(t+1)$ by sorting $r(t)$.
\end{itemize}

Fig.~\ref{fig:md-balls-bins} (in the Appendix~\ref{app:fig}) illustrates a multidimensional balls and bins scenario. The bounds on the gap will be proven for a family of probability distribution vectors $p$. As mentioned earlier, the md-bins are sorted based on their total dimensional load, i.e. sum of the weights across all dimensions for each bin ($s_i$ for bin $i$). 
%
%
  


In the remaining analysis, we assume that when an md-ball arrives, then the selection of the bins is based on $s_i$, i.e. total sum of weights across all dimensions for the randomly selected bins (Fig.~\ref{fig:md-balls-bins} in Appendix~\ref{app:fig}). In particular, for the $d=2$ choice process, when $d$ bins are randomly selected, the md-ball (with $f$ non-zero entries) is assigned to the md-bin with the lowest $s_i$. Using this selection mechanism, we prove the upper and lower bounds on the gap obtained for the $d$ choice process. Note that, this is a different allocation mechanism than that considered in~\cite{md-mm} where the \textit{max} criteria is used over the restricted set of $f$ populated dimensions in the current md-ball. Further, we prove upper bound for the case when $m >> n$, while~\cite{md-mm} considers $m = O(n)$ case. The proofs below hold for even the case when $d > 2$, though we consider the case for $d=2$ for sake of clarity.

\subsection{Upper Bound On Gap for Unweighted Case}\label{sec:unweighted}
\par Let there be some constants, $\epsilon > 0$, $\theta > 1$, $\gamma_1$ and $\gamma_2$, $\gamma_3$, $\gamma_4$, where, $0 < \gamma_1 < \gamma_3 < 1/2 < \gamma_4 < \gamma_2 < 1$ and $\theta\gamma_1 = 1$; $\gamma_1 + \gamma_2 = 1$, and $\gamma_3 + \gamma_4 = 1$. Since we consider the $d$-choice process, the probability of selecting the bins has strong bias in favor of the lightly loaded bins. For $d=2$, this results in the following:
\begin{equation}\nonumber
\begin{aligned}
  p_{(n\gamma_3)} & \le \frac{2n\gamma_3 - 1}{n^2}\\
  p_{(n\gamma_4)} & \ge \frac{2n\gamma_4 - 1}{n^2}
\end{aligned}
\end{equation}
This implies that:$\sum_{i \ge (n\gamma_2)} p_i \ge (1 - \gamma_2^2)$ and $\sum_{i \le (n\gamma_1)} p_i \le \gamma_1^2$. We assume that $\epsilon \le 1/4$. Further, let $\alpha = \epsilon/2f$. In the analysis below, we assume each md-ball has exactly $f$ populated dimensions ($f$ constant, fixed-f case). This is similar to the \textit{unweighted case} with scalar balls.
\par The md-bins can be arranged in a partial order, according to their $s_i$ values. Define an \textit{equi-load} group (say $p$) as a set of bins with the same $s_i$ value. Define the potential of an equi-load group ($p$) as $\Phi(G_p) = \sum_{k=0}^{|G_p|-1} \frac{e^{\alpha.s_k}}{p_0 + k}$, where $p_0$ is the beginning index for the group,  $|G_p|$ is the size of the $p^{th}$ group and $e^{\alpha.s_k} = e^{\alpha.s_{k+1}}, \forall k \in [p_0..(p_0 + |G_p|-2)]$. The $n$ bins are partitioned into disjoint equi-load groups (total $|G|$ groups), i.e. each bin is assigned to only a single equi-load group. The group structure defined here helps in characterizing the change in index of the bin that gets the ball (after sorting).
\par Similarly, define another potential function for an equi-load group as, $\Psi(G_p) = \sum_{k=0}^{|G_p|-1} \frac{e^{-\alpha.s_k}}{p_0 + k}$. Now, define the following potential functions over all the groups:
\begin{equation}
\begin{aligned}
     \Phi(t) & = \Phi(s(t)) = \sum_{p=1}^{|G|} \Phi(G_p)\\
     \Psi(t) & = \Psi(s(t)) = \sum_{p=1}^{|G|} \Psi(G_p)\\     
     \Gamma(t) & = \Gamma(s(t)) = \Phi(t) + \Psi(t)\\
               & =  \sum_{i=1}^{n} [ \frac{e^{\alpha.s_i}}{i} + \frac{e^{-\alpha.s_i}}{i} ]\\   
\end{aligned}
\end{equation}
  where, $s_i(t) = \sum_{d=1}^{D} x_i^d(t)$

In the beginning, each dimension for each bin has $0$ weight, thus $s_i = 0, \forall  i$ and hence, $\Gamma(0) = 2\ln(n)$. We show that if $\Gamma(x(t)) \ge a\ln(n)$ for some $a > 0$, then $\mathbb{E}[\Gamma(t+1) | x(t)] \le (1 - \frac{\epsilon}{8n(1 + \epsilon\gamma_1)}) * \Gamma(t)$. This helps in demonstrating that for every given $t$, $\mathbb{E}[\Gamma(t)] \in O(\ln(n))$. This implies that the maximum gap is $O(\ln\ln(n))$ w.h.p.

First, consider the change in $\Phi(t)$ (also refers to $\Phi$ by default) and $\Psi(t)$ (also refers to $\Psi$ by default) separately when a ball is thrown with the given probability distribution. 
\begin{lemma}\label{lemma-phi-base}
  When an md-ball is thrown into an md-bin, the following inequality holds:
\begin{equation}
   \EE[ \Phi(t+1) - \Phi(t) | x(t) ] \le \sum_{i=1}^{n} [ p_i * (\alpha.f + (\alpha.f)^2)  - \alpha.f/n ]. e^{\alpha.s_i} 
\end{equation}
\end{lemma}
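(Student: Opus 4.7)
The plan is to condition on the rank $j$ of the bin that receives the ball, write the exact deterministic update to the $s_i$ values, apply Taylor bounds to the resulting exponentials, and handle the re-sorting of bins via a permutation-invariance argument.

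First I would observe that if the bin at rank $j$ is selected, which happens with probability $p_j$, the state evolves as $s_j \mapsto s_j + f(1-1/n)$ and $s_i \mapsto s_i - f/n$ for every $i \ne j$. Equivalently, every $e^{\alpha s_i}$ picks up a uniform multiplicative factor $e^{-\alpha f/n}$ (the average-shift effect on the unselected coordinates), and the selected bin's factor gets an additional bump of $e^{\alpha f}$ (from its own $f$ incremented coordinates). Taking expectation over $j$ with weights $p_j$ and temporarily ignoring the re-sorting of bins yields the identity
\[
  \EE[\Phi(t+1) - \Phi(t) \mid x(t)] \;=\; (e^{-\alpha f/n} - 1)\,\Phi(t) \;+\; e^{-\alpha f/n}\,(e^{\alpha f} - 1)\sum_{j=1}^{n} p_j\,\frac{e^{\alpha s_j}}{j}.
\]

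Next I would apply the Taylor estimates $e^{-\alpha f/n} - 1 \le -\alpha f/n + (\alpha f/n)^2$ and $e^{\alpha f} - 1 \le \alpha f + (\alpha f)^2$, both valid because $\alpha f = \epsilon/2 \le 1/8 < 1$, together with $e^{-\alpha f/n} \le 1$. Using the cruder bound $1/i \le 1$ to align the two sums, the expression collapses to $\sum_i [p_i(\alpha f + (\alpha f)^2) - \alpha f/n]\, e^{\alpha s_i}$, matching the stated form.

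The main obstacle is justifying that the identity above is in fact an upper bound once one accounts for the re-sorting. When the selected bin's weight jumps by $f(1-1/n)$, that bin can move from rank $j$ to some rank $j' \le j$ while bins at ranks $j', \ldots, j-1$ each shift down by one, so the denominators in $\Phi = \sum_i e^{\alpha s_i}/i$ get shuffled in a combinatorially messy way. I plan to sidestep this by comparing $\Phi(t+1)$ to the permutation-invariant sum $\sum_i e^{\alpha s_i(t+1)}$ through $1/i \le 1$: this loses tightness in the denominators but removes the ordering entirely, since the comparison quantity depends only on the multiset of updated $s$-values. With this reduction in place, the Taylor bookkeeping outlined above is routine and delivers the claimed inequality.
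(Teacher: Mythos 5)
Your proposal correctly identifies the crux --- the re-sorting of bins is what makes this lemma nontrivial for the weighted potential $\Phi(t)=\sum_{i} e^{\alpha s_i}/i$ --- but the device you use to sidestep it does not work. Bounding $\Phi(t+1)\le\sum_i e^{\alpha s_i(t+1)}$ via $1/i\le 1$ is valid, but you drop the denominators only on the $(t+1)$ side while $\Phi(t)$ keeps them; the difference then contains the uncontrolled positive term $\sum_i(1-1/i)\,e^{\alpha s_i(t)}$ (up to the factor $e^{-\alpha f/n}$), which is of order $\sum_i e^{\alpha s_i}$ rather than of order $\alpha f/n$ times anything. Concretely, at $t=0$ (all $s_i=0$) your reduction gives an upper bound of roughly $n-\ln(n)$, while the claimed right-hand side equals $(\alpha f)^2$; so the ``routine Taylor bookkeeping'' cannot close the argument. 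The $1/i$ weights are precisely what make $\Gamma(0)=2\ln(n)$ and ultimately the $\ln\ln(n)$ gap, and they cannot be discarded on one side of the difference. Note also that ignoring the re-sort is an \emph{under}estimate, not an overestimate: since the bins are sorted so that the largest $s_i$ receive the largest coefficients $1/i$, the rearrangement inequality shows the sorted pairing maximizes $\sum_i e^{\alpha r_{\sigma(i)}}/i$ over permutations $\sigma$, so your ``identity'' is in fact a lower bound on $\EE[\Phi(t+1)-\Phi(t)\mid x(t)]$ and the re-sorting correction has exactly the sign that must be controlled, not waved away. (A smaller issue in the same direction: even with re-sorting ignored, the term $(e^{-\alpha f/n}-1)\Phi\approx -\tfrac{\alpha f}{n}\sum_i e^{\alpha s_i}/i$ cannot be upper-bounded by $-\tfrac{\alpha f}{n}\sum_i e^{\alpha s_i}$; the form of the bound the paper actually derives and uses in the later lemmas keeps $\Phi_i=e^{\alpha s_i}/i$ on the right-hand side.)

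The paper's proof keeps the denominators throughout and handles the reindexing with the equi-load group structure: when bin $i$ receives the ball it moves to the head index $p_0$ of its former equi-load group (or starts a new group), and within a group all exponentials are equal, so the reshuffle of the remaining members costs nothing; the displaced bin is charged at index $p_0\le i$ and every other bin retains its rank, after which the Taylor estimates go through with $\Phi_i$ intact. Some mechanism of this kind --- an explicit account of where the incremented bin lands in the new order and of how the $1/i$ weights travel with it --- is exactly what is missing from your argument. Your computation would be fine for the flat potential $\sum_i e^{\alpha s_i}$ used in the $(1+\beta)$ section, where re-sorting genuinely is a permutation invariance, but not for the weighted potential of this lemma.
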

\begin{proof}
  Let $\Delta_i$ be the expected change in $\Phi$ if the ball is put in bin, $i$. So, $r_i(t+1) = s_i + f(1-1/n)$; and for $j \ne i$, $r_j(t+1) = s_j(t) - f/n$. The new values i.e. $s(t+1)$ are obtained by sorting $r(t+1)$ and $\Phi(s) = \Phi(r)$. When, an md- ball is committed to bin $i$, then it moves to the end of the previous equi-load group or it creates a new equi-load group and hence can be located at index $p_0$ (beginning location of its prior group) in the new sorted order of the bins. Thus, the expected contribution of bin, $i$, to $\Delta_i$ is given as follows:
\begin{equation}\nonumber
\begin{aligned}
   \EE[ \frac{e^{\alpha.(s_i + f(1-1/n))}}{p_0} ] - \frac{e^{\alpha.s_i}}{p_0}
   & = \frac{e^{\alpha.s_i}}{p_0} [ e^{\alpha.f(1-1/n)} - 1 ]
\end{aligned}
\end{equation}
 Similarly, the expected contribution of bin, $j$ ($j \ne i$) to $\Delta_i$ is given as:
\begin{equation}\nonumber
\begin{aligned}
   \EE[ \frac{e^{\alpha.(s_j - f/n)}}{j} ] - \frac{e^{\alpha.s_j}}{j}
   & = \frac{e^{\alpha.s_j}}{j} [ e^{-\alpha.f/n} - 1 ]
\end{aligned}
\end{equation}
 Therefore, $\Delta_i$ is given as follows:
\begin{equation}\nonumber
\begin{aligned}
   \Delta_i & = \Phi_i [ e^{\alpha.f(1-1/n)} - 1] \, + \,
              \sum_{j \ne i} \Phi_j(e^{-\alpha.f/n} - 1)\\
   & = \Phi_ie^{-\alpha f/n} (e^{\alpha.f} - 1) + (e^{-\alpha.f/n} - 1).\Phi
\end{aligned}
\end{equation}
  Thus, we get the overall expected change in $\Phi$ as follows:
\begin{equation}
\begin{aligned}
   \EE[ \Phi(t+1) - \Phi(t) | x(t) ] & = \sum_{i=1}^{n} p_i * \Delta_i\\
   & = \sum_{i=1}^{n} p_i * [ \Phi_ie^{-\alpha f/n} (e^{\alpha.f} - 1) + (e^{-\alpha.f/n} - 1).\Phi ]\\
   & = \sum_{i=1}^{n} p_i * e^{-\alpha f/n}\Phi_i (e^{\alpha.f} - 1) +                        (e^{-\alpha.f/n} - 1).p_i.\Phi\\
   & = \sum_{i=1}^{n} [ p_i * e^{-\alpha.f/n} (e^{\alpha.f} - 1) + (e^{-\alpha.f/n} - 1) ]\Phi_i
\end{aligned}
\end{equation}
 Now, $e^{(-\alpha.f/n)} * (e^{\alpha.f} - 1)$ can be approximated as follows:
\begin{equation}\nonumber
\begin{aligned}
   e^{(-\alpha.f/n)}.(e^{\alpha.f} - 1) & \le
   (1 - \alpha.f/n + (\alpha.f/n)^2) * (1 + \alpha.f + (\alpha.f)^2 - 1)\\
   & \sim \alpha.f + (\alpha.f)^2 + O((\alpha.f)^2/n)\\   
   e^{(-\alpha.f/n)}.(e^{\alpha.f} - 1) & \lessapprox (\alpha.f + (\alpha.f)^2)
\end{aligned}
\end{equation}
Above, since, $(\alpha.f)^2/n$ is very small for large $n$, we have ignored the small terms. Similarly, $(e^{-\alpha.f/n} - 1) \lessapprox -\alpha.f/n$
Hence, the expected change in $\Phi$ can be given by:
\begin{equation}\label{eq-phi-base}
\begin{aligned}
   \EE[ \Phi(t+1) - \Phi(t) | x(t) ]
   \le \sum_{i=1}^{n} [ p_i * (\alpha.f + (\alpha.f)^2)  - \alpha.f/n ]\Phi_i
\end{aligned}
\end{equation}
%
\end{proof}
 Simplifying further and observing that $\Phi_i$ decreases and $p_i$ increases with increasing $i$ from $1$ to $n$, one gets the following Corollary.
\begin{corollary}\label{cor-phi-gen}
   $\EE[ \Phi(t+1) - \Phi(t) | x(t) ] \le (\alpha.f)^2 * \Phi / n$ 
\end{corollary}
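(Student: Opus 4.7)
The plan is to start from the bound produced by Lemma~\ref{lemma-phi-base}, namely
\[
   \EE[\Phi(t+1) - \Phi(t)\mid x(t)] \le \sum_{i=1}^{n} \left[ p_i\,(\alpha f + (\alpha f)^2) \;-\; \frac{\alpha f}{n} \right]\Phi_i,
\]
and split the right-hand side into two sums: a positive contribution $(\alpha f + (\alpha f)^2)\sum_i p_i \Phi_i$ and a negative contribution $-(\alpha f/n)\sum_i \Phi_i = -\alpha f\,\Phi/n$. The goal is to show that the positive sum is at most $(\alpha f + (\alpha f)^2)\Phi/n$, because then the two terms telescope: the $\alpha f\,\Phi/n$ piece cancels exactly with the negative contribution, leaving only the desired $(\alpha f)^2\Phi/n$.

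The core observation is a monotonicity/rearrangement argument. Since bins are sorted so that $s_i \ge s_{i+1}$, the sequence $\Phi_i = e^{\alpha s_i}/i$ is the product of two non-increasing positive sequences ($e^{\alpha s_i}$ and $1/i$), hence is itself non-increasing in $i$. Conversely, because the $d$-choice process (for $d \ge 2$) is biased toward the less loaded bins, the probability sequence $p_i = (i/n)^d - ((i-1)/n)^d$ is non-decreasing in $i$, as already noted in Section~\ref{subsec:prob}. With $p_i$ and $\Phi_i$ oppositely ordered, Chebyshev's sum inequality yields
\[
   n\sum_{i=1}^{n} p_i \Phi_i \;\le\; \Bigl(\sum_{i=1}^{n} p_i\Bigr)\Bigl(\sum_{i=1}^{n} \Phi_i\Bigr) \;=\; 1\cdot \Phi,
\]
so $\sum_i p_i \Phi_i \le \Phi/n$.

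Substituting this back into the bound from Lemma~\ref{lemma-phi-base} gives
\[
   \EE[\Phi(t+1) - \Phi(t)\mid x(t)] \;\le\; (\alpha f + (\alpha f)^2)\,\frac{\Phi}{n} \;-\; \alpha f\,\frac{\Phi}{n} \;=\; \frac{(\alpha f)^2\,\Phi}{n},
\]
which is exactly the claim. The only step that requires any real care is the application of Chebyshev's sum inequality: one must verify unambiguously that $\Phi_i$ is non-increasing in $i$ (using the sorted-by-$s_i$ convention established in Section~\ref{sec:markov}) and that $p_i$ is non-decreasing in $i$ (using the $d$-choice expression $p_i = (i/n)^d - ((i-1)/n)^d$, which is convex-difference and monotone for $d \ge 2$). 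Everything else is bookkeeping; no new potential-function inequalities are introduced, and no tail or concentration bounds are needed.
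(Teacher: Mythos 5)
Your proposal is correct and follows essentially the same route as the paper: both reduce the claim to the bound $\sum_i p_i \Phi_i \le \Phi/n$, exploiting that $p_i$ is non-decreasing while $\Phi_i$ is non-increasing, and then cancel the $\alpha f\,\Phi/n$ terms. Your explicit appeal to Chebyshev's sum inequality simply makes rigorous the paper's informal statement that the right-hand side is maximized when $p_i = 1/n$ for all $i$.
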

\begin{proof}
   Since, $p_i$ are increasing and $\Phi_i$ are decreasing, the maximum value taken by RHS of equation~\eqref{eq-phi-base} will be when $p_i = 1/n$ for all $i \in [1..n]$. Simplifying, we get the result.
%
\end{proof}

  Similarly, the change in $\Psi$ can be derived. For detailed proof refer to Appendix~\ref{app:1_proof}.
\begin{lemma}\label{lemma-psi-base}
  When an md-ball is thrown into an md-bin, the following inequality holds:
\begin{equation}\label{eq-psi-base}
   \EE[ \Psi(t+1) - \Psi(t) | x(t) ] \le
   \sum_{i=1}^{n} [ p_i * (-\alpha.f + (\alpha.f)^2)  + \alpha.f/n ]\Psi_i 
\end{equation}
\end{lemma}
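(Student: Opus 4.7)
The plan is to mirror the proof of Lemma~\ref{lemma-phi-base} line by line, substituting $e^{-\alpha s_i}$ for $e^{\alpha s_i}$ throughout and tracking the resulting sign flips inside the exponentials. The roadmap is: (i) compute the conditional change $\Delta_i$ in $\Psi$ when the ball lands in bin $i$; (ii) average using the probabilities $p_i$; (iii) Taylor-approximate the exponentials in $\alpha f$ and $\alpha f/n$ to extract the stated form.

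For step (i), the ball increases bin $i$'s sum-load by $f(1-1/n)$ and decreases every other bin's by $f/n$. Using the same equi-load-group bookkeeping as in Lemma~\ref{lemma-phi-base} (bin $i$ is relabeled to the start index $p_0$ of its prior group), bin $i$'s contribution to $\Psi$ changes by $\frac{e^{-\alpha s_i}}{p_0}\bigl(e^{-\alpha f(1-1/n)} - 1\bigr)$, while every other bin $j \ne i$ contributes $\Psi_j\bigl(e^{\alpha f/n} - 1\bigr)$. Algebraic rearrangement parallel to the $\Phi$ case yields
\[
\Delta_i = \Psi_i\, e^{\alpha f/n}\bigl(e^{-\alpha f} - 1\bigr) + \bigl(e^{\alpha f/n} - 1\bigr)\, \Psi.
\]

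Step (ii) then gives $\EE[\Psi(t+1) - \Psi(t) \mid x(t)] = \sum_i p_i \Delta_i = \sum_i \bigl[p_i\, e^{\alpha f/n}(e^{-\alpha f} - 1) + (e^{\alpha f/n} - 1)\bigr]\Psi_i$. For step (iii) I would apply the second-order expansions $e^{\alpha f/n}(e^{-\alpha f} - 1) \lessapprox -\alpha f + (\alpha f)^2$ and $e^{\alpha f/n} - 1 \lessapprox \alpha f/n$, discarding the $O((\alpha f)^2/n)$ cross terms exactly as in the $\Phi$ case, to arrive at inequality~\eqref{eq-psi-base}.

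The main subtlety is the direction of the equi-load-group bound: when bin $i$'s load grows, its index in the sorted order moves to a smaller value, so $1/i$ increases while $e^{-\alpha s}$ decreases, and the two effects pull in opposite directions on bin $i$'s new contribution. The resolution, as in the $\Phi$ proof, is that bins in an equi-load group share the same $s$ and are therefore interchangeable in the potential; the group's total contribution depends only on the multiset of occupied indices, which the relabeling preserves. This makes the placement of bin $i$ at $p_0$ a formal rewrite rather than a loose bound, after which the algebra becomes mechanical and parallel to Lemma~\ref{lemma-phi-base}.
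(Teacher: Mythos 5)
Your proposal matches the paper's own proof (Appendix~\ref{app:1_proof}) essentially verbatim: the same per-bin contributions $\frac{e^{-\alpha s_i}}{p_0}\bigl(e^{-\alpha f(1-1/n)}-1\bigr)$ and $\Psi_j\bigl(e^{\alpha f/n}-1\bigr)$, the same rearrangement to $\Psi_i e^{\alpha f/n}(e^{-\alpha f}-1)+(e^{\alpha f/n}-1)\Psi$, the same averaging over $p_i$, and the same second-order approximations dropping $O((\alpha f)^2/n)$ terms. The equi-load-group relabeling you discuss is handled the same way (implicitly) in the paper, so there is nothing further to add.
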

Further observing that $p_i > 0$, one gets the following Corollary.
\begin{corollary}\label{cor-psi-gen}
   $\EE[ \Psi(t+1) - \Psi(t) | x(t) ] \le (\alpha.f.\Psi)/ n$ 
\end{corollary}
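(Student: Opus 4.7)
The plan is to start from the inequality established in Lemma~\ref{lemma-psi-base}:
\begin{equation}\nonumber
   \EE[\Psi(t+1) - \Psi(t) \mid x(t)] \le \sum_{i=1}^{n} \bigl[\, p_i \cdot (-\alpha f + (\alpha f)^2) \;+\; \alpha f / n \,\bigr] \Psi_i,
\end{equation}
and simply argue that the entire $p_i$-dependent contribution is non-positive, so it can be dropped without affecting the upper bound.

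To make this precise, I would first recall the parameter choice $\alpha = \epsilon/(2f)$, which gives $\alpha f = \epsilon/2$. Since we assumed $\epsilon \le 1/4$, we have $\alpha f \le 1/8 < 1$, so the quadratic correction is dominated by the linear term. Concretely, $(\alpha f)^2 < \alpha f$, hence
\[
-\alpha f + (\alpha f)^2 \;<\; 0.
\]
Combined with $p_i \ge 0$ (it is a probability) and $\Psi_i = \sum_{k} e^{-\alpha s_k}/(p_0+k) \ge 0$ (each summand is a positive exponential divided by a positive index), the first term in the bracket satisfies $p_i \cdot (-\alpha f + (\alpha f)^2) \cdot \Psi_i \le 0$ for every $i$.

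Dropping this non-positive contribution leaves
\begin{equation}\nonumber
   \EE[\Psi(t+1) - \Psi(t) \mid x(t)] \;\le\; \sum_{i=1}^{n} \frac{\alpha f}{n}\, \Psi_i \;=\; \frac{\alpha f}{n}\, \sum_{i=1}^{n} \Psi_i \;=\; \frac{\alpha f \cdot \Psi}{n},
\end{equation}
since $\Psi = \sum_{i=1}^{n} \Psi_i$ by definition.

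There is no real obstacle here; the statement is an immediate simplification of Lemma~\ref{lemma-psi-base} once one observes that the linear-in-$p_i$ term carries a negative coefficient under the standing assumption $\alpha f < 1$. The only thing to be careful about is that the reasoning uses $p_i \ge 0$ (as noted in the hint), and that no probability vector coordinate is used to establish positivity of $\Psi_i$ itself, which follows directly from the exponential definition.
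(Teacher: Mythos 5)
Your proof is correct and matches the paper's intended argument: the paper derives this corollary from Lemma~\ref{lemma-psi-base} by the same observation that $p_i \ge 0$ and the coefficient $-\alpha f + (\alpha f)^2$ is negative (since $\alpha f = \epsilon/2 < 1$), so the $p_i$-dependent term can be dropped and the remainder sums to $\alpha f \Psi / n$. Nothing further is needed.
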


In the next two lemmas, Lemma~\ref{lemma-phi-easy-case} and Lemma~\ref{lemma-psi-easy-case}, we consider a reasonably balanced md-bins scenario. We show that for such cases, the expected potential decreases. Specifically, for $s_{(n\gamma_2)} \le 0$, the expected value of $\Phi$ decreases and for $s_{(n\gamma_1)} \ge 0$, the expected value of $\Psi$ decreases.
\begin{lemma}\label{lemma-phi-easy-case}
  Let $\Phi$ be defined as above. If $s_{(n\gamma_2)}(t) < 0$ then, $\EE[ \Phi(t+1) | x(t) ] \le (1-\frac{\alpha f}{2n})\Phi$ 
\end{lemma}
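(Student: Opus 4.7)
The plan is to invoke Lemma~\ref{lemma-phi-base} and reduce the claim to a single inequality on the positive drift term. Rewriting that lemma's right-hand side as $\alpha f(1+\alpha f)\sum_i p_i\Phi_i - (\alpha f/n)\Phi$, the target bound $\EE[\Phi(t+1)\mid x(t)]\le (1-\alpha f/(2n))\Phi$ is equivalent to
\[
(1+\alpha f)\sum_{i=1}^{n} p_i\Phi_i \;\le\; \frac{\Phi}{2n},
\]
since then the expected change is at most $\alpha f\,\Phi/(2n)-\alpha f\,\Phi/n = -\alpha f\,\Phi/(2n)$. I would prove this by partitioning the bins at the hypothesis threshold $n\gamma_2$ and treating the loaded head and the light tail separately, using $s_{(n\gamma_2)}(t)<0$ to control the tail.

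On the light tail $i>n\gamma_2$, the decreasing sortedness of $s_i$ together with the hypothesis give $e^{\alpha s_i}<1$ and hence $\Phi_i<1/i$. Combining with the two-choice bound $p_i=(2i-1)/n^2\le 2i/n^2$, the tail contribution collapses to the $\Phi$-independent quantity $\sum_{i>n\gamma_2} p_i/i \le 2\gamma_1/n$. On the loaded head $i\le n\gamma_2$, I would instead use the cancellation between the bias of $p_i$ and the $1/i$ factor in $\Phi_i$ to write $\sum_{i\le n\gamma_2} p_i\Phi_i \le (2/n^2)\sum_{i\le n\gamma_2} e^{\alpha s_i}$, and then reinsert the harmonic weights via Chebyshev's sum inequality: both $\{e^{\alpha s_i}\}$ and $\{1/i\}$ are decreasing on $\{1,\dots,\lfloor n\gamma_2\rfloor\}$, so
\[
\sum_{i\le n\gamma_2} e^{\alpha s_i}\;\le\;\frac{n\gamma_2}{H_{n\gamma_2}}\sum_{i\le n\gamma_2}\frac{e^{\alpha s_i}}{i}\;\le\;\frac{n\gamma_2}{H_{n\gamma_2}}\,\Phi,
\]
which makes the head contribution at most $2\gamma_2\Phi/(n H_{n\gamma_2})$.

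Adding the two estimates and using the elementary lower bound $\Phi\ge H_n$ (obtained from $e^x\ge 1+x$ and $\sum_i s_i/i\ge 0$, the latter by Abel summation on the non-negative partial sums $T_k=\sum_{i\le k} s_i$ that sortedness and $\sum s_i=0$ guarantee), the required inequality $(1+\alpha f)\sum_i p_i\Phi_i\le \Phi/(2n)$ is delivered for $n$ large enough: the constant $2\gamma_1/n$ is absorbed into $\Phi/(4n)$, and $2\gamma_2(1+\alpha f)/H_{n\gamma_2}\le 1/4$ since $H_{n\gamma_2}\to\infty$ while $\alpha f\le 1/8$. The main obstacle in this plan is the head, where the unweighted sum $\sum_{i\le n\gamma_2} e^{\alpha s_i}$ is not immediately a multiple of the $1/i$-weighted $\Phi$; Chebyshev's sum inequality applied to two compatibly decreasing sequences is what does the essential cancellation and keeps the head bounded by a vanishing fraction of $\Phi$.
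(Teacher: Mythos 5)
Your proposal is correct, and it follows the same overall skeleton as the paper's proof: start from Lemma~\ref{lemma-phi-base}, split the sum at the index $n\gamma_2$, use the monotonicity of $p_i$ and $\Phi_i$ on the heavy head, and use $s_i\le s_{(n\gamma_2)}<0$ on the light tail to win a constant factor against the negative drift $-\alpha f\Phi/n$. Where you differ is in how the head is tied back to $\Phi$: the paper asserts an extremal configuration ("the maximum is attained when all $e^{\alpha s_i}$, $i<n\gamma_2$, are equal to $\Phi/\ln(n\gamma_2)$") and computes with it, whereas you first cancel $p_i\le (2i)/n^2$ against the $1/i$ in $\Phi_i$ and then reinstate the harmonic weights via Chebyshev's sum inequality for the two similarly ordered decreasing sequences $\{e^{\alpha s_i}\}$ and $\{1/i\}$, which yields the same quantitative bound $\sum_{i\le n\gamma_2}p_i\Phi_i\le 2\gamma_2\Phi/(nH_{n\gamma_2})$ but with an explicit, checkable inequality in place of the paper's unproved equalization claim. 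Likewise, for the tail the paper silently uses $\Phi_{(\ge n\gamma_2)}\le\Phi/\ln(n)$, which is exactly the lower bound $\Phi\gtrsim\ln n$ that you prove explicitly ($\Phi\ge H_n$ via $e^x\ge 1+x$ and Abel summation on the nonnegative prefix sums of the sorted zero-sum $s_i$). The only caveat in your argument, the restriction to $n$ large enough so that $2\gamma_2(1+\alpha f)/H_{n\gamma_2}$ and $2\gamma_1(1+\alpha f)/H_n$ are below $1/4$, is not a real gap: the paper's own final step (needing $2\gamma_2/\ln(n\gamma_2)-1\le -1/2$, plus its discarded lower-order terms) requires the same asymptotic regime. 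Net effect: your route buys rigor and self-containedness at essentially no extra length, while the paper's extremal shortcut is quicker but leaves the two monotonicity steps unjustified.
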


\begin{proof}
    From equation~\eqref{eq-phi-base}, we get,
\begin{equation}\label{eq-phi-1}
\begin{aligned}
  \EE[ \Phi(t+1) - \Phi(t) | x(t)] & \le \sum_{i=1}^{n} (p_i * (\alpha f + (\alpha f)^2) - \alpha f/n). \Phi_i\\
   & \le \sum_{i < n\gamma_2} (p_i * (\alpha f + (\alpha f)^2)).\Phi_i - \frac{\alpha f\Phi}{n} + 
         \sum_{i \ge n\gamma_2} p_i * (\alpha f + (\alpha f)^2).e^{-\alpha s_i}
\end{aligned}
\end{equation}
  Now, we need to upper bound the term $\sum_{i < n\gamma_2} (p_i * (\alpha.f + (\alpha.f)^2).\frac{e^{\alpha.s_i}}{i})$. Since $p_i$ is non-decreasing and $\Phi_i$ is non-increasing, the maximum value is achieved when $e^{\alpha s_i} ( \sum_{i=1}^{n\gamma_2} 1/i) = \Phi$ for each $i < n\gamma_2$. Hence, $e^{\alpha s_i} = \frac{\Phi}{\ln(n\gamma_2)}$. Hence, the maximum value is given as follows.
\begin{equation}
\begin{aligned}
   \sum_{i=1}^{n\gamma_2} p_i\Phi_i & \le \frac{\Phi}{\ln(n\gamma_2)} * \sum_{i=1}^{n\gamma_2} [ \frac{2i-1}{n^2} * \frac{1}{i} ] \\
       & \le \frac{2\gamma_2\Phi}{n\ln(n\gamma_2)} - \frac{\Phi}{n^2}
\end{aligned}
\end{equation}
  Similarly, one can upper bound the term, $\sum_{i \ge n\gamma_2} (p_i \frac{e^{-\alpha.s_i}}{i})$. Since $p_i$ is non-decreasing and $\Phi_i$ is non-increasing, the maximum value is achieved when $e^{-\alpha s_i} ( \sum_{i=(n\gamma_2)}^{n} 1/i) = \Phi_{(\ge n\gamma_2)}$ for each $i \ge n\gamma_2$. Hence, $e^{-\alpha s_i} = \frac{\Phi_{(\ge n\gamma_2)}}{\ln(1/\gamma_2)}$. Hence, the required upper bound is given as follows.
\begin{equation}
\begin{aligned}
   \sum_{i=n\gamma_2}^{n} p_i\Phi_i & \le \frac{\Phi}{\ln(1/\gamma_2)} * \sum_{i=n\gamma_2}^{n} [ \frac{2i-1}{n^2} * \frac{1}{i} ] \\
       & \le \frac{2(1-\gamma_2)\Phi}{n\ln(n)} \because \quad \Phi_{(\ge n\gamma_2)} \le \frac{\Phi}{ln(n)}
\end{aligned}
\end{equation}
  Thus, the expected change in $\Phi$ can be computed, using equation~\eqref{eq-phi-1} and the above bound, as follows:
\begin{equation}
\begin{aligned}
  \EE[ \Phi(t+1) - \Phi(t) | x(t)] & \le (\alpha.f + (\alpha.f)^2)(\frac{2\gamma_2\Phi}{n\ln(n\gamma_2)} - \frac{\Phi}{n^2}) - \alpha.f/n * \Phi + (\alpha.f + (\alpha.f)^2)* \frac{2(1-\gamma_2)\Phi}{n\ln(n)}\\ 
   & \le (\alpha.f)\Phi/n(\frac{2\gamma_2}{\ln(n\gamma_2)} - 1) + \frac{\epsilon^2\gamma_2}{2n\ln(n\gamma_2)} + \alpha.f\frac{2(1-\gamma_2)\Phi}{n\ln(n)}\\
   & \le \frac{-\alpha f\Phi}{2n}
\end{aligned}
\end{equation}
%
%
\end{proof}
\begin{lemma}\label{lemma-psi-easy-case}
  Let $\Psi$ be defined as above. If $s_{(n\gamma_1)}(t) \ge 0$ then,
  $\EE[ \Psi(t+1) | x(t) ] \le (1-\frac{\alpha f}{8n})\Psi$ 
\end{lemma}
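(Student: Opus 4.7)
The plan is to mirror the structure of Lemma~\ref{lemma-phi-easy-case}, using the hypothesis $s_{(n\gamma_1)}(t) \ge 0$ in precisely the dual way that the companion lemma used $s_{(n\gamma_2)}(t) < 0$. Starting from the base bound of Lemma~\ref{lemma-psi-base},
$$\EE[\Psi(t+1) - \Psi(t) \mid x(t)] \;\le\; \sum_{i=1}^{n}\left[p_i(-\alpha f + (\alpha f)^2) + \frac{\alpha f}{n}\right] \Psi_i,$$
I would split the sum at index $i = n\gamma_1$ and treat the two ranges separately. The key consequence of the hypothesis is: since $s_1 \ge s_2 \ge \cdots \ge s_n$ and $s_{(n\gamma_1)} \ge 0$, we have $s_i \ge 0$ for all $i \le n\gamma_1$, hence $e^{-\alpha s_i} \le 1$ and so $\Psi_i \le 1/i$ on this range. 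In particular $\sum_{i \le n\gamma_1} \Psi_i \le H_{n\gamma_1} \le \ln(n\gamma_1) + 1$.

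On the range $i \le n\gamma_1$, where $p_i \le (2i-1)/n^2 \le 2\gamma_1/n$, the bound $\Psi_i \le 1/i$ forces the positive contribution coming from $(\alpha f)^2 p_i \Psi_i$ and from $(\alpha f/n)\Psi_i$ to be at most $O\!\left(\frac{\alpha f \, \gamma_1}{n\ln(n\gamma_1)}\right) \Psi$, following the extremal argument (fixing the total mass of $\{\Psi_i\}$ and pushing it into the configuration $\Psi_i = c/i$) that was used in Lemma~\ref{lemma-phi-easy-case}. On the tail range $i > n\gamma_1$, I exploit the $d$-choice bias toward lightly loaded bins, using $\sum_{i \ge n\gamma_4} p_i \ge 1 - \gamma_4^2$ (together with $\gamma_4 > 1/2$): this allows a lower bound of the form $\sum_{i > n\gamma_1} p_i \Psi_i \ge \frac{c'}{n}\sum_{i > n\gamma_1}\Psi_i$ with a constant $c' > 1$. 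Combining the two halves with the identity $\sum_{i \le n\gamma_1}\Psi_i + \sum_{i > n\gamma_1}\Psi_i = \Psi$ yields
$$(1 - \alpha f)\sum_i p_i \Psi_i \;\ge\; \frac{9\,\Psi}{8n},$$
which rearranges to the claimed $\EE[\Psi(t+1) \mid x(t)] \le \bigl(1 - \alpha f/(8n)\bigr)\Psi$. The factor $1/8$ (versus the $1/2$ appearing in Lemma~\ref{lemma-phi-easy-case}) is the slack I have to leave to absorb the $(\alpha f)^2$ term and the boundary losses coming from the split.

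The main obstacle, as I see it, is that unlike $\Phi_i = e^{\alpha s_i}/i$, which is monotonically nonincreasing in $i$ (both factors agreeing in direction), the sequence $\Psi_i = e^{-\alpha s_i}/i$ need not be monotone: the numerator $e^{-\alpha s_i}$ is nondecreasing while $1/i$ is decreasing. Consequently the Chebyshev-style rearrangement inequality used implicitly in Lemma~\ref{lemma-phi-easy-case} does not apply directly here, and one cannot simply "concentrate" $\Psi_i$ into the most favorable configuration without additional care. To handle this I would perform a finer split using the intermediate threshold $n\gamma_4$ (with $1/2 < \gamma_4 < \gamma_2$), so as to isolate a middle range $(n\gamma_1, n\gamma_4]$ on which $p_i$ is already bounded above by $2\gamma_4/n$ from a tail range $(n\gamma_4, n]$ on which $p_i$ is bounded below by $(2n\gamma_4-1)/n^2$; only in this tail is the $d$-choice bias strong enough to beat the uniform $\alpha f/n$ penalty term. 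Once this three-way decomposition is set up, the remaining algebra is a termwise mirror of Lemma~\ref{lemma-phi-easy-case} with the signs of exponents reversed and the roles of heavy and light bins swapped.
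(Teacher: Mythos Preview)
Your overall plan (mirror Lemma~\ref{lemma-phi-easy-case}, split at $n\gamma_1$, use the hypothesis to control the head) is correct in spirit, but you miss the one-line simplification that the paper exploits and thereby introduce complications that are not needed.

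The paper's first move is to observe that the coefficient $(-\alpha f + (\alpha f)^2)$ multiplying $p_i\Psi_i$ in the base bound is \emph{negative}. Hence every term with $i < n\gamma_1$ contributes nonpositively, and one may simply discard the whole head range:
\[
\EE[\Delta\Psi \mid x(t)] \;\le\; \sum_{i \ge n\gamma_1} p_i\bigl(-\alpha f + (\alpha f)^2\bigr)\Psi_i \;+\; \frac{\alpha f}{n}\,\Psi.
\]
This single step removes any need to bound the ``positive contributions from $(\alpha f)^2 p_i\Psi_i$ and $(\alpha f/n)\Psi_i$'' on $i \le n\gamma_1$, and with it the motivation for your three-way split through $n\gamma_4$.

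What remains is only a \emph{lower} bound on $\sum_{i \ge n\gamma_1} p_i\Psi_i$. Here the hypothesis gives $\sum_{i < n\gamma_1}\Psi_i \le \sum_{i < n\gamma_1} 1/i \le \ln(n\gamma_1)$, so $\sum_{i \ge n\gamma_1}\Psi_i \ge \Psi - \ln(n\gamma_1)$. The paper then runs the extremal argument not on $\Psi_i$ but on $e^{-\alpha s_i}$, which \emph{is} nondecreasing in $i$ (so your monotonicity concern, while correctly identified for $\Psi_i$, is sidestepped rather than confronted): writing $\sum p_i\Psi_i = \sum (p_i/i)\,e^{-\alpha s_i}$ with $p_i/i = (2i-1)/(n^2 i)$, the minimum subject to $\sum (1/i)\,e^{-\alpha s_i} = \Psi - \ln(n\gamma_1)$ occurs when all $e^{-\alpha s_i}$ are equal on the tail, yielding
\[
\sum_{i \ge n\gamma_1} p_i\Psi_i \;\ge\; \frac{\Psi - \ln(n\gamma_1)}{\ln(1/\gamma_1)}\left(\frac{2(1-\gamma_1)}{n} - \frac{\ln(1/\gamma_1)}{n^2}\right).
\]
Substituting back and collecting terms gives $\EE[\Delta\Psi\mid x(t)] \le -\frac{\alpha f}{8n}\Psi$ directly. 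So the paper's route is a two-way split (drop head, lower-bound tail), whereas yours is a three-way split with separate head bookkeeping; both reach the same inequality, but the paper's is shorter precisely because the sign of the leading coefficient lets the head be thrown away for free.
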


\begin{proof}
	The proof is similar to that of Lemma~\ref{lemma-phi-easy-case}. See Appendix~\ref{app:2_proof} for details of the proof.
\end{proof}

Now, we consider the remaining cases and show that in case the load across the bins , at time $t$, is not reasonably balanced, then for $s_{n\gamma_2} > 0$, either $\Psi$ dominates $\Phi$ or $\Gamma < c$, where, $c = poly(1/\epsilon)$.
\begin{lemma}
\label{lemma-s-gamma-phi}
   Let, $s_{(n\gamma_2)} > 0$ and $\, \EE[\Delta\Phi|x(t)] \ge -\epsilon\Phi/4n$. Then, either $\Phi < \epsilon\gamma_1 * \Psi$, or $\Gamma < c$ for some $c = poly(1/epsilon)$.
\end{lemma}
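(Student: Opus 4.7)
The plan is to translate the hypothesis on $\EE[\Delta\Phi|x(t)]$ into a structural constraint on the load vector $s(t)$, and then use $s_{(n\gamma_2)}>0$ together with the zero-sum property $\sum_i s_i = 0$ to force either $\Psi$ to dominate $\Phi$, or the whole potential $\Gamma$ to be already small.

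First I would couple the hypothesis $\EE[\Delta\Phi|x(t)] \ge -\epsilon\Phi/(4n)$ with the upper bound from Lemma~\ref{lemma-phi-base}. Substituting $\alpha f = \epsilon/2$ and the explicit two-choice probabilities $p_i = (2i-1)/n^2$, so that $\sum_i p_i \Phi_i = (2/n^2)\sum_i e^{\alpha s_i} - \Phi/n^2$, and rearranging yields $\sum_{i=1}^n e^{\alpha s_i} \ge c_0\, n\, \Phi$ for some constant $c_0 = 1/(4(1+\epsilon/2))$ depending only on $\epsilon$ (modulo the $O(\Phi/n^2)$ correction, which I would track explicitly). Since $e^{\alpha s_1}$ is the largest exponential and $\Phi \ge e^{\alpha s_1}$ from just the $i=1$ term, this immediately gives $e^{\alpha s_1} \le \Phi \le e^{\alpha s_1}/c_0$, so $\Phi$ is within a constant factor of the top bin's weight. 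A Markov-type averaging applied to $\frac{1}{n}\sum_i e^{\alpha(s_i - s_1)} \ge c_0$ then extracts the stronger fact that a constant fraction $\beta = \beta(\epsilon)$ of bins satisfy $s_i \ge s_1 - L_0$ for some $L_0 = O((1/\epsilon)\log(1/\epsilon))$.

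Next I would case-split on $s_1$ using a threshold $L = L(\epsilon,\gamma_1)$. In the small-$s_1$ case $s_1 \le L$, the bound $\Phi \le e^{\alpha L}/c_0$ is already $\mathrm{poly}(1/\epsilon)$, so either $\Psi > \Phi/(\epsilon\gamma_1)$ (first alternative) or else $\Psi \le \Phi/(\epsilon\gamma_1) = \mathrm{poly}(1/\epsilon)$ and hence $\Gamma = \Phi + \Psi < c$ for $c = \mathrm{poly}(1/\epsilon)$ (second alternative). In the large-$s_1$ case $s_1 > L$, choose $\beta \le \gamma_2$ so that the flat-top cluster of Step~2 lies entirely in positions $\le n\gamma_2$; these bins are all positive by hypothesis and contribute a total positive mass of at least $\beta n (s_1 - L_0)$. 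Since $\sum_i s_i = 0$ and at most $n\gamma_1$ bins (positions $> n\gamma_2$) carry the negative deficit, convexity of $e^{-\alpha x}$ combined with pigeonhole gives $\sum_{\text{neg}} e^{-\alpha s_i}/i \ge (m/n)\,e^{\alpha \beta n(s_1-L_0)/m}$, where $m \le n\gamma_1$ is the number of negative bins. Dividing by $\Phi \le e^{\alpha s_1}/c_0$, the dominant factor is $e^{\alpha s_1(\beta n/m - 1)} \ge e^{\alpha s_1(\beta/\gamma_1 - 1)}$, which exceeds $1/(\epsilon\gamma_1)$ once $s_1 > L$ for a threshold $L$ depending only on $\epsilon, \gamma_1, \gamma_2$.

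The main obstacle is the pigeonhole step in the large-$s_1$ case: the concentration argument gives only $\beta = \Theta(1)$ from Step~2, and the exponent $\alpha s_1(\beta/\gamma_1 - 1)$ grows with $s_1$ only when $\beta/\gamma_1 > 1$, which forces $\gamma_1$ to be chosen sufficiently small relative to the concentration constant produced by $c_0$. Calibrating the thresholds $L$, $L_0$, and $\beta$ so that the two cases jointly cover every configuration satisfying the hypothesis, while carefully tracking the lower-order $O(\Phi/n^2)$ terms from Step~1 so they do not spoil the constants, is the delicate part of the argument.
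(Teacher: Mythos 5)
Your Step 1 is fine (the hypothesis $\EE[\Delta\Phi|x(t)]\ge-\epsilon\Phi/4n$ combined with the bound of Lemma~\ref{lemma-phi-base} and $p_i=(2i-1)/n^2$ does force $\sum_i e^{\alpha s_i}\ge c_0 n\Phi$, hence $\Phi\le e^{\alpha s_1}/c_0$ and a cluster of $\beta n$ bins with $s_i\ge s_1-L_0$ where $\beta=\Theta(c_0)$). The genuine gap is exactly the point you flag at the end but treat as a tunable calibration: in the large-$s_1$ case your comparison $\Psi/\Phi\gtrsim e^{\alpha s_1(\beta/\gamma_1-1)}$ is useful only if $\beta>\gamma_1$, and your averaging argument only yields an absolute constant $\beta\approx c_0/2\approx 1/8$. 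But $\gamma_1$ is not yours to choose in this lemma: it is fixed by the global setup ($\theta\gamma_1=1$, $\gamma_1+\gamma_2=1$, $0<\gamma_1<\gamma_3<1/2$, so $\gamma_1$ may be, say, $0.45$), it appears in the lemma's own conclusion $\Phi<\epsilon\gamma_1\Psi$, and it is reused with the same value in Lemma~\ref{lemma-psi-easy-case} and Theorem~\ref{thm-super-mart}. For $\gamma_1\ge\beta$ your exponent does not grow with $s_1$ (and the worst case $m=n\gamma_1$ negative bins gives no slack from the prefactor), so the dichotomy fails on admissible parameter choices; the argument as written proves the lemma only under an extra, unwarranted restriction $\gamma_1<c_0/2$.

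The paper avoids this by extracting a different structural consequence from the same hypothesis: not that many bins sit near the maximum $s_1$, but that the $\Phi$-mass is dominated (up to logarithmic factors) by the \emph{light} bins of index $>n\gamma_3$, i.e.\ $\Phi\lesssim\Phi_{(>n\gamma_3)}\cdot\frac{\gamma_4\ln(n)}{\ln(n\gamma_3)\ln(1/\gamma_3)}$. Since $s_{(n\gamma_3)}>0$, every such bin has load at most $B/(n\gamma_3)$ where $B=\sum_i\max(0,s_i)$, giving $\Phi\lesssim\ln(n)\,e^{\alpha B/(n\gamma_3)}$, while $s_{(n\gamma_2)}>0$ confines the negative mass to at most $n\gamma_1$ bins, giving $\Psi\gtrsim e^{\alpha B/(n\gamma_1)}$. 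The comparison then hinges only on $\gamma_1<\gamma_3$, which the setup guarantees for free, with no constraint tying $\gamma_1$ to an absolute concentration constant. If you want to salvage your route, replace the ``flat-top near $s_1$'' cluster by this index-based localization of the $\Phi$-mass (your same rearrangement of the hypothesis, but split at $n\gamma_3$ as in the paper); your small-$s_1$ versus large-$s_1$ dichotomy then becomes unnecessary.
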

\begin{proof}
    From equation~\eqref{eq-phi-base}, we get:
\begin{equation}
\begin{aligned}
   \EE[\Delta\Phi | x(t)] & \le \sum_{i=1}^{n} (p_i * (\alpha.f + (\alpha.f)^2) - \alpha.f/n).\Phi_i\\
    & \le \sum_{i \le n\gamma_3} (p_i * (\alpha.f + (\alpha.f)^2) - \alpha.f/n).\Phi_i + \sum_{i > n\gamma_3} (p_i * (\alpha.f + (\alpha.f)^2) - \alpha.f/n) * \Phi_i\\
    & \le  [(\alpha.f + (\alpha.f)^2)*\sum_{i \le (n\gamma_3)} \frac{2i-1}{n^2i}].\frac{\Phi_{(\le n\gamma_3)}}{\ln(n\gamma_3)} +\\ 
    & ((\sum_{i > (n\gamma_3)} \frac{2i-1}{n^2i}))* (\alpha.f + (\alpha.f)^2)\frac{\Phi_{(> n\gamma_3)}}{\ln(1/\gamma_3)} - \frac{\alpha f\Phi}{n}\\
    & \le \frac{\alpha f\Phi_{\le n\gamma_3}}{n} [\frac{2\gamma_3}{\ln(n\gamma_3)} - 1 - 1/n] + \frac{\alpha f\Phi_{> n\gamma_3}}{n} [\frac{2\gamma_4}{\ln(1/\gamma_3)} - 1 - 1/n]\\
    & \le \frac{\alpha f\Phi_{\le n\gamma_3}}{n} [\frac{2\gamma_3 - \ln(n\gamma_3)}{\ln(n\gamma_3)}] + \frac{\alpha f\Phi_{>n\gamma_3}}{n} [\frac{2\gamma_4 - \ln(1/\gamma_3)}{\ln(1/\gamma_3)}]
\end{aligned}
\end{equation}
  Now, since $\EE[\Delta\Phi|x(t)] \ge -\alpha f\Phi/2n$, we get: $\Phi \le 4\Phi_{(> n\gamma_3)} [\frac{\gamma_4\ln(n)}{\ln(n\gamma_3)\ln(1/\gamma_3)}]$.
  Let, $B = \sum_{i} max(0, s_i)$. Note, $\sum_{i} s_i = 0$, since for each dimension $d$, the update maintains that, $\sum_{d=1}^{D} x_i^d(t) = 0$. Further, because, $s_{n\gamma_3} > 0$, $\Phi_{(>n\gamma_3)} \le \ln(1/\gamma_3) * e^{(\frac{\alpha.B}{n\gamma_3})}$. This implies that, $\Phi \le \frac{4\gamma_4\ln(n) e^{(\frac{\alpha.B}{n\gamma_3})}}{\ln(n\gamma_3)}$.

Since, $s_{(n\gamma_2)} > 0$, so, $\Psi \ge \ln(n\gamma_2) * e^{\frac{\alpha.B}{n\gamma_1}}$. If $\Phi < \epsilon\gamma_1 * \Psi$, then we are done. Else, $\Phi \ge \epsilon\gamma_1 * \Psi$. This implies:
\begin{equation}\nonumber
  \frac{4\gamma_4\ln(n)e^{\frac{\alpha.B}{n\gamma_3}}}{\ln(n\gamma_3)} \ge \Phi \ge \epsilon\gamma_1 * \Psi \ge \epsilon\gamma_1\ln(n\gamma_2) * e^{\frac{\alpha.B}{n\gamma_1}}
\end{equation}
  Thus, $e^{\alpha.B/n} \le (\frac{4\gamma_4}{\epsilon\gamma_1})^{\frac{\gamma_1\gamma_3}{(\gamma_3 - \gamma_1)}}$. So, $\Gamma \le (\frac{1 + \theta}{\epsilon} * \Phi \le \frac{1 + \theta}{\epsilon} * \frac{4\gamma_4\ln(n)}{\ln(n\gamma_3)} * e^{\frac{\alpha.B}{n\gamma_3}} \le \frac{1 + \theta}{\epsilon} * \frac{4\gamma_4\ln(n)}{\ln(n\gamma_3)} * (\frac{4\gamma_4}{\epsilon\gamma_1})^{\frac{\gamma_1}{(\gamma_3 - \gamma_1)}}$. Hence, $\Gamma < c$, where, $c = poly(1/\epsilon)$.
%
\end{proof}

In the Lemma below, we consider the case where the load across the bins at time, $t$, is not reasonably balanced, and $s_{(n\gamma_1)} < 0$. Here, we show that either $\Phi$ dominates $\Psi$ or the potential function is less than $c$ for $c = poly(1/\epsilon)$.
\begin{lemma} \label{lemma-s-gamma-psi}
   Let, $s_{(n\gamma_1)} < 0$ and $\EE[\Delta\Psi|x(t)] \ge -\alpha f\Psi/8n$. Then, either $\Psi < \epsilon\gamma_1 * \Phi$, or $\Gamma < c\ln(n)$ for some $c = poly(1/epsilon)$.
\end{lemma}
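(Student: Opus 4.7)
The proof will mirror the structure of Lemma~\ref{lemma-s-gamma-phi} with the roles of $\Phi$ and $\Psi$ interchanged. I start from Lemma~\ref{lemma-psi-base} and split the bound on $\EE[\Delta\Psi | x(t)]$ at the index $i = n\gamma_4$. For the $d=2$ process, $p_i = (2i-1)/n^2$ is non-decreasing in $i$, so $p_i \le (2n\gamma_4-1)/n^2$ for $i \le n\gamma_4$ and $p_i \ge (2n\gamma_4-1)/n^2$ for $i > n\gamma_4$. Because the coefficient of $p_i \Psi_i$ in Lemma~\ref{lemma-psi-base} is negative (since $\alpha f$ is small), the large $p_i$'s on the light bins contribute a strong downward drift term of order $-(\alpha f)(2\gamma_4/n) \cdot \Psi_{>n\gamma_4}$, while the ``bad'' terms $(\alpha f)^2 p_i \Psi_i$ and $(\alpha f/n)\Psi_i$ are of smaller order.

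Feeding the hypothesis $\EE[\Delta\Psi | x(t)] \ge -\alpha f\Psi/(8n)$ into this decomposition shows that the downward drift above cannot be too large, which forces the potential mass of $\Psi$ to be concentrated on the heavy bins, namely $\Psi \le K\,\Psi_{\le n\gamma_4}$ for some constant $K$ depending only on $\gamma_3,\gamma_4$. This parallels the intermediate bound $\Phi \le 4\Phi_{>n\gamma_3}[\gamma_4\ln(n)/(\ln(n\gamma_3)\ln(1/\gamma_3))]$ derived inside Lemma~\ref{lemma-s-gamma-phi}.

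Next, with $B = \sum_i \max(0, s_i)$ and $\sum_i s_i = 0$, I extract the two quantitative bounds required. Since $s_i \le s_{n\gamma_1} < 0$ for all $i \ge n\gamma_1$ (hence for all $i > n\gamma_4$) and $\sum_{i \le n\gamma_4} s_i \le B$, a counting argument gives $s_{n\gamma_4} \ge -B/(n\gamma_3)$, so $e^{-\alpha s_i} \le e^{\alpha B/(n\gamma_3)}$ for every $i \le n\gamma_4$; summing yields
\begin{equation}\nonumber
\Psi \;\le\; K\ln(n\gamma_4)\,e^{\alpha B/(n\gamma_3)}.
\end{equation}
On the other side, let $k$ denote the largest index with $s_i > 0$; then $k < n\gamma_1$ and $B = \sum_{i \le k} s_i$, so that $s_1 \ge B/k \ge B/(n\gamma_1)$ and therefore $\Phi \ge e^{\alpha s_1} \ge e^{\alpha B/(n\gamma_1)}$ (a sharper Chebyshev-sum variant $\Phi \ge H_k\,e^{\alpha B/k}$ is available if needed).

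Combining: if we are not in the case $\Psi < \epsilon\gamma_1\Phi$, then $\Psi \ge \epsilon\gamma_1\Phi$ gives
\begin{equation}\nonumber
K\ln(n\gamma_4)\,e^{\alpha B/(n\gamma_3)} \;\ge\; \epsilon\gamma_1\,e^{\alpha B/(n\gamma_1)},
\end{equation}
and rearranging bounds $e^{\alpha B/n}$ by a polynomial in $1/\epsilon$ and $\ln n$. Plugging this back into the bound on $\Psi$ and using $\Gamma \le (1 + 1/(\epsilon\gamma_1))\,\Psi$ yields $\Gamma < c\ln(n)$ for some $c = \mathrm{poly}(1/\epsilon)$. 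The main obstacle, and the reason the conclusion is $\Gamma < c\ln(n)$ rather than the $\Gamma < c$ of Lemma~\ref{lemma-s-gamma-phi}, is the asymmetry in the harmonic sums: the sum $\sum_{i > n\gamma_3} 1/i$ appearing in the $\Phi$ case is $\Theta(1)$, whereas the analogous sum $\sum_{i \le n\gamma_4} 1/i$ here is $\Theta(\ln n)$, and this extra $\ln n$ factor survives the cancellation between the two sides of the inequality above. Controlling the exponents in that rearrangement so that only a single $\ln n$ factor (and not a higher polylogarithmic power) propagates through is the delicate step.
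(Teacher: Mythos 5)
Your overall route is the same as the paper's (split the bound from Lemma~\ref{lemma-psi-base} at $i=n\gamma_4$, deduce $\Psi \le K\,\Psi_{\le n\gamma_4}$ from the drift hypothesis, convert this into $\Psi \le K\ln(n\gamma_4)\,e^{\alpha B/(n\gamma_3)}$ via the counting bound $s_{n\gamma_4}\ge -B/(n\gamma_3)$, and play it off against a lower bound on $\Phi$), and all of those steps match the paper, including the constant $K=4$ and the final reduction $\Gamma \le (1+\theta/\epsilon)\Psi$. The gap is exactly at the step you yourself flag as delicate, and it is not a cosmetic one: your lower bound $\Phi \ge e^{\alpha s_1} \ge e^{\alpha B/(n\gamma_1)}$ is missing the harmonic factor. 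The paper instead uses $\Phi \ge \ln(n\gamma_1)\,e^{\alpha B/(n\gamma_1)}$ (the extremal configuration spreads the positive mass $B$ evenly over the at most $n\gamma_1$ overloaded bins, so the weights $1/i$ contribute a full harmonic sum, not just the single term $i=1$). With that factor present, the combination $4\ln(n\gamma_4)e^{\alpha B/(n\gamma_3)} \ge \epsilon\gamma_1\ln(n\gamma_1)e^{\alpha B/(n\gamma_1)}$ makes the two logarithms cancel, giving $e^{\alpha B/n} \le \bigl(\tfrac{4\ln(n\gamma_4)}{\epsilon\gamma_1\ln(n\gamma_1)}\bigr)^{\gamma_1\gamma_3/(\gamma_3-\gamma_1)} = \mathrm{poly}(1/\epsilon)$ with no $n$-dependence, so only the single outer $\ln(n\gamma_4)$ survives and $\Gamma \le c\ln(n)$ follows. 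With your weaker bound, the same rearrangement only yields $e^{\alpha B/n} \le \bigl(K\ln(n\gamma_4)/(\epsilon\gamma_1)\bigr)^{\gamma_1\gamma_3/(\gamma_3-\gamma_1)}$, and substituting back gives $\Gamma \le \mathrm{poly}(1/\epsilon)\,(\ln n)^{1+\gamma_1/(\gamma_3-\gamma_1)}$, which is strictly weaker than the lemma's claim $\Gamma < c\ln(n)$; no choice of the constants $\gamma_1<\gamma_3$ removes the extra power.

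Your parenthetical remark that $\Phi \ge H_k e^{\alpha B/k}$ (Chebyshev plus Jensen over the $k$ overloaded bins) is the right repair in spirit, but as stated it does not immediately close the gap: when $k \ll n\gamma_1$ the factor $H_k$ can be as small as $1$, so this bound does not dominate $\ln(n\gamma_1)e^{\alpha B/(n\gamma_1)}$ uniformly, and one needs an additional short case analysis (e.g.\ treat $\alpha B \le n\gamma_1$ directly, where $e^{\alpha B/(n\gamma_3)} = O(1)$ and the upper bound on $\Psi$ alone already gives $\Gamma = O(\ln n)$, and use the boost from $e^{\alpha B/k}$ when $\alpha B > n\gamma_1$ and $k$ is small) before the logarithms cancel as in the paper. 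Since that argument is left undone, the proposal as written establishes only a polylogarithmic bound on $\Gamma$, not the claimed $c\ln(n)$.
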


\begin{proof}
	The proof is similar to that of Lemma~\ref{lemma-s-gamma-phi}. See Appendix~\ref{app:3_proof} for details of the proof.
\end{proof}
  Now, we consider combinations of the cases considered so far and can show that the potential function, $\Gamma$, behaves as a super-martingale.
\begin{theorem}\label{thm-super-mart}
   For the potential function, $\Gamma$, $\EE[ \Gamma(t+1) | x(t)] \le (1 - \frac{\epsilon}{24n(1 + \epsilon\gamma_1)})\Gamma(t) + \frac{c\ln(n)}{n}$, for constant $c = poly(1/\epsilon)$.
\end{theorem}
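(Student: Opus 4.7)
The plan is to bound $\EE[\Gamma(t+1)\mid x(t)] = \EE[\Phi(t+1)\mid x(t)] + \EE[\Psi(t+1)\mid x(t)]$ by a case analysis on the sign structure of $s_{(n\gamma_1)}(t)$ and $s_{(n\gamma_2)}(t)$, feeding in the drift bounds from Lemmas~\ref{lemma-phi-easy-case}, \ref{lemma-psi-easy-case}, \ref{lemma-s-gamma-phi}, \ref{lemma-s-gamma-psi} and Corollaries~\ref{cor-phi-gen}, \ref{cor-psi-gen}. Since bins are sorted in decreasing order of $s_i$ and $\gamma_1 < \gamma_2$, we always have $s_{(n\gamma_1)}(t) \ge s_{(n\gamma_2)}(t)$, so only three configurations arise: (i) $s_{(n\gamma_1)} \ge 0$ and $s_{(n\gamma_2)} < 0$ (the balanced regime, in which both easy lemmas apply), (ii) $s_{(n\gamma_1)} < 0$ (hence $s_{(n\gamma_2)} < 0$ as well, so only the $\Phi$-easy lemma applies), and (iii) $s_{(n\gamma_2)} > 0$ (hence $s_{(n\gamma_1)} > 0$, so only the $\Psi$-easy lemma applies).

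In case~(i), Lemmas~\ref{lemma-phi-easy-case} and~\ref{lemma-psi-easy-case} together give $\EE[\Delta\Phi\mid x(t)] \le -\frac{\alpha f}{2n}\Phi$ and $\EE[\Delta\Psi\mid x(t)] \le -\frac{\alpha f}{8n}\Psi$, so summing yields $\EE[\Gamma(t+1)\mid x(t)] \le (1 - \frac{\alpha f}{8n})\Gamma$, which is stronger than what is claimed. For case~(ii), I use Lemma~\ref{lemma-phi-easy-case} on $\Phi$ and then split on the hypothesis of Lemma~\ref{lemma-s-gamma-psi}: if $\EE[\Delta\Psi\mid x(t)] < -\frac{\alpha f}{8n}\Psi$ we conclude as in~(i); otherwise the lemma returns one of two situations. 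If $\Psi < \epsilon\gamma_1 \Phi$, then $\Phi$ dominates so $\Gamma \le (1+\epsilon\gamma_1)\Phi$, and the large negative drift $-\frac{\alpha f}{2n}\Phi$ from Lemma~\ref{lemma-phi-easy-case} dominates the positive $\Psi$-drift bounded by Corollary~\ref{cor-psi-gen}. If instead $\Gamma < c\ln n$, then since a single step multiplies $\Gamma$ by at most a bounded factor, the resulting change is absorbed by the additive $c\ln n/n$ term. Case~(iii) is symmetric: combine Lemma~\ref{lemma-psi-easy-case} with the three sub-cases of Lemma~\ref{lemma-s-gamma-phi}, using Corollary~\ref{cor-phi-gen} to control the $\Phi$-drift when $\Phi < \epsilon\gamma_1 \Psi$.

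The main obstacle is squeezing out the specific rate $\frac{\epsilon}{24n(1+\epsilon\gamma_1)}$ in the mixed sub-cases. The $(1+\epsilon\gamma_1)$ denominator is forced by the step $\Gamma \le (1+\epsilon\gamma_1)\Psi$ (or symmetrically $\Gamma \le (1+\epsilon\gamma_1)\Phi$) that appears in the dominating-potential arguments, so a decrease of $\frac{\alpha f}{8n}\Psi$ translates to a decrease of at least $\frac{\alpha f}{8n(1+\epsilon\gamma_1)}\Gamma$. The numerator requires checking that the positive contribution from the non-dominant potential, at most $(\alpha f)^2\Phi/n$ by Corollary~\ref{cor-phi-gen} (resp. $\alpha f\Psi/n$ by Corollary~\ref{cor-psi-gen}), eats only a small constant fraction of that decrease: using $\alpha f = \epsilon/2 \le 1/8$ and $\Phi \le \epsilon\gamma_1\Psi$ in this sub-case, the positive term is at most $\frac{\epsilon^3\gamma_1}{4n}\Psi$, which is dominated by the decrease $\frac{\alpha f}{16n}\Psi$ by a factor of at least three under the constraints $\epsilon \le 1/4$ and $0 < \gamma_1 < 1/2$. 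Bookkeeping the remaining combinatorial constants across cases (ii) and (iii) yields the $\tfrac{1}{24}$ factor, and the ``$\Gamma < c$'' and ``$\Gamma < c\ln n$'' sub-cases are absorbed into the additive $\tfrac{c\ln n}{n}$ term via a trivial per-step bound, completing the super-martingale inequality.
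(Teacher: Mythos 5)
Your proposal is correct and follows essentially the same route as the paper's own proof: the same trichotomy on the signs of $s_{(n\gamma_1)}$ and $s_{(n\gamma_2)}$, the same use of Lemma~\ref{lemma-phi-easy-case} and Lemma~\ref{lemma-psi-easy-case} for the dominant potential, Lemma~\ref{lemma-s-gamma-phi} and Lemma~\ref{lemma-s-gamma-psi} for the ``dominance or $\Gamma$ small'' dichotomy, Corollaries~\ref{cor-phi-gen} and~\ref{cor-psi-gen} for the dominated potential, and absorption of the $\Gamma < c\ln(n)$ sub-case into the additive $\frac{c\ln(n)}{n}$ term. The only nit is that the decrease supplied by Lemma~\ref{lemma-psi-easy-case} is $\frac{\alpha f}{8n}\Psi$, not $\frac{\alpha f}{16n}\Psi$; with the correct value the arithmetic in the mixed sub-case gives $-\frac{\epsilon}{24n}\Psi$ and hence the stated rate $\frac{\epsilon}{24n(1+\epsilon\gamma_1)}$, exactly as in the paper.
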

\begin{proof}
    We consider the following cases on intervals of values for $s_i$.
\begin{itemize}
\item \textbf{Case 1:} $s_{(n\gamma_1)} \ge 0$ and $s_{(n\gamma_2)} \le 0$. Using, Lemma~\ref{lemma-phi-easy-case} and Lemma~\ref{lemma-psi-easy-case}, we can immediately see that, $\EE[ \Gamma(t+1) | x(t)] \le (1 - \epsilon^2/16n)\Gamma(t)$ and hence, the result is also true.\\
\item \textbf{Case 2:} $s_{n\gamma_1} \ge s_{n\gamma_2} > 0$. This represents a high load imbalance across the bins. In some cases, $\Phi$ may grow but the asymmetry in the load implies that $\Gamma$ is dominated by $\Psi$. Thus, the decrease in $\Psi$ offsets the increase in $\Phi$ and hence the expected change in $\Gamma$ is negative.\\
  Specifically, if $\EE[\Delta\Phi|x] \le \frac{-\alpha f\Phi}{2n}$, then using Lemma~\ref{lemma-psi-easy-case} we get that $\EE[ \Gamma(t+1) | x(t)] \le (1 - \alpha f/8n)\Gamma(t)$; else we consider the following two cases:\\
\begin{itemize}
\item \textbf{Case 2a:} $\Phi < \epsilon\gamma_1 * \Psi$. Here, using Lemma~\ref{lemma-psi-easy-case} and Corollary~\ref{cor-phi-gen}, we get:
\begin{equation}\nonumber
\begin{aligned}
   \EE[\Delta\Gamma|x] & = \EE[\Delta\Phi|x] + \EE[\Delta\Psi|x]\\
        & \le \frac{(\alpha.f)^2}{n}.\Phi - \frac{\alpha f}{8n} * \Psi\\
        & \le -\frac{\epsilon}{24n}.\Psi\\
        & \le - \frac{\epsilon}{24n(1 + \epsilon\gamma_1)}\Gamma
\end{aligned}
\end{equation}
\item \textbf{Case 2b:} $\Gamma < c\ln(n)$. Here, using Corollary~\ref{cor-psi-gen} and Corollary~\ref{cor-phi-gen}, we get:
\begin{equation}\nonumber
\begin{aligned}
   \EE[\Delta\Gamma|x] & \le \alpha.f/n * \Gamma
        & \le \frac{c\alpha.f\ln(n)}{n}
\end{aligned}
\end{equation}
   But, $c\ln(n)/n - ((\epsilon/4n) * \Gamma) \ge c\ln(n)/n(1 - \epsilon/4) \ge c\ln(n)/n(1 - \epsilon/2) \ge \frac{c\alpha.f\ln(n)}{n}$. Hence,$\EE[\Delta\Gamma|x] \le - \frac{\epsilon\Gamma}{4n} + \frac{c\ln(n)}{n}$.
\end{itemize}
\item \textbf{Case 3:} $s_{n\gamma_2} \le s_{n\gamma_1} < 0$. Here, if $\EE[\Delta\Psi|x] \le \frac{-\epsilon}{16n}\Psi$, then using Lemma~\ref{lemma-phi-easy-case}, we get that $\EE[ \Gamma(t+1) | x(t)] \le (1 - \epsilon/16n)\Gamma(t)$; else we consider the following two cases:
\begin{itemize}
\item \textbf{Case 3a:} $\Psi < \epsilon\gamma_1 * \Phi$. Here, using Lemma~\ref{lemma-phi-easy-case} and Corollary~\ref{cor-psi-gen}, we get:
\begin{equation}\nonumber
\begin{aligned}
   \EE[\Delta\Gamma|x] & = \EE[\Delta\Phi|x] + \EE[\Delta\Psi|x]\\
        & \le -(\epsilon/4n)\Phi + \alpha.f/n * \Psi\\
        & \le -(\epsilon/4n).\Phi + (\gamma_1\epsilon^2/2n)\Phi\\
        & \le \frac{-\epsilon}{8n}\Phi\\
        & \le \frac{-\epsilon}{(8n(1+\epsilon\gamma_1))}*\Gamma
\end{aligned}
\end{equation}
\item \textbf{Case 3b:} $\Gamma < c\ln(n)$. Here, using Corollary~\ref{cor-phi-gen} and Corollary~\ref{cor-psi-gen}, we get:
\begin{equation}\nonumber
\begin{aligned}
   \EE[\Delta\Gamma|x] & = \alpha.f/n * \Gamma
        & \le c\alpha.f\ln(n)/n
\end{aligned}
\end{equation}
  Hence, this case follows similarly as \textit{Case 2b} above. 
\end{itemize}
\end{itemize}
%
\end{proof}
  Now, we can prove using induction that the expected value of $\Gamma$ remains bounded.
\begin{theorem}\label{thm-pot-up-bound}
  For any time $t \ge 0$, $\EE[\Gamma(t)] \le \frac{24c(1 + \epsilon\gamma_1)}{\epsilon}\ln(n)$
\end{theorem}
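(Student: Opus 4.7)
The plan is to proceed by straightforward induction on $t$, using the one-step supermartingale-with-drift inequality of Theorem~\ref{thm-super-mart} as the inductive engine. Write $\lambda = \frac{\epsilon}{24n(1+\epsilon\gamma_1)}$ and $\mu = \frac{c\ln(n)}{n}$, so that Theorem~\ref{thm-super-mart} says $\EE[\Gamma(t+1)\mid x(t)] \le (1-\lambda)\Gamma(t) + \mu$. Taking an outer expectation gives the unconditional recurrence
\begin{equation}\nonumber
   \EE[\Gamma(t+1)] \le (1-\lambda)\,\EE[\Gamma(t)] + \mu .
\end{equation}
Our target bound is the fixed point $B \;=\; \mu/\lambda \;=\; \frac{24c(1+\epsilon\gamma_1)}{\epsilon}\ln(n)$ of this recurrence.

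For the base case, recall that at $t=0$ every bin has $s_i(0)=0$, so $\Gamma(0) = \sum_{i=1}^n \frac{2}{i} \le 2\ln(n) + O(1)$. Since $c = \mathrm{poly}(1/\epsilon)$ and $0<\epsilon<1$, the constant $\frac{24c(1+\epsilon\gamma_1)}{\epsilon}$ is comfortably larger than $2$, so $\Gamma(0) \le B$ holds deterministically.

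For the inductive step, suppose $\EE[\Gamma(t)] \le B$. Substituting into the recurrence,
\begin{equation}\nonumber
   \EE[\Gamma(t+1)] \le (1-\lambda) B + \mu = B - \lambda B + \mu = B,
\end{equation}
where the last equality uses $B = \mu/\lambda$. This closes the induction and establishes the theorem.

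The only real subtlety is sanity-checking that the hypothesis of Theorem~\ref{thm-super-mart} holds unconditionally (i.e., for every realization of $x(t)$), because the case analysis there was exhaustive over the sign pattern of $s_{(n\gamma_1)}$ and $s_{(n\gamma_2)}$; once that is granted, iterating expectations is legitimate and the argument above is essentially a one-line fixed-point calculation. No obstacle is expected beyond verifying that the constants $c$ and $\epsilon$ chosen in Section~\ref{sec:unweighted} are the same ones appearing in both Theorem~\ref{thm-super-mart} and the target bound, which they are by construction.
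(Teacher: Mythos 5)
Your proposal is correct and follows essentially the same route as the paper: induction on $t$, taking the outer expectation of the one-step bound from Theorem~\ref{thm-super-mart}, and observing that the stated bound is exactly the fixed point $\mu/\lambda$ of the resulting recurrence. Your slightly more careful handling of the base case ($\Gamma(0)=\sum_{i=1}^n 2/i \le 2\ln(n)+O(1)$, versus the paper's $2\ln(n)$) is an immaterial refinement.
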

\begin{proof}
  Using induction we can prove this claim. For $t = 0$, it is trivially true since $\Gamma(0) = 2\ln(n)$.
Using Theorem~\ref{thm-super-mart}, we get:
\begin{equation}\nonumber
\begin{aligned}
   \EE[\Gamma(t+1)] & = E[E[\Gamma(t+1)| \Gamma(t)]]\\
            & \le \EE[ (1 - \frac{\epsilon}{24n(1+\epsilon\gamma_1)})\Gamma(t) + \frac{c\ln(n)}{n}]\\
            & \le \frac{24c(1 + \epsilon\gamma_1)}{\epsilon}\ln(n) - c\frac{\ln(n)}{n} + \frac{c\ln(n)}{n}\\
            & \le \frac{24c(1 + \epsilon\gamma_1)}{\epsilon}\ln(n)
\end{aligned}
\end{equation}
%
\end{proof}
%
 Now, we can upper bound the gap across all the $D$ dimensions across all the $n$ md-bins. This gap is defined as follows:
\begin{equation}
    Gap(t) = \max_{d=1}^{D} [ \max_{i=1}^{n} x_i^d ]
\end{equation}
%
\begin{theorem}\textbf{Fixed $f$ Case:}\label{thm-fixed-f}
  Using the bias in the probability distribution in favor of lightly loaded md-bins as given by the $d$-choice algorithm, and assuming that $f$ dimensions are exactly populated in each md-ball with uniform distribution of $f$ dimensions over $D$, the expected and probabilistic upper bound on the gap (maximum dimensional gap) across the multidimensional bins is given as follows. Let, $\delta = \frac{24c(1 + \epsilon\gamma_1)}{\epsilon}$, then:
\begin{equation}\nonumber
\begin{aligned}
    E[Gap(t)] & \le 2\log\log(n)/\epsilon + 2\log\log(\delta)/\epsilon\\
    Pr[Gap(t) & > (\frac{mf}{nD})^{1/2+\zeta}*(4\log\log(n)/\epsilon + 4\log(\log\delta)/\epsilon)] \le D/mf\\
\end{aligned}
\end{equation}
\end{theorem}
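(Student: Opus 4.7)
The plan is to convert the potential bound of Theorem~\ref{thm-pot-up-bound} into a tail bound on the sum-gap $s_1(t)$ via Markov's and Jensen's inequalities, and then translate this into a bound on the per-dimension gap $\max_{i,d} x_i^d$ via a Chernoff argument that exploits the uniform choice of the $f$ populated dimensions.

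First, observe that $\Gamma(t) = \sum_{i=1}^n \bigl[e^{\alpha s_i}/i + e^{-\alpha s_i}/i\bigr] \ge e^{\alpha s_1(t)}$, by keeping only the $i=1$ term of $\Phi(t)$. Combining with Theorem~\ref{thm-pot-up-bound} yields $\EE\bigl[e^{\alpha s_1(t)}\bigr] \le \delta \ln n$. Jensen's inequality then gives $\EE[s_1] \le (1/\alpha)\ln(\delta \ln n) = (2f/\epsilon)(\ln\ln n + \ln \delta)$, so $\EE[s_1/f] \le (2/\epsilon)(\ln\ln n + \ln \delta)$. Markov's inequality gives, for any $T > 0$, $\Pr[s_1 \ge T] \le \delta \ln n \cdot e^{-\alpha T}$, so choosing $T$ with $T = (2f/\epsilon)(\ln\ln n + \ln \delta + \ln(mf/D))$ makes this tail probability at most $D/(2mf)$.

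Next, I would translate to the per-dimension gap. Write $b_i$ for the number of md-balls in bin $i$; then $s_i = f(b_i - m/n)$, and so
\begin{equation}\nonumber
x_i^d \;=\; \mathrm{load}_i^d - \frac{mf}{nD} \;=\; \frac{s_i}{D} \;+\; \Bigl(\mathrm{load}_i^d - \frac{b_i f}{D}\Bigr),
\end{equation}
where the first piece is at most $s_1/D \le s_1/f$. Conditional on $b_i$, the uniform $f$-subset assumption makes $\mathrm{load}_i^d$ a sum of $b_i$ independent $0/1$ indicators each equal to $1$ with probability $f/D$, so Chernoff gives $\Pr\bigl[\mathrm{load}_i^d - b_if/D \ge t\bigr] \le \exp\bigl(-t^2/(2\,b_if/D)\bigr)$. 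On the high-probability event $b_i \le m/n + s_1/f$ (from the Markov bound on $s_1$), we have $b_if/D \lesssim mf/(nD)$, and a union bound over the $nD$ pairs $(i,d)$ forces $t = (mf/nD)^{1/2+\zeta}$ for an appropriately chosen small $\zeta>0$ to drive the aggregate Chernoff failure probability below $D/(2mf)$.

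Combining the two tail events gives the probabilistic statement: with probability at least $1 - D/mf$, $\max_{i,d} x_i^d \le s_1/f + (mf/nD)^{1/2+\zeta}$, which is bounded above by the expression in the theorem. For the expected statement, taking expectations gives $\EE[\max_{i,d} x_i^d] \le \EE[s_1/f] + O(1)$, from which the $O(\ln\ln n/\epsilon)$ bound follows. The main technical obstacle is synchronizing the two tail arguments: the Chernoff deviation on $\mathrm{load}_i^d$ is only controlled conditional on a bound on $b_i$, which itself must come from the potential-based Markov estimate on $s_1$; one therefore has to stitch the two events together carefully, and tune $\zeta$ so that $nD \exp\bigl(-(mf/nD)^{2\zeta}/2\bigr) \le D/(2mf)$ while keeping the threshold $t = (mf/nD)^{1/2+\zeta}$ as small as possible.
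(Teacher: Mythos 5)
Your proposal is correct and follows essentially the same route as the paper's own proof: bound $\EE[e^{\alpha s_a}]$ via Theorem~\ref{thm-pot-up-bound}, apply Jensen/Markov to the sum-load gap, divide by $f$ to get the ball-count gap, and then use a Chernoff bound on the Binomial$(m/n, f/D)$ per-dimension load with threshold $(mf/nD)^{1/2+\zeta}$ to combine the two tail events. Your version is, if anything, slightly more careful than the paper's (explicit decomposition $x_i^d = s_i/D + (\mathrm{load}_i^d - b_i f/D)$, conditioning on $b_i$, and the union bound over $(i,d)$ pairs), but the key ideas are identical.
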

\begin{proof}

Let, $a$ be the winning md-bin and $m$ be the winning dimension that represents $Gap(t)$. Now,
from Theorem~\ref{thm-pot-up-bound}, we get, $\EE[ e^{\alpha.s_a} ] \le \delta\log(n)$. So, 
$\EE[ e^{\alpha.(x_a^m + \sum_{d \ne m} x_a^d)} ] \le \delta\log(n)$. 
  Let, $y_a$ denote the gap as measured by the number of md-balls in bin $a$ minus the average number of balls across the bins. Then,
\begin{equation}\label{eq:sum-bound}
\begin{aligned}
    E[s_a] & \le 1/\alpha * \log\log(n) + 1/\alpha * \log\log(\delta)\\
    \Rightarrow E[s_a] & \le 2f\log\log(n)/\epsilon + O(2f\log\log(\delta)/\epsilon)\\
    \Rightarrow f.E[y_a] & \le 2f\log\log(n)/\epsilon + O(2f\log\log(\delta)/\epsilon)\\
    \Rightarrow E[y_a] & \le 2\log\log(n)/\epsilon + O(2\log\log(\delta)/\epsilon)
\end{aligned}
\end{equation}
  The third inequality uses the fact that each ball has exactly $f$ populated dimensions. Since the $f$ dimensions are chosen uniformly and randomly from $D$ dimensions, the expected gap in any dimension (and hence the winning dimension with the maximum gap) is bounded by $O(\log\log(n))$.
  Now, consider the case of a non-uniform distribution, where we assume that each dimension is chosen with probability at most $\kappa_2$ in each md-ball and each md-ball still has fixed $f$ populated dimensions. Here, one can see that the expected gap can be bounded by $O(\kappa_2\log\log(n))$.

%
Now, the $Pr[s_a > 4f\log\log(n)/\epsilon + 4f\log\log(\delta)/\epsilon] \le
Pr[ \Gamma(t) \ge nE[\Gamma(t)]] \le 1/n$ (using Markov's Inequality); where $s_a = \sum_{d=1}^{D} x_a^d$. Further, the probability that within a single md-bin, a particular dimension has more than the expected number of $1s$, can be given by the Chernoff Bound as follows. Let $m/n$ balls be thrown into an md-bin. The number of ones in any dimension follows a Binomial distribution, $B(m/n, f/D)$. Using Chernoff Bound, and assuming $t = (\frac{mf}{nD})^{1/2 + \zeta}$, we have:
\begin{equation}
\begin{aligned}
   Pr[B(m/n, f/D) > (mf/nD + t)] & \le (\frac{mf/nD}{mf/nD + t})^{mf/nD + t} * e^{t}\\
   \Rightarrow Pr[B(m/n, f/D) > (mf/nD + t)] & \le nD/mf
\end{aligned}
\end{equation}

Hence, $Pr[y_a > (\frac{mf}{nD})^{1/2 + \zeta}*(4\log\log(n)/\epsilon + 4\log\log(\delta)/\epsilon)] \le 1/n * nD/mf = D/mf$. 
%
\end{proof}
\subsection{Lower Bound for Unweighted Case}

We can show that the expected upper bound, for fixed $f$ case with uniform distribution, proved in section~\ref{sec:unweighted} is tight to within $f/D$ factor. Consider the case when, $m$ balls are thrown into $n$ bins, using the $d$ choice process. The expected dimensional sum load per bin is $fm/n$. Berenbrink et.al.~\cite{petra-heavy-case} show that when $m >> n$ balls are thrown using the $d$ choice process into $n$ bins, then the load of the most loaded bin is at least $O(\ln\ln(n))$ balls more than the average $m/n$. Thus, for md-balls the sum load of the most loaded md-bin is at least $\Omega(f\ln\ln(n) + fm/n)$. Since, each ball has $f$ populated dimensions, hence, there are at least $\Omega(\ln\ln(n) + m/n)$ balls in this max sum load bin. Since, in each ball $f$ dimensions are uniformly distributed over $D$ dimensions, there exists a dimension whose load is at least $\Omega(f\ln\ln(n)/D)$ more than the average $mf/nD$. Hence, the lower bound is $O(f\ln\ln(n)/D)$.
\subsection{Parallel Multidimensional Balls \& Bins: Unweighted Case}
\par Consider the following parallel $d$-choice process. Let $m$ balls be thrown in parallel using $d$-choice process into $n$ bins. In each round, a bin sends the (ball's) rank to the ball with the lowest ID. The ball chooses the bin (out of $d$ bins it selected) that gives the lowest rank.  It can be shown that this parallel process produces exactly the same distribution of balls in the bins as a sequential \textit{Greedy with Ties} process~\cite{adler95}. In the sequential \textit{Greedy with Ties} process, when there are multiple bins with same lowest load, all of these bins get the ball. Using the potential function analysis as above, we can show that the gap in this case, can also be bounded by $O(\log\log(n))$. We provide an overview of the proof below.
Consider the change in $\Phi(t)$ (also refers to $\Phi$ by default) and $\Psi(t)$ (also refers to $\Psi$ by default) separately when a ball is thrown with the given probability distribution. 
\begin{lemma}\label{lemma-phi-base-par}
  When an md-ball is thrown into an md-bin, the following inequality holds:
\begin{equation}
   \EE[ \Phi(t+1) - \Phi(t) | x(t) ] \le \sum_{i=1}^{n} [ d*p_i * (\alpha.f + (\alpha.f)^2)  - d\alpha.f/n ]. e^{\alpha.s_i} 
\end{equation}
\end{lemma}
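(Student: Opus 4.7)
The plan is to mirror the derivation of Lemma~\ref{lemma-phi-base} step by step, adjusting only for the fact that in the parallel / Greedy-with-Ties process a single throw can simultaneously increment up to $d$ bins, namely those tied for the minimum $s_i$ value among the $d$ sampled bins. Conditioning on the random winning set $W$ with $|W| \in \{1,\ldots,d\}$, the Markov update becomes $r_i = s_i + f(1 - |W|/n)$ for $i \in W$ and $r_j = s_j - f|W|/n$ for $j \notin W$, which preserves $\sum_i r_i = \sum_i s_i$. Each winner still lands at or before the start of its previous equi-load group after re-sorting, so the per-bin contribution to $\Delta\Phi$ retains the same form as in the sequential proof: $\Phi_i(e^{\alpha f(1 - |W|/n)} - 1)$ for $i \in W$ and $\Phi_j(e^{-\alpha f|W|/n} - 1)$ for $j \notin W$.

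I would next take expectation over the uniformly random $d$-subset and bound the positive and negative terms separately. Let $q_i := \Pr[i \in W]$. For the positive term, use $q_i \le d \cdot p_i$ with $p_i = (i/n)^d - ((i-1)/n)^d$, established by a union bound across the $d$ slots of the sampled tuple: the event $\{i \in W\}$ forces bin $i$ to occupy at least one slot whose sequential-style contribution is $p_i/d$ by symmetry, and Greedy-with-Ties tie-breaking at most preserves this per-slot probability. For the negative term, $|W| \le d$ always, so $\EE[e^{-\alpha f|W|/n}-1] \lessapprox -\alpha f d/n$ when absorbed into the $\Phi$ sum. Applying the same Taylor approximations $e^{-\alpha f d/n}(e^{\alpha f}-1) \lessapprox \alpha f + (\alpha f)^2$ and $e^{-\alpha f d/n} - 1 \lessapprox -\alpha f d/n$ as in Lemma~\ref{lemma-phi-base} (valid because $\alpha f d/n$ remains small), the two contributions consolidate into the claimed sum $\sum_{i=1}^{n}[d \cdot p_i (\alpha f + (\alpha f)^2) - d\alpha f/n] \cdot e^{\alpha s_i}$.

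The main technical obstacle is the probabilistic bound $q_i \le d \cdot p_i$, since the Greedy-with-Ties winning set $W$ is not produced by $d$ independent sequential trials: its members share the same sampled $d$-tuple and a common minimum-load threshold, inducing correlations. I would resolve this by the slot-level union-bound argument above, checking that each of the $d$ slot-specific events contributes at most $p_i/d$ by symmetry of uniform sampling, and verifying that ties do not inflate the bound because a bin that wins via a tie still satisfies the sequential weak-minimum condition, and the resulting extra mass is absorbed into the $O(\alpha^2 f^2/n)$ terms that we already discard. Once this combinatorial bound is in place, the rest of the derivation is a line-by-line repeat of the sequential calculation in Lemma~\ref{lemma-phi-base} with a single factor of $d$ propagating cleanly through both the winning and losing contributions.
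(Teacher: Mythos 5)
Your plan diverges from the paper at the point where you introduce the winning set $W$ and the membership probabilities $q_i=\Pr[i\in W]$, and that is where it breaks. The paper's proof never bounds $\Pr[i\in W]$; it simply takes the worst case in which all $d$ sampled bins lie in one equi-load group and every one of them receives a copy of the ball, so the winning contribution is multiplied by $d$ (all winners share the same $e^{\alpha s_i}$ value), the weight $p_i$ is kept as is, and the renormalization is the full $-df/n$ per bin; the Taylor estimates are then identical to Lemma~\ref{lemma-phi-base}. Your substitute claim $q_i\le d\,p_i$ is not delivered by the slot-level union bound and is false in general: the event $\{i\in W\}$ only requires that no sampled bin be \emph{strictly} lighter than bin $i$, so its probability is governed by how far bin $i$'s equi-load group extends, not by $p_i$. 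Concretely, if all bins are tied, then $q_i\approx d/n$ for every $i$, while $d\,p_1=d/n^{d}$, so the per-slot event is nowhere near $p_i/d$ and ties inflate the probability rather than "at most preserving" it; the excess is of order $d/n$, not something absorbable into discarded $O((\alpha f)^2/n)$ terms.

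There is a second, independent gap in your treatment of the negative term. From $|W|\le d$ you get $e^{-\alpha f|W|/n}-1\gtrsim -\alpha f d/n$, i.e.\ the inequality points the wrong way: an upper bound on $|W|$ only shows the non-winners decrease by \emph{at most} $\alpha f d/n$, so you cannot conclude the $-d\alpha f/n$ drift that appears on the right-hand side of the lemma. In your conditional framework the case $|W|=w<d$ (in particular $w=1$, where the step coincides with the sequential process) gives only a $-w\alpha f/n$ drift together with a smaller positive term, and the claimed inequality does not follow from the steps you list; the paper sidesteps this entirely by committing to the all-$d$-tied worst case before taking expectations. To repair your argument you would either have to adopt the paper's worst-case assumption explicitly or carry out a genuine case analysis over $|W|=w$, proving a $w$-dependent version of both terms and showing it is dominated by the stated bound, which your sketch does not do.
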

\begin{proof}
  In the Greedy with Ties process, some number (less than $d$) of bins each with the same load (and hence belonging to the same equi-load group) can get the (\textit{replicated}) ball. In the worst case all the $d$ randomly selected bins, chosen by the ball, have the same load and hence get the md-ball. All of these md-bins, then move to the previous equi-load group or a new equi-load group is created. Let $\Delta$ be the expected change in $\Phi$ when the ball is put in a certain number (less than $d$) of bins. If one of these bins is $i$, then, $r_i(t+1) = s_i + f(1-d/n)$. For bins $j \ne i$, that do not get the md-ball, $r_j(t+1) = s_j(t) - df/n$. The new values i.e. $s(t+1)$ are obtained by sorting $r(t+1)$ and $\Phi(s) = \Phi(r)$.  Thus, the expected contribution of bin, $i$, to $\Delta$ is given as follows:
\begin{equation}\nonumber
\begin{aligned}
   \EE[ \frac{e^{\alpha.(s_i + f(1-d/n))}}{i} ] - \frac{e^{\alpha.s_i}}{i}
   & = \frac{e^{\alpha.s_i}}{i} [ e^{\alpha.f(1-d/n)} - 1 ]
\end{aligned}
\end{equation}
 Similarly, the expected contribution of bin (that does not get the  ball), $j$ ($j \ne i$) to $\Delta$ is given as:
\begin{equation}\nonumber
\begin{aligned}
   \EE[ \frac{e^{\alpha.(s_j - df/n)}}{j} ] - \frac{e^{\alpha.s_j}}{j}
   & = \frac{e^{\alpha.s_j}}{j} [ e^{-\alpha.df/n} - 1 ]
\end{aligned}
\end{equation}
 Assuming that the bins that get the replicated ball are $i_1, i_2,..i_d$, $\Delta$ is given as follows:
\begin{equation}\nonumber
\begin{aligned}
   \Delta & = d\Phi_i [ e^{\alpha.f(1-d/n)} - 1] \, + \,
              \sum_{j \ne (i_1, i_2,..i_d)} \Phi_j(e^{-\alpha.df/n} - 1)\\
   & = d\Phi_ie^{-\alpha df/n} (e^{\alpha.f} - 1) + (e^{-\alpha.df/n} - 1).\Phi
\end{aligned}
\end{equation}
  Thus, we get the overall expected change in $\Phi$ as follows:
\begin{equation}
\begin{aligned}
   \EE[ \Phi(t+1) - \Phi(t) | x(t) ] & = \sum_{i=1}^{n} p_i * \Delta\\
   & = \sum_{i=1}^{n} p_i * [ d\Phi_ie^{-\alpha df/n} (e^{\alpha.f} - 1) + (e^{-\alpha.fd/n} - 1).\Phi ]\\
   & = \sum_{i=1}^{n} p_i * de^{-\alpha df/n}\Phi_i (e^{\alpha.f} - 1) +                        (e^{-\alpha.df/n} - 1).p_i.\Phi\\
   & = \sum_{i=1}^{n} [ p_i * de^{-\alpha.df/n} (e^{\alpha.f} - 1) + (e^{-\alpha.df/n} - 1) ]\Phi_i
\end{aligned}
\end{equation}
 Now, $e^{(-\alpha.fd/n)} * (e^{\alpha.f} - 1)$ can be approximated as follows:
\begin{equation}\nonumber
\begin{aligned}
   e^{(-\alpha.df/n)}.(e^{\alpha.f} - 1) & \le
   (1 - \alpha.df/n + (\alpha.df/n)^2) * (1 + \alpha.f + (\alpha.f)^2 - 1)\\
   & \sim \alpha.f + (\alpha.f)^2 + O((\alpha.df)^2/n)\\   
   \Rightarrow e^{(-\alpha.df/n)}.(e^{\alpha.f} - 1) & \lessapprox (\alpha.f + (\alpha.f)^2)
\end{aligned}
\end{equation}
Above, since, $(\alpha.fd)^2/n$ is very small for large $n$, we have ignored the small terms. Similarly, $(e^{-\alpha.df/n} - 1) \lessapprox -\alpha.df/n$
Hence, the expected change in $\Phi$ can be given by:
\begin{equation}\label{eq-phi-base}
\begin{aligned}
   \EE[ \Phi(t+1) - \Phi(t) | x(t) ]
   \le \sum_{i=1}^{n} [ p_i * d(\alpha.f + (\alpha.f)^2)  - \alpha.df/n ]\Phi_i
\end{aligned}
\end{equation}
%
\end{proof}
Similarly, one can show the following.
\begin{lemma}\label{lemma-psi-base-par}
  When an md-ball is thrown into an md-bin, the following inequality holds:
\begin{equation}
   \EE[ \Psi(t+1) - \Psi(t) | x(t) ] \le \sum_{i=1}^{n} [ d*p_i * (-\alpha.f + (\alpha.f)^2)  + d\alpha.f/n ]. e^{-\alpha.s_i} 
\end{equation}
\end{lemma}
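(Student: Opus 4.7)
The plan is to mirror the argument of Lemma~\ref{lemma-phi-base-par} almost verbatim, with the key change being the sign of the exponent in the potential and the corresponding swap of which Taylor expansions become positive versus negative. Because $\Psi$ is invariant under the sorting step (it is symmetric in the bin indices only through the $1/i$ factor attached to each $e^{-\alpha s_i}$, and we again exploit the equi-load-group structure so that ball-receiving bins settle at their old position or in a newly created group at the front of the old group), I can reuse the same index bookkeeping that was established in the proof of Lemma~\ref{lemma-phi-base-par}.

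First I would fix a round of the Greedy-with-Ties parallel process and condition on which bins $i_1,\dots,i_k$ (with $k\le d$ and all belonging to the same equi-load group) receive a copy of the ball. For any such receiving bin $i$, the update is $r_i = s_i + f(1-d/n)$, so the contribution to $\Delta\Psi$ is
\begin{equation}\nonumber
\frac{e^{-\alpha(s_i + f(1-d/n))}}{i} - \frac{e^{-\alpha s_i}}{i} \;=\; \frac{e^{-\alpha s_i}}{i}\bigl[e^{-\alpha f(1-d/n)} - 1\bigr].
\end{equation}
For every non-receiving bin $j$, the update is $r_j = s_j - df/n$, contributing
\begin{equation}\nonumber
\frac{e^{-\alpha s_j}}{j}\bigl[e^{\alpha df/n} - 1\bigr].
\end{equation}
Summing, collecting the receiving bins into a factor of $d$ (at worst), and factoring out the common $e^{\alpha df/n}$ exactly as in Lemma~\ref{lemma-phi-base-par} yields
\begin{equation}\nonumber
\Delta \;\le\; d\,\Psi_i\, e^{\alpha df/n}(e^{-\alpha f}-1) \;+\; (e^{\alpha df/n}-1)\,\Psi.
\end{equation}

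Next I would average over the random choice of group, weighting by $p_i$, to obtain
\begin{equation}\nonumber
\EE[\Psi(t+1)-\Psi(t)\mid x(t)] \;\le\; \sum_{i=1}^{n}\Bigl[d\,p_i\, e^{\alpha df/n}(e^{-\alpha f}-1) \;+\; (e^{\alpha df/n}-1)\Bigr]\Psi_i.
\end{equation}
Finally I would apply the same Taylor approximations used in Lemma~\ref{lemma-phi-base-par}, but with the roles of the two exponentials swapped. Since $\alpha f = \epsilon/2 \le 1/8$ and $(\alpha df)^2/n$ is negligible, one gets
\begin{equation}\nonumber
e^{\alpha df/n}(e^{-\alpha f}-1) \;\lessapprox\; -\alpha f + (\alpha f)^2, \qquad e^{\alpha df/n}-1 \;\lessapprox\; \alpha df/n,
\end{equation}
which, substituted into the previous display, gives exactly the claimed bound.

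I do not expect any real obstacle: the only thing to watch is that in the $\Psi$ case the sign pattern is flipped, so the receiving-bin term is negative in the first order ($-\alpha f$) while the non-receiving correction is positive ($+\alpha df/n$); this is precisely the opposite of the $\Phi$ case and explains the signs in the statement. The lower-order $O((\alpha df)^2/n)$ remainders are absorbed into the $(\alpha f)^2$ term just as before, so no new estimates are needed beyond those already used for Lemma~\ref{lemma-phi-base-par}.
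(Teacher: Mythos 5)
Your overall strategy---mirroring the proof of Lemma~\ref{lemma-phi-base-par} with the sign of the exponent flipped---is exactly what the paper intends: it gives no separate proof of this lemma, only the remark that it follows ``similarly.'' Your bookkeeping of the two contributions (a receiving bin contributes $\frac{e^{-\alpha s_i}}{i}\bigl[e^{-\alpha f(1-d/n)}-1\bigr]$, a non-receiving bin contributes $\frac{e^{-\alpha s_j}}{j}\bigl[e^{\alpha df/n}-1\bigr]$), the factoring into $k\Psi_i e^{\alpha df/n}(e^{-\alpha f}-1)+(e^{\alpha df/n}-1)\Psi$, and the Taylor estimates $e^{\alpha df/n}(e^{-\alpha f}-1)\lessapprox -\alpha f+(\alpha f)^2$ and $e^{\alpha df/n}-1\lessapprox \alpha df/n$ are the correct mirror image of the paper's computation and are at the same level of rigor.

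The one step that does not transfer by symmetry is your phrase ``collecting the receiving bins into a factor of $d$ (at worst).'' In the $\Phi$ case the receiving-bin term $e^{-\alpha df/n}(e^{\alpha f}-1)$ is positive, so assuming all $d$ sampled bins tie and all receive the ball maximizes the change and legitimately yields an upper bound. In the $\Psi$ case the receiving-bin term $e^{\alpha df/n}(e^{-\alpha f}-1)$ is negative, so with $k\le d$ actual receivers the change $\Delta_k = k\,\Psi_i\,e^{\alpha df/n}(e^{-\alpha f}-1)+(e^{\alpha df/n}-1)\Psi$ is decreasing in $k$: taking $k=d$ gives the most negative value, i.e.\ a lower bound on $\Delta\Psi$, not an upper bound. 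Hence ``at worst'' points in the wrong direction, and the stated inequality with the factor $d$ multiplying $(-\alpha f+(\alpha f)^2)$ is justified only in the all-tie configuration $k=d$ (or if $d$ is reinterpreted as the realized number of receivers); a bound valid for every $k\ge 1$ keeps only a factor $1$ on the negative term, or drops it altogether, which is all that the subsequent use in the style of Corollary~\ref{cor-psi-gen} ($\EE[\Delta\Psi\,|\,x(t)]\le d\alpha f\Psi/n$) actually needs. This imprecision is arguably inherited from the paper's own statement, but your write-up should either prove the weaker factor-$1$ bound and note it suffices downstream, or state explicitly that the displayed bound is taken over the worst-case tie pattern rather than claiming $k=d$ is the worst case for $\Psi$.
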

Following similar lines of proof as for the sequential multidimensional case, one can hence show that:\\
\begin{theorem}\label{thm-pot-up-bound-par}
  For any time $t \ge 0$, $\EE[\Gamma(t)] \le \frac{24cd(1 + \epsilon\gamma_1)}{\epsilon}\log(n)$
\end{theorem}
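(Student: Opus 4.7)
The plan is to follow the inductive argument of Theorem~\ref{thm-pot-up-bound} verbatim, with the single-ball drift bounds replaced by the parallel drift bounds of Lemma~\ref{lemma-phi-base-par} and Lemma~\ref{lemma-psi-base-par}. The key observation is that those two lemmas are obtained from the sequential versions by scaling both the linear drift term $\alpha f$ and the mean-preserving correction $\alpha f/n$ by a factor of $d$, while the quadratic error $(\alpha f)^2$ picks up at most a constant-order correction that can be absorbed for constant $d$. Thus, re-running the reasoning of Corollary~\ref{cor-phi-gen} and Corollary~\ref{cor-psi-gen}, I would first record the coarse bounds
\begin{equation}\nonumber
\EE[\Delta\Phi \mid x(t)] \le d(\alpha f)^2 \Phi / n, \qquad \EE[\Delta\Psi \mid x(t)] \le d\alpha f \Psi / n.
\end{equation}

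Next I would re-derive parallel analogs of Lemma~\ref{lemma-phi-easy-case}, Lemma~\ref{lemma-psi-easy-case}, Lemma~\ref{lemma-s-gamma-phi}, and Lemma~\ref{lemma-s-gamma-psi}. In every case the scheme is identical: split the sum over bin indices at $n\gamma_2$ (or $n\gamma_3$), bound each partial sum of $p_i/i$ using the explicit two-choice formula $p_i = (2i-1)/n^2$, and exploit that $\Phi_i$ is non-increasing while $p_i$ is non-decreasing. The extra factor of $d$ from Lemma~\ref{lemma-phi-base-par} multiplies every term, so the ``easy case'' bounds sharpen to $\EE[\Phi(t+1) \mid x(t)] \le (1 - d\alpha f/2n)\Phi$ and the symmetric statement for $\Psi$, while the ``hard case'' lemmas still give that either the opposing potential dominates by an $\epsilon\gamma_1$ factor or $\Gamma \le c\ln(n)$ for the same $c = \mathrm{poly}(1/\epsilon)$, since the threshold for $\Gamma$ is set by a ratio of $\Phi_i$ sums from which the extra $d$ cancels.

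Feeding these into the case analysis of Theorem~\ref{thm-super-mart} yields the parallel super-martingale inequality
\begin{equation}\nonumber
\EE[\Gamma(t+1) \mid x(t)] \le \left(1 - \frac{d\epsilon}{24n(1+\epsilon\gamma_1)}\right)\Gamma(t) + \frac{cd\ln(n)}{n}.
\end{equation}
Induction on $t$, with base case $\Gamma(0) = 2\ln(n)$, then gives $\EE[\Gamma(t)] \le \frac{24cd(1+\epsilon\gamma_1)}{\epsilon}\ln(n)$, exactly as claimed. The main obstacle is to confirm that the $O((\alpha f d)^2/n)$ error absorbed into the approximation $e^{-\alpha f d/n}(e^{\alpha f}-1) \lessapprox \alpha f + (\alpha f)^2$ remains negligible in the parallel setting; this is fine for constant $d$ (and, more generally, whenever $d = o(\sqrt{n}/\alpha f)$), so the entire case analysis of Theorem~\ref{thm-super-mart} carries over without further modification.
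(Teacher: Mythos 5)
Your proposal is correct and follows essentially the same route as the paper, which itself only says to repeat the sequential potential-function argument (Lemmas~\ref{lemma-phi-easy-case}--\ref{lemma-s-gamma-psi} and Theorems~\ref{thm-super-mart},~\ref{thm-pot-up-bound}) with the $d$-scaled drift bounds of Lemmas~\ref{lemma-phi-base-par} and~\ref{lemma-psi-base-par}; you simply fill in those details. Note in passing that your sharper super-martingale inequality, in which the decay rate also carries the factor $d$, actually yields the fixed point $\frac{24c(1+\epsilon\gamma_1)}{\epsilon}\ln(n)$ with no $d$, which is stronger than and hence implies the stated bound (the paper's constant corresponds to keeping the sequential decay rate while inflating only the additive term by $d$).
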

Thus, this parallel balls and bins process with $m >> n$ balls and $n$ bins, takes $O(\frac{m}{n} + \log\log(n))$ rounds and results in maximum bin load $O(\frac{m}{n} + \log\log(n))$ resulting in upper bound on the gap of $O(\log\log(n))$. Hence, one can derive that the gap (using Theorem~\ref{thm-fixed-f}) for the multidimensional parallel scenario is also bounded by $O((mf/D)^{(1/2+\zeta)}\log\log(n))$ with high probability.
\subsection{Upper Bound On Gap: Weighted Case}\label{sec:weighted}
\par Here, we consider the case when the multidimensional balls have variable number of populated dimensions, $f$. The sum of dimensional load in an md-ball, $f$, is thus a random variable. We assume that the distribution for $f$ has a finite second moment and average value, $f^*$. For this distribution, we assume that there is a
$\lambda > 0$ such that the moment generating function $M[\lambda] =
E[e^{\lambda.f} ] < \infty$. Note that $M''(z) = E[f^2e^{zf}] \le \sqrt{E[f^4]E[e^{2zf}]}$. The above assumption implies that there is a $S \ge 1$,
such that for every $|z| < \lambda/2$ it holds that $M''(z) < 2S$. Our analysis below is primarily for integer valued $f$ and for the multidimensional case. However, it can be easily seen that similar analysis holds for scalar balls and bins with real valued weight per ball $W$ and $E[W] = 1$ (still assuming that the distribution of $W$ has finite second moment). 
\par The weighted case is more challenging that the unweighted case, since we have to carefully consider the change in the rank of a bin when an md-ball of total dimensional load (weight) $f$ falls in it, as the change in rank could increase the potential by a large amount. Thus, the potential function used in section~\ref{sec:unweighted} might not work in this case and we need to devise a new one. Assume that $\epsilon \le 1/4$. Further, let $\alpha = \min{(\frac{\epsilon}{6S}, \frac{2}{\lambda}, \frac{\epsilon}{2f^*})}$. Define the following potential functions over the bins:
\begin{equation}
\begin{aligned}
     \Phi(t) & = \Phi(s(t)) = \sum_{i=1}^{n} \frac{e^{\alpha.s_i}}{n^2+i}\\
     \Psi(t) & = \Psi(s(t)) = \sum_{i=1}^{n} \frac{e^{-\alpha.s_i}}{n^2+n-i+1}\\     
     \Gamma(t) & = \Gamma(s(t)) = \Phi(t) + \Psi(t)\\
               & =  \sum_{i=1}^{n} [ \frac{e^{\alpha.s_i}}{n^2+i} + \frac{e^{-\alpha.s_i}}{n^2+n-i+1} ]\\   
\end{aligned}
\end{equation}
  where, $s_i(t) = \sum_{d=1}^{D} x_i^d(t)$

In the beginning, each dimension for each bin has $0$ weight, thus $s_i = 0, \forall  i$ and hence, $\Gamma(0) \le 2(n/(n^2 + 1)) \le 2/n$. We show that if $\Gamma(x(t)) \ge a/n$ for some $a > 0$, then $\mathbb{E}[\Gamma(t+1) | x(t)] \le (1 - \frac{\alpha.f^*}{16n(1 + \epsilon\gamma_1)}) * \Gamma(t)$. This helps in demonstrating that for every given $t$, $\mathbb{E}[\Gamma(t)] \in O(1/n)$. This implies that the maximum gap is $O(\log(n))$ w.h.p.

First, consider the change in $\Phi(t)$ (also refers to $\Phi$ by default) and $\Psi(t)$ (also refers to $\Psi$ by default) separately when a ball is thrown with the given probability distribution. Let there be constants, $0 < \gamma_1 < \gamma_2 < 1/2 < \gamma_4 < \gamma_3$, such that $\gamma_2 + \gamma_3 > 1$ and $\gamma_1 + \gamma_4 < 1$ and $\gamma_2 < 7/16$
\begin{lemma}\label{lemma-phi-base-w}
  When an md-ball is thrown into an md-bin, the following inequality holds:
\begin{equation}
   \EE[ \Phi(t+1) - \Phi(t) | x(t) ] \le \sum_{i=1}^{n} [ p_i * (\alpha.f^* + 1/n + S\alpha^2)  - \alpha.f^*/n ]. e^{\alpha.s_i} 
\end{equation}
\end{lemma}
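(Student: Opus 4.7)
The plan is to mirror the unweighted argument of Lemma~\ref{lemma-phi-base} while handling three new complications forced by the weighted setting: (i) the increment $f$ is now a random variable whose exponential moments we control only via the Taylor remainder bound $M''(z) < 2S$; (ii) the chosen bin need not land at the start of an equi-load ``group'' (there are no such groups once weights are real-valued), so its new sorted index $j'$ can lie anywhere in $[1,j]$; and (iii) we therefore need a potential whose coefficient is tolerant to rank shifts. This last point motivates the denominator $n^2+i$: a direct calculation will show that for $0 \le j' \le j \le n$ we have $\tfrac{1}{n^2+j'} \le (1+\tfrac{1}{n})\tfrac{1}{n^2+j}$, because $n(j-j') \le n^2 \le n^2 + j'$. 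The same denominator also ensures that when a non-chosen bin shifts down by one position (from $i$ to $i+1$), its coefficient can only decrease.

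First I would condition on bin $j$ being the bin chosen by the allocation rule (probability $p_j$). After placement and re-sorting, bin $j$ has load $s_j + f(1-1/n)$ and sits at a new position $j' \le j$; the bins previously at positions $j', j'+1, \ldots, j-1$ each shift down by one slot, while all other bins keep their indices. Using the denominator inequality above, bin $j$ contributes at most $(1+\tfrac{1}{n})\tfrac{e^{\alpha(s_j + f(1-1/n))}}{n^2+j}$ to $\Phi(t+1)$, and every other bin $i$ contributes at most $\tfrac{e^{\alpha(s_i - f/n)}}{n^2+i}$ (its coefficient is no larger and its exponent has decreased by $\alpha f/n$). Subtracting the old contributions yields a per-bin change bounded by $\tfrac{e^{\alpha s_j}}{n^2+j}\bigl[(1+\tfrac{1}{n})e^{\alpha f(1-1/n)} - 1\bigr]$ for the chosen bin and $\tfrac{e^{\alpha s_i}}{n^2+i}\bigl[e^{-\alpha f/n} - 1\bigr]$ for the rest.

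Next I would take expectation over the random weight $f$. Taylor's theorem together with the hypothesis $M''(z) < 2S$ on $|z| < \lambda/2$ yields $M(z) \le 1 + z f^* + S z^2$ in that range, and the choice $\alpha \le 2/\lambda$ guarantees that both $\alpha(1-1/n)$ and $-\alpha/n$ sit inside the interval. This produces
\begin{equation}\nonumber
\EE_f\!\bigl[(1+\tfrac{1}{n})e^{\alpha f(1-1/n)} - 1\bigr] \le \alpha f^* + \tfrac{1}{n} + S\alpha^2 + O\!\bigl(\tfrac{\alpha f^* + S\alpha^2}{n}\bigr),
\end{equation}
whose cross-terms are absorbed into the three leading summands, and $\EE_f[e^{-\alpha f/n} - 1] \le -\alpha f^*/n + S(\alpha/n)^2$, whose second-order piece is negligible compared to the $-\alpha f^*/n$ term that will do the work later.

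Finally, averaging over the choice of $j$ with weights $p_j$ and using $\sum_{j \ne i} p_j = 1 - p_i$ gives
\begin{equation}\nonumber
\EE[\Phi(t+1) - \Phi(t) \mid x(t)] \le \sum_{i=1}^{n}\tfrac{e^{\alpha s_i}}{n^2+i}\bigl[p_i\bigl(\alpha f^* + \tfrac{1}{n} + S\alpha^2\bigr) - \alpha f^*/n\bigr],
\end{equation}
which is the stated bound (writing $e^{\alpha s_i}$ for $\tfrac{e^{\alpha s_i}}{n^2+i}$ in the same shorthand used in Lemma~\ref{lemma-phi-base}). The main obstacle is really the rank-shift bookkeeping in the first step: the unweighted proof could invoke the equi-load group structure to argue that the chosen bin either stays put or jumps to a canonical boundary, but here the bin can land anywhere in $[1,j]$, and it is only the specifically engineered denominator $n^2+i$ that converts the worst-case rearrangement cost into the clean $(1+\tfrac{1}{n})$ multiplicative factor from which the extra $1/n$ term in the lemma is born.
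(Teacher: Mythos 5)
Your proposal follows essentially the same route as the paper's proof: bound the chosen bin's rank jump by the worst-case denominator ratio (your $n(j-j')\le n^2+j'$ is the same $(1+\tfrac1n)$ factor the paper gets from $\tfrac{n^2+n}{n^2+1}$), note that displaced and unmoved bins have non-increasing coefficients, expand the moment generating function to second order using $M''\le 2S$, and average over the choice of bin with weights $p_i$, absorbing the same lower-order terms the paper also drops. The argument is correct (modulo the paper's own looseness, including the inherited $\alpha\le 2/\lambda$ versus $|z|<\lambda/2$ parameter convention), so nothing further is needed.
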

\begin{proof}
  Let $\Delta_i$ be the expected change in $\Phi$ if the ball is put in bin, $i$. So, $r_i(t+1) = s_i + f(1-1/n)$; and for $j \ne i$, $r_j(t+1) = s_j(t) - f/n$. The new values i.e. $s(t+1)$ are obtained by sorting $r(t+1)$ and $\Phi(s) = \Phi(r)$. When, an md-ball is committed to bin $i$, then it jumps to an index $i_{new}$ which is less than or equal to $i$ in the new bin order. Thus, the expected contribution of bin, $i$, to $\Delta_i$ is given as follows:
\begin{equation}\label{eq-jump}
\begin{aligned}
   \EE[ \frac{e^{\alpha.(s_i + f(1-1/n))}}{n^2 + i_{new}} ] - \frac{e^{\alpha.s_i}}{n^2 + i}\\
   & \le \frac{e^{\alpha.s_i}}{n^2 + i} [ \frac{M(\alpha(1-1/n))(n^2+n)}{n^2 + 1} - 1 ]\\
   & \le \Phi_i [ (M(0) + M'(0).\alpha(1-1/n) + M''(0)(\alpha(1-1/n))^2)(1+1/n) - 1  ]\\
   & \le \Phi_i [ (1 + f^*\alpha(1-1/n) + S\alpha^2)(1 + 1/n) - 1] \\
            &\because \quad M(0) = 1, M'(0) = E(f) = f^*, M''(0) \le 2S\\
   & \le \Phi_i [ f^*\alpha(1-1/n) + S\alpha^2 + (1 + f^*\alpha(1-1/n) + S\alpha^2)/n]\\
   & \le \Phi_i [ f^*\alpha + 1/n + S\alpha^2]
\end{aligned}
\end{equation}
 The bins that were at index $j \in [i_{new}..(i-1)]$, shift right by one position and hence the expected contribution of such a bin, $j$ to $\Delta_i$ is given as:
\begin{equation}\label{eq-one-shift}
\begin{aligned}
   & \EE[ \frac{e^{\alpha.(s_j - f/n)}}{n^2 + j+1} ] - \frac{e^{\alpha.s_j}}{n^2 + j}\\
   & \le \frac{e^{\alpha.s_j}}{n^2 + j} [ M(-\alpha/n) - 1 ]\\
   & \le \Phi_j [ M(0) + M'(0)(-\alpha/n) + M''(0)\frac{\alpha^2}{2n^2}]\\
   & \le \Phi_j [ \frac{-f^*\alpha}{n} + \frac{S\alpha^2}{n^2} ]\\
\end{aligned}
\end{equation}
  For all other bins, their rank does not change in the new bin order, hence, their expected contribution to $\Delta_i$ is given as:
\begin{equation}\label{eq-no-shift}
\begin{aligned}
   & \EE[ \frac{e^{\alpha.(s_j - f/n)}}{n^2 + j} ] - \frac{e^{\alpha.s_j}}{n^2 + j}\\
   & = \Phi_j [ \frac{-f^*\alpha}{n} + \frac{S\alpha^2}{n^2} ]
\end{aligned}
\end{equation}

 Using equations~\eqref{eq-jump},~\eqref{eq-one-shift} and~\eqref{eq-no-shift}, $\Delta_i$ is given as follows:
\begin{equation}\nonumber
   \Delta_i = (-\alpha.f^*/n + 1/n + S\alpha^2)\Phi_i + \frac{\alpha.f^*\Phi}{n}
\end{equation}
%
%
%
%
%
Hence, the expected change in $\Phi$ can be given by:
\begin{equation}\label{eq-phi-base-w}
\begin{aligned}
   \EE[ \Phi(t+1) - \Phi(t) | x(t) ]
   \le \sum_{i=1}^{n} [ p_i * (\alpha.f^* + 1/n + S\alpha^2)  - \alpha.f^*/n ]\Phi_i
\end{aligned}
\end{equation}
%
\end{proof}
 Simplifying further and observing that $\Phi_i$ decreases and $p_i$ increases with increasing $i$ from $1$ to $n$, one gets the following Corollary.
\begin{corollary}\label{cor-phi-gen-w}
   $\EE[ \Phi(t+1) - \Phi(t) | x(t) ] \le (\alpha.f^* + 2S\alpha^2)\frac{\Phi}{n}$ 
\end{corollary}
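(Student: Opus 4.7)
The plan is to follow exactly the template of Corollary \ref{cor-phi-gen}, but now with the slightly more complicated per-bin coefficient produced by Lemma \ref{lemma-phi-base-w}. Starting from
$$\EE[\Phi(t+1) - \Phi(t)\mid x(t)] \le \sum_{i=1}^{n}\bigl[p_i(\alpha f^* + 1/n + S\alpha^2) - \alpha f^*/n\bigr]\,\Phi_i,$$
I would invoke two monotonicity facts: under the $d$-choice rule the probabilities $p_1 \le p_2 \le \cdots \le p_n$ are non-decreasing (the bias favours lightly loaded bins, which sit at larger indices in our ordering), while $\Phi_i = e^{\alpha s_i}/(n^2+i)$ is non-increasing in $i$, because the bins are sorted so that $s_i$ decreases and the denominator grows.

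Given these, I would argue by the rearrangement inequality that $\sum_i p_i \Phi_i$ is maximised, subject to $\sum_i p_i = 1$ and $(p_i)$ non-decreasing, at the uniform vector $p_i = 1/n$: since $\Phi_i$ is heavy on small indices but the monotonicity constraint prevents $p_i$ from concentrating there, flattening $(p_i)$ as much as possible is optimal. Plugging $p_i = 1/n$ into the displayed inequality collapses the sum to
$$\EE[\Delta\Phi\mid x(t)] \le \frac{\alpha f^* + 1/n + S\alpha^2}{n}\,\Phi - \frac{\alpha f^*}{n}\,\Phi = \Bigl(\frac{1}{n^2} + \frac{S\alpha^2}{n}\Bigr)\Phi,$$
where the two $\alpha f^*$ contributions cancel exactly. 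I would then weaken this to the stated form by observing that $1/n \le \alpha f^* + S\alpha^2$ in the regime the paper operates in (the choice $\alpha \le \epsilon/(2f^*)$ keeps $\alpha f^*$ a positive constant while $1/n \to 0$), so that $(1/n^2 + S\alpha^2/n)\Phi \le (\alpha f^* + 2S\alpha^2)\Phi/n$.

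The main step is the rearrangement argument, which is completely parallel to its use in the unweighted Corollary \ref{cor-phi-gen}; I do not anticipate a real obstacle. The only subtlety worth flagging is that the stated bound is deliberately loose, inflated to the form $(\alpha f^* + 2S\alpha^2)\Phi/n$ so that it combines cleanly with the analogous bound on $\EE[\Delta\Psi \mid x(t)]$ in the super-martingale analysis that follows; one simply verifies the absorption step rather than chasing a tighter constant.
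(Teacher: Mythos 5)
Your argument is correct, but it bounds the sum differently than the paper does. The paper's proof of Corollary~\ref{cor-phi-gen-w} fixes the actual two-choice probabilities $p_i=\frac{2i-1}{n^2}$ and instead flattens the \emph{load} profile (taking the worst case $e^{\alpha s_i}$ constant, so $e^{\alpha s_i}\approx n\Phi$), which gives $\sum_i p_i\Phi_i \le \frac{(2n-1)\Phi}{n(n+1)}\approx \frac{2\Phi}{n}$; after subtracting $\frac{\alpha f^*\Phi}{n}$ the linear term survives and the stated bound $(\alpha f^*+2S\alpha^2)\frac{\Phi}{n}$ drops out directly. You instead flatten the \emph{probability} vector: since $(p_i)$ is non-decreasing and $\Phi_i=e^{\alpha s_i}/(n^2+i)$ is non-increasing, a Chebyshev-sum/extreme-point argument gives $\sum_i p_i\Phi_i\le \frac{\Phi}{n}$ with the maximum at $p_i=1/n$, exactly as in the unweighted Corollary~\ref{cor-phi-gen}. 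That route is valid and in fact yields the strictly tighter intermediate estimate $\bigl(\tfrac{1}{n^2}+\tfrac{S\alpha^2}{n}\bigr)\Phi$, with the $\alpha f^*$ contributions cancelling, which you then deliberately inflate to the stated form. The only caveat is that the inflation step $\tfrac1n\le \alpha f^*+S\alpha^2$ is an asymptotic statement — it needs $n\gtrsim 1/(\alpha f^*)$, where $\alpha f^*$ is a positive constant by the choice $\alpha=\min(\tfrac{\epsilon}{6S},\tfrac{2}{\lambda},\tfrac{\epsilon}{2f^*})$ — but this is consistent with how the paper itself discards $O(1/n)$ corrections throughout, so it is not a gap. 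Net effect: your proof mirrors the unweighted corollary more literally and shows the stated bound is loose, while the paper's worst-case configuration (equal loads, true two-choice probabilities) produces the stated constants without a separate relaxation step.
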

\begin{proof}
   Since, $p_i$ are increasing and $\Phi_i$ are decreasing, the maximum value taken by RHS of equation~\eqref{eq-phi-base-w} will be when $e^{\alpha.s_i} * \sum_{i=1}^{n} \frac{1}{(n^2 + i)} = \Phi$. Thus, $e^{\alpha.s_i} = n\Phi$. Hence,
\begin{equation}
\begin{aligned}
  \sum_{i=1}^{n} p_i\Phi_i &\le n\Phi * \sum_{i=1}^{n} \frac{2i - 1}{n^2(n^2 + i)}\\
  &\le \frac{\Phi}{n} * \frac{2n-1}{n+1}
\end{aligned}
\end{equation} 
  Using, equation~\eqref{eq-phi-base-w}, we get:
\begin{equation}
\begin{aligned}
  \EE[ \Phi(t+1) - \Phi(t) | x(t) ] &\le (\alpha.f^*(1+1/n) + 1/n + S\alpha^2) * \frac{(2n-1)\Phi}{n(n+1)} - \frac{\alpha.f^*\Phi}{n}\\
     &\le (\alpha.f^* + 2S\alpha^2)\frac{\Phi}{n}
\end{aligned}
\end{equation}
%
\end{proof}
\par Similarly, the change in $\Psi$ can be derived. For detailed proof refer to Appendix~\ref{app:weighted_1_proof}.
\begin{lemma}\label{lemma-psi-base-w}
  When an md-ball is thrown into an md-bin, the following inequality holds:
\begin{equation}\label{eq-psi-base-w}
   \EE[ \Psi(t+1) - \Psi(t) | x(t) ] \le
   \sum_{i=1}^{n} [ p_i * (-\alpha.f^* + \frac{S\alpha^2)}{n^2}  + \alpha.f^*/n ]\Psi_i 
\end{equation}
\end{lemma}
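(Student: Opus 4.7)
The plan is to follow the same three-step decomposition used in the proof of Lemma~\ref{lemma-phi-base-w}. Conditioning on bin $i$ receiving the ball (with probability $p_i$), I would track three types of contributions to the change $\Delta_i$ in $\Psi$: bin $i$ itself (whose sum-load increases by $f(1-1/n)$ and whose rank moves from index $i$ to some $i_{new} \le i$); the bins previously at positions $j \in [i_{new}, i-1]$ that get shifted one position to the right; and the remaining bins whose ranks are unchanged. For each of the latter two groups only the $s_j$ values decrease by $f/n$, so the primary work is combining the exponential changes with the rank-shift denominator changes.

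The central observation, in contrast to the $\Phi$ analysis, is that the $\Psi$ weight $1/(n^2 + n - k + 1)$ gives a \emph{larger} denominator for smaller index. Hence when bin $i$ jumps from $i$ to $i_{new} \le i$, its contribution to $\Psi$ only decreases further, producing a clean bound $\Psi_i \cdot M(-\alpha(1 - 1/n))$ with no extra $1/n$ penalty. For bins shifted from position $j$ to $j+1$, the denominator changes from $n^2+n-j+1$ to $n^2+n-j$, a ratio of $1 + O(1/n^2)$; this is why the $S\alpha^2$ correction appears divided by $n^2$ in the stated bound. The exponential-in-$f$ changes are controlled using the moment generating function expansion $M(z) \le 1 + f^* z + S z^2$ valid for $|z| < \lambda/2$, which applies because our chosen $\alpha \le 2/\lambda$ forces $\alpha/n$ and $\alpha(1 - 1/n)$ to lie inside this interval.

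After summing the three contributions and collecting the $\Psi_j$ for $j \ne i$ into $\Psi - \Psi_i$, I expect to arrive at
\[
\Delta_i \;\le\; \Psi_i\bigl[-\alpha f^* + S\alpha^2/n^2\bigr] \;+\; \frac{\alpha f^*}{n}\,\Psi,
\]
which, after multiplying by $p_i$ and summing over $i$, yields the stated bound $\sum_{i=1}^n [\, p_i(-\alpha f^* + S\alpha^2/n^2) + \alpha f^*/n\,]\,\Psi_i$.

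The main obstacle will be bookkeeping rather than technique: verifying that the rank-shift terms for the shifted bins are correctly bounded by $(1 + O(1/n^2))$-type factors (rather than the $(1 + O(1/n))$-type factors that appeared in the $\Phi$ analysis via bin $i$'s own jump), and checking that the small higher-order terms involving $\alpha^2/n^3$ and $f^*\alpha/n^2$ can be safely absorbed into the $S\alpha^2/n^2$ and $\alpha f^*/n$ terms on the right-hand side. Because the asymmetry between $\Phi$ and $\Psi$ is precisely that bin $i$'s jump is \emph{favorable} for $\Psi$ while \emph{unfavorable} for $\Phi$, I have to be careful not to copy the unweighted-case argument verbatim; the new, tighter $S\alpha^2/n^2$ coefficient here reflects exactly this point.
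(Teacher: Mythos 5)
Your decomposition (receiving bin $i$ with its favorable jump to $i_{new}\le i$, the bins shifted from $j$ to $j+1$, and the untouched bins), together with the MGF expansion $M(z)\le 1+f^*z+Sz^2$, is exactly the route the paper takes in Appendix~\ref{app:weighted_1_proof}, and your observation that the rank change of the receiving bin is favorable for $\Psi$ (so no $(1+1/n)$ penalty appears, in contrast to the $\Phi$ analysis) is the right key point. The shifted bins are also handled correctly: the denominator ratio $\frac{n^2+n-j+1}{n^2+n-j}\le 1+\frac{1}{n^2}$ only produces additive $O(\Psi_j/n^2)$ corrections.

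However, your explanation of where the $\frac{S\alpha^2}{n^2}$ in the statement comes from does not hold up, and the step as you sketch it would not deliver that coefficient. The second-order term attached to $\Psi_i$ comes from bin $i$'s own expansion $M(-\alpha(1-1/n))-1\le -f^*\alpha(1-1/n)+S\alpha^2(1-1/n)^2$, whose quadratic part is of order $S\alpha^2$, not $S\alpha^2/n^2$; the argument of $M$ here is $\approx -\alpha$, not $\approx -\alpha/n$. The $1+O(1/n^2)$ shift-ratio you invoke explains only small additive terms on the \emph{other} bins and cannot shrink bin $i$'s quadratic term by a factor $n^2$. So what your derivation honestly yields is $\Delta_i\le \Psi_i[-\alpha f^*+S\alpha^2]+\frac{\alpha f^*}{n}\Psi$ (plus $O(\Psi/n^2)$), i.e.\ the lemma with $S\alpha^2$ in place of $S\alpha^2/n^2$. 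To be fair, the paper's own appendix proof asserts the $S\alpha^2/n^2$ coefficient without justification (its displayed bound on $M(-\alpha(1-1/n))-1$ is inconsistent with the Taylor estimate, and its $j\ne i$ term even carries a sign slip), so this appears to be a typo propagated into the statement rather than something you are expected to recover. The weaker, correct coefficient is harmless downstream: the later lemmas only use that $-\alpha f^*+(\text{second-order term})$ is negative, which holds since $\alpha\le\frac{\epsilon}{6S}$ gives $S\alpha^2\le\frac{\epsilon\alpha}{6}<\alpha f^*$. You should either prove the lemma with $S\alpha^2$ or justify explicitly why the $1/n^2$ factor is attainable; as written, that one step fails.
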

Further observing that $p_i > 0$, one gets the following Corollary.
\begin{corollary}\label{cor-psi-gen-w}
   $\EE[ \Psi(t+1) - \Psi(t) | x(t) ] \le (\alpha.f^*\Psi)/ n$ 
\end{corollary}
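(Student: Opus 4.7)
The plan is to derive this corollary directly from the bound in Lemma~\ref{lemma-psi-base-w}, whose right-hand side has two kinds of terms: one multiplied by $p_i$ (which is positive but small) and one independent of $p_i$. The key observation is that the $p_i$-coefficient, namely $(-\alpha f^* + S\alpha^2/n^2)$, is actually non-positive under the chosen scale of $\alpha$. Once this is established, the $p_i$-term can simply be dropped, and only the $\alpha f^*/n$ piece survives.

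First I would verify the sign of the $p_i$-coefficient using the calibration of $\alpha$ made earlier in Section~\ref{sec:weighted}, where $\alpha \le \epsilon/(6S)$ with $\epsilon \le 1/4$. Under these constraints, $S\alpha^2/n^2 \le S\alpha^2 \le \alpha/24$, which is strictly smaller than $\alpha f^*$ since $f^* \ge 1$ (each md-ball populates at least one dimension). Therefore
\[
 -\alpha f^* + \frac{S\alpha^2}{n^2} \;<\; 0,
\]
and because $p_i > 0$ and $\Psi_i > 0$, the entire contribution $\sum_i p_i\bigl(-\alpha f^* + S\alpha^2/n^2\bigr)\Psi_i$ is non-positive and can be discarded from the upper bound.

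What remains after dropping this non-positive piece is simply
\[
 \EE[\Psi(t+1) - \Psi(t) \mid x(t)] \;\le\; \sum_{i=1}^n \frac{\alpha f^*}{n}\,\Psi_i \;=\; \frac{\alpha f^*}{n}\sum_{i=1}^n \Psi_i \;=\; \frac{\alpha f^*\,\Psi}{n},
\]
which is exactly the claimed corollary. There is no real obstacle: the whole argument reduces to checking one inequality on $\alpha$ and then factoring $\alpha f^*/n$ out of the sum. The only thing worth being careful about is making sure the chosen $\alpha$ indeed forces the $p_i$-coefficient to be non-positive; if not, one would instead have to absorb a $2S\alpha^2/n$ additive term as was done in Corollary~\ref{cor-phi-gen-w} for $\Phi$, but the parameter settings make this unnecessary here.
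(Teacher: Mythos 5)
Your proposal is correct and follows essentially the same route as the paper, which justifies the corollary in one line by observing that $p_i > 0$ so the non-positive $p_i$-weighted term in Lemma~\ref{lemma-psi-base-w} can be dropped, leaving $\sum_i (\alpha f^*/n)\Psi_i = \alpha f^*\Psi/n$. Your explicit check that $-\alpha f^* + S\alpha^2/n^2 \le 0$ under $\alpha \le \epsilon/(6S)$, $\epsilon \le 1/4$, $f^* \ge 1$ merely spells out what the paper leaves implicit.
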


In the next two lemmas, Lemma~\ref{lemma-phi-easy-case-w} and Lemma~\ref{lemma-psi-easy-case-w}, we consider a reasonably balanced md-bins scenario. We show that for such cases, the expected potential decreases. Specifically, for $s_{(n\gamma_2)} \le 0$, the expected value of $\Phi$ decreases and for $s_{(n\gamma_1)} \ge 0$, the expected value of $\Psi$ decreases.
\begin{lemma}\label{lemma-phi-easy-case-w}
  Let $\Phi$ be defined as above. If $s_{(n\gamma_2)}(t) < 0$ then, $\EE[ \Phi(t+1) | x(t) ] \le (1-\frac{\alpha f^*}{8n})\Phi$ 
\end{lemma}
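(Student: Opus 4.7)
The plan is to adapt the structure of Lemma~\ref{lemma-phi-easy-case} to the modified potential $\Phi(t)=\sum_i e^{\alpha s_i}/(n^2+i)$ used in the weighted setting. Starting from the inequality in equation~\eqref{eq-phi-base-w}, I would split the upper bound on $\EE[\Delta\Phi\mid x(t)]$ at the threshold index $i = n\gamma_2$, isolating the negative drift term $-\alpha f^*/n\cdot\Phi$ and bounding the positive contribution separately over the heavy bins ($i < n\gamma_2$) and the light bins ($i \ge n\gamma_2$).

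For the light bins, the hypothesis $s_{(n\gamma_2)}(t)<0$ together with $s_i \le s_{(n\gamma_2)}$ for $i\ge n\gamma_2$ gives $e^{\alpha s_i} \le 1$, so $\Phi_i \le 1/(n^2+i)$. Combined with $p_i \le (2i-1)/n^2$, the total light-bin contribution $\sum_{i\ge n\gamma_2} p_i\Phi_i$ is $O(1/n^3)$, which is easily absorbed into the drift. For the heavy bins, I would invoke the same extremal/rearrangement idea as in Lemma~\ref{lemma-phi-easy-case}: since $p_i$ is non-decreasing and $\Phi_i$ is non-increasing, $\sum_{i<n\gamma_2} p_i\Phi_i$ is maximized by the configuration in which $e^{\alpha s_i}$ is constant across $i < n\gamma_2$. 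Using $\sum_{i<n\gamma_2}1/(n^2+i) \sim \gamma_2/n$ and $\sum_{i<n\gamma_2}(2i-1)/(n^2(n^2+i))\sim \gamma_2^2/n^3$, the extremal value of this sum is bounded above by $\gamma_2\cdot\Phi_{<n\gamma_2}/n \le \gamma_2\Phi/n$.

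Combining the two pieces, the total positive contribution is at most $(\alpha f^* + 1/n + S\alpha^2)(\gamma_2 + o(1))\Phi/n$. The parameter choices $\alpha \le \epsilon/(6S)$, $\alpha \le \epsilon/(2f^*)$, and $\gamma_2 < 7/16$ are exactly what is required to ensure that this contribution is at most $(7/8)(\alpha f^*/n)\Phi$, leaving a net drift of at most $-\alpha f^*/(8n)\cdot\Phi$. Rearranging gives $\EE[\Phi(t+1)\mid x(t)] \le (1 - \alpha f^*/(8n))\Phi$ as claimed.

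The main obstacle is executing the extremal argument carefully with the new $1/(n^2+i)$ denominator. In the unweighted case the sums $\sum 1/i$ telescoped into clean logarithmic factors, but here the dominant $n^2$ term in the denominator forces one to track the asymptotic constants of $\sum 1/(n^2+i)$ and $\sum (2i-1)/(n^2(n^2+i))$ to higher precision. Verifying that the resulting prefactor is indeed strictly smaller than $7/8$, uniformly for the allowed range of $\alpha$, $S$, and $\gamma_2$, is the delicate bookkeeping step that justifies the stated $1/8$ constant.
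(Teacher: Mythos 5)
Your overall strategy is the same as the paper's (split at $i=n\gamma_2$, flat-configuration extremal bound for the heavy bins, crude bound for the light bins, then use $\gamma_2<7/16$), and your heavy-bin estimate is in fact sharper than the paper's: the correct values are $\sum_{i<n\gamma_2}\frac{1}{n^2+i}\sim\gamma_2/n$ and $\sum_{i<n\gamma_2}\frac{2i-1}{n^2(n^2+i)}\sim\gamma_2^2/n^2$ (not $\gamma_2^2/n^3$ as you wrote), whose ratio gives your stated bound $\gamma_2\Phi_{<n\gamma_2}/n$, whereas the paper's per-term bound yields the weaker $\frac{(2n\gamma_2-1)\Phi}{n^2}\approx 2\gamma_2\Phi/n$.

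The genuine gap is in the light-bin step. The same slip ($1/n^3$ versus $1/n^2$) occurs there and this time it is not harmless: $\sum_{i\ge n\gamma_2}p_i\Phi_i\le\sum_{i\ge n\gamma_2}\frac{2i-1}{n^2(n^2+i)}=\Theta\bigl(\tfrac{1-\gamma_2^2}{n^2}\bigr)$, not $O(1/n^3)$. This cannot be "easily absorbed into the drift'': the drift term is $-\alpha f^*\Phi/n$, and since $\sum_i s_i=0$ forces (by AM--GM) only $\Phi\ge\frac{1}{n+1}$, the drift can be as small as $\Theta(\alpha f^*/n^2)$, i.e.\ the \emph{same order} as the light-bin contribution. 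Tracking constants in the worst case (all bins with $s_i$ just below $0$ for $i\ge n\gamma_2$, so $\Phi\approx 1/n$), the light term is $\approx(1-\gamma_2^2)\alpha f^*\Phi/n\approx 0.81\,\alpha f^*\Phi/n$, which already exceeds the slack $\bigl(1-\gamma_2-\tfrac18\bigr)\alpha f^*\Phi/n\approx 0.44\,\alpha f^*\Phi/n$ left over after your heavy-bin bound; so your claimed conclusion that the total positive contribution is at most $\tfrac78\alpha f^*\Phi/n$ does not follow from the estimates you state, and the final constant $1/8$ is asserted rather than verified. (You also need a word on why the additive $1/n$ and $S\alpha^2$ in the prefactor of Lemma~\ref{lemma-phi-base-w} are dominated by $\alpha f^*$.) For comparison, the paper bounds the light-bin sum by the same flat-configuration device relative to $\Phi_{(\ge n\gamma_2)}$, obtaining a term of order $\alpha f^*\Phi_{(\ge n\gamma_2)}/n$, and then simply drops it in the next line; so the near-balanced regime $s_{(n\gamma_2)}\uparrow 0$ is exactly where both your sketch and the paper's write-up are thin, and it is the step that would need a genuinely new argument (or a reformulation of the lemma with an additive term, as in the $(1+\beta)$ analogue Lemma~\ref{lemma-phi-easy-case-beta}) rather than bookkeeping alone.
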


\begin{proof}
    From equation~\eqref{eq-phi-base-w}, we get,
\begin{equation}\label{eq-phi-w-1}
\begin{aligned}
  \EE[ \Phi(t+1) - \Phi(t) | x(t)] & \le \sum_{i=1}^{n} (p_i * (\alpha f^* + S(\alpha f)^2) - \alpha f^*/n). \Phi_i\\
   & \le \sum_{i < n\gamma_2} (p_i * (\alpha f^* + S(\alpha)^2)).\Phi_i - \frac{\alpha f^*\Phi}{n} + 
         \sum_{i \ge n\gamma_2} p_i * (\alpha f^* + S(\alpha)^2).e^{-\alpha s_i}
\end{aligned}
\end{equation}
  Now, we need to upper bound the term $\sum_{i < n\gamma_2} (p_i * (\alpha.f^* + (\alpha)^2).\frac{e^{\alpha.s_i}}{n^2+i})$. Since $p_i$ is non-decreasing and $\Phi_i$ is non-increasing, the maximum value is achieved when $e^{\alpha s_i} \sum_{i=1}^{n\gamma_2} \frac{1}{(n^2+i)} = \Phi$ for each $i < n\gamma_2$. Hence, $e^{\alpha s_i} = \frac{\Phi(n+\gamma_2)}{\gamma_2}$. Hence, the maximum value is given as follows.
\begin{equation}
\begin{aligned}
   \sum_{i=1}^{n\gamma_2} p_i\Phi_i & \le \frac{\Phi(n+\gamma_2)}{\gamma_2} * \sum_{i=1}^{n\gamma_2} [ \frac{2i-1}{n^2} * \frac{1}{n^2+i} ] \\
       & \le \frac{(n+\gamma_2)\Phi}{n^2\gamma_2} * \frac{\gamma_2(2n\gamma_2 - 1)}{(n+\gamma_2)}\\
       &\le \frac{(2n\gamma_2 - 1)\Phi}{n^2}
\end{aligned}
\end{equation}
  Similarly, one can upper bound the term, $\sum_{i \ge n\gamma_2} (p_i \frac{e^{\alpha.s_i}}{(n^2+i)})$. Since $p_i$ is non-decreasing and $\Phi_i$ is non-increasing, the maximum value is achieved when $e^{\alpha s_i} ( \sum_{i=(n\gamma_2)}^{n} \frac{1}{n^2+i}) = \Phi_{(\ge n\gamma_2)}$ for each $i \ge n\gamma_2$. Hence, $e^{\alpha s_i} = \frac{\Phi_{(\ge n\gamma_2)}(n+\gamma_2)}{(1-\gamma_2)}$. 
  Thus, the expected change in $\Phi$ can be computed, using equation~\eqref{eq-phi-w-1} and the above bound, as follows:
\begin{equation}
\begin{aligned}
  \EE[ \Delta\Phi | x(t)] & \le (\alpha.f^* + S\alpha^2)\frac{(2n\gamma_2 - 1)\Phi}{n^2} - \frac{\alpha.f^*}{n} * \Phi + (\alpha.f^* + S\alpha^2)* \frac{(2n-1)(n+\gamma_2)\Phi_{(\ge n\gamma_2)}}{n^2(n+1)}\\ 
   & \le \frac{2\alpha.f^*\gamma_2\Phi}{n} - \frac{\alpha.f^*\Phi}{n}\\
   & \le \frac{(2\gamma_2 - 1)\alpha.f^*\Phi}{n}\\
   & \le \frac{-\alpha.f^*\Phi}{8n} \because \gamma_2 < (1/2 - 1/16)
\end{aligned}
\end{equation}
%
%
\end{proof}
%
%
\begin{lemma}\label{lemma-psi-easy-case-w}
  Let $\Psi$ be defined as above. If $s_{(n\gamma_1)}(t) \ge 0$ then,
  $\EE[ \Psi(t+1) | x(t) ] \le (1-\frac{\alpha.f^*}{2n})\Psi$ 
\end{lemma}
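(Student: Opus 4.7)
The plan is to mirror the proof of Lemma~\ref{lemma-phi-easy-case-w}, exploiting the following duality. In the $\Phi$ case, the hypothesis $s_{n\gamma_2}<0$ localizes $\Phi$-mass at the heavy end (small $i$), where the $d$-choice bias $p_i$ is smallest, and this mismatch is what forces the expected decrease. For $\Psi$, the hypothesis $s_{n\gamma_1}\ge 0$ localizes $\Psi$-mass at the light end (large $i$), where $p_i$ is largest; this co-location of large $p_i$ with large $\Psi_i$ pushes $\sum_i p_i\Psi_i$ strictly above the Chebyshev baseline $\Psi/n$, which is exactly what is needed to dominate the $+\alpha f^*/n\cdot\Psi$ term from Lemma~\ref{lemma-psi-base-w}.

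Concretely, starting from Lemma~\ref{lemma-psi-base-w}, I would first rewrite
\begin{equation*}
\EE[\Psi(t+1)-\Psi(t)\mid x(t)] \;\le\; -\alpha f^*\sum_{i=1}^n p_i\Psi_i \;+\; \frac{S\alpha^2}{n^2}\sum_{i=1}^n p_i\Psi_i \;+\; \frac{\alpha f^*}{n}\Psi,
\end{equation*}
so that the entire job reduces to proving a strict improvement of the Chebyshev bound on $\sum_i p_i\Psi_i$. I would then split this sum at $i=n\gamma_1$. The hypothesis combined with the ordering $s_1\ge\cdots\ge s_n$ forces $s_i\ge 0$ on the head $i\le n\gamma_1$, hence $\Psi_i\le 1/(n^2+n-i+1)$ there; summing telescopically gives $\sum_{i\le n\gamma_1}\Psi_i\le\gamma_1/n$, so at least $\Psi-\gamma_1/n$ of the mass must live on the tail $i>n\gamma_1$. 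On that tail the explicit $d=2$ probabilities $p_i=(2i-1)/n^2$ are uniformly at least $(2n\gamma_1-1)/n^2$, and more strongly obey the linear growth used in Lemma~\ref{lemma-phi-easy-case-w}. Writing $\Psi_i=\sum_{j\le i}\delta_j$ with $\delta_j=\Psi_j-\Psi_{j-1}\ge 0$ and applying the same monotone-rearrangement bookkeeping the author uses on the $\Phi$ side (but with the opposite monotonicity) yields a lower bound of the form $\sum_{i=1}^n p_i\Psi_i \;\ge\; (1+\gamma_1)\Psi/n \;-\; O(1/n^2)$.

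Substituting this back, using the choice $\alpha\le\epsilon/(6S)$ declared earlier to absorb the $S\alpha^2/n^2$ term, and discarding the $O(1/n^2)$ lower-order remainder (which is permissible once $\Psi\gg 1/n$, otherwise the claim is vacuous since $\Psi(0)\le 2/n$), one obtains $\EE[\Delta\Psi\mid x(t)]\le -\alpha f^*\gamma_1\Psi/n+\text{l.o.t.}$; fixing the constants inside the admissible range $\gamma_1<\gamma_2<7/16$ then tightens this to the stated $-\alpha f^*\Psi/(2n)$ after routine simplification.

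The main obstacle will be the careful bookkeeping required to push the effective contraction constant all the way to $1/2$. The naive one-point bound $p_i\ge 2\gamma_1/n$ on the tail only delivers a $\gamma_1$-factor, which is too weak for the stated claim; to reach the claimed $1/2$ one must exploit the \emph{full} linear growth $p_i=(2i-1)/n^2$ across the tail jointly with the non-decreasing constraint on $\Psi_i$, exactly paralleling the way the author converts a crude $2\gamma_2$ into the sharper $(2\gamma_2-1)$ factor in the $\Phi$ computation. Beyond this single delicate step, the rest of the argument is a mechanical mirror image of the proof of Lemma~\ref{lemma-phi-easy-case-w}.
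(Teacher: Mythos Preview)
Your approach mirrors the paper's exactly: start from Lemma~\ref{lemma-psi-base-w}, drop the (negative) head contribution for $i<n\gamma_1$, lower-bound $\sum_{i\ge n\gamma_1}p_i\Psi_i$ using the co-monotonicity of $p_i$ and $\Psi_i$ (the paper flattens $e^{-\alpha s_i}$ to a constant on the tail, which is your Chebyshev/rearrangement step), and substitute back. Your flagged obstacle---that the honest bound delivers contraction $\gamma_1$ rather than $1/2$---is real and is shared by the paper's own proof, whose claimed $\sum p_i\Psi_i\ge (2n+1)\Psi/n^2$ silently drops a $-(1-\gamma_1)^2$ term; the correct constant $\gamma_1$ still suffices downstream for Theorem~\ref{thm-super-mart-w}.
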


\begin{proof}
	The proof is similar to that of Lemma~\ref{lemma-phi-easy-case-w}. See Appendix~\ref{app:2_w_proof} for details of the proof.
\end{proof}

Now, we consider the remaining cases and show that in case the load across the bins , at time $t$, is not reasonably balanced, then for $s_{n\gamma_2} > 0$, either $\Psi$ dominates $\Phi$ or $\Gamma < c/n$, where, $c = poly(1/\epsilon)$.
\begin{lemma}\label{lemma-s-gamma-phi-w}
   Let, $s_{(n\gamma_2)} \ge 0$ and $\, \EE[\Delta\Phi|x(t)] \ge -\alpha.f^*\Phi/8n$. Then, either $\Phi < \epsilon\gamma_1 * \Psi$, or $\Gamma < \frac{c}{n}$ for some $c = poly(1/epsilon)$.
\end{lemma}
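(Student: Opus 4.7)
The plan is to mirror the proof of Lemma~\ref{lemma-s-gamma-phi}, now adapted to the weighted potentials whose denominators are $n^2+i$ and $n^2+n-i+1$. The argument has two stages: first conclude that most of the mass of $\Phi$ must sit on lightly loaded bins, and then use the imbalance hypothesis $s_{n\gamma_2}\ge 0$ to convert this into either dominance of $\Psi$ over $\Phi$ or a uniform bound on $\Gamma$.

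Starting from Lemma~\ref{lemma-phi-base-w}, I would split the sum in $\EE[\Delta\Phi\mid x(t)]$ at index $n\gamma_4$ and bound each half by reusing the concentration-of-$p_i\Phi_i$ estimates that already appear in the proof of Lemma~\ref{lemma-phi-easy-case-w}: for $i\le n\gamma_4$ one substitutes $p_i\le(2n\gamma_4-1)/n^2$ and worst-cases the $e^{\alpha s_i}$ by concentrating them to maximize $\sum_{i\le n\gamma_4}p_i\Phi_i$. Combining the resulting two-part bound with the hypothesis $\EE[\Delta\Phi\mid x(t)]\ge -\alpha f^*\Phi/(8n)$ and rearranging via $\Phi=\Phi_{(\le n\gamma_4)}+\Phi_{(>n\gamma_4)}$ yields
\[
\Phi\;\le\;C_3(\gamma_3,\gamma_4,\epsilon)\,\Phi_{(>n\gamma_4)},
\]
showing that the bulk of $\Phi$ is supported on the lightly loaded tail.

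I would then introduce $B=\sum_i\max(0,s_i)$; since $\sum_i s_i=0$, we also have $\sum_i\max(0,-s_i)=B$. The premise $s_{n\gamma_2}\ge 0$ together with the sorted order forces $I=\{i:s_i<0\}$ to satisfy $|I|\le n(1-\gamma_2)$, and a brief case analysis on the sign of $s_{n\gamma_4}$ gives $s_i\le B/(n\gamma_4)$ for all $i>n\gamma_4$, hence
\[
\Phi_{(>n\gamma_4)}\;\le\;\tfrac{1-\gamma_4}{n}\,e^{\alpha B/(n\gamma_4)}.
\]
For $\Psi$, I restrict the sum to $i\in I$ and apply Jensen's inequality to the convex function $e^{\alpha x}$ with $\sum_{i\in I}|s_i|=B$, obtaining $\Psi\ge\tfrac{1-\gamma_2}{2n}\,e^{\alpha B/(n(1-\gamma_2))}$ in the regime $\alpha B\ge n(1-\gamma_2)$ (the complementary regime is trivial: $B$ is then $O(n/\alpha)$, forcing all three potentials to be $\mathrm{poly}(1/\epsilon)/n$, so $\Gamma\le c/n$ holds directly).

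Finally, supposing $\Phi\ge\epsilon\gamma_1\Psi$ (otherwise we are done), combining the last three displays yields
\[
e^{\alpha B\left[\frac{1}{n(1-\gamma_2)}-\frac{1}{n\gamma_4}\right]}\;\le\;\frac{2C_3(1-\gamma_4)}{\epsilon\gamma_1(1-\gamma_2)}.
\]
The constraints $\gamma_2+\gamma_3>1$ with $\gamma_4$ chosen close to $\gamma_3$ (concretely, $\gamma_2<7/16$ together with $\gamma_4>9/16$) ensure $\gamma_4>1-\gamma_2$, so the bracketed coefficient is strictly positive and one gets a uniform bound $e^{\alpha B/n}\le p(1/\epsilon)$ for some polynomial $p$. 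Re-substituting into the upper bound on $\Phi$ and noting that a parallel (cruder) upper bound for $\Psi$ holds whenever $\Phi\ge\epsilon\gamma_1\Psi$ yields $\Gamma\le c/n$ with $c=\mathrm{poly}(1/\epsilon)$. The main obstacle is exactly this alignment of exponents: in the unweighted setting the analogous constants are coupled by $\gamma_3+\gamma_4=1$, automatically making the decisive exponent strictly positive, whereas here $\gamma_4>1-\gamma_2$ must be extracted from the weaker relations $\gamma_2+\gamma_3>1$ and $\gamma_4<\gamma_3$, so one must coordinate the four constants judiciously for the sign of the exponent to come out the right way.
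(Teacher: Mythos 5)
Your argument is essentially the paper's own proof of Lemma~\ref{lemma-s-gamma-phi-w}: use the drift hypothesis to force $\Phi$ to be dominated by its light-bin tail, bound that tail by $\frac{\mathrm{const}}{n}e^{\alpha B/(n\gamma)}$ with $B=\sum_i\max(0,s_i)$, lower-bound $\Psi$ by $\frac{\mathrm{const}}{n}e^{\alpha B/(n(1-\gamma_2))}$ using $s_{(n\gamma_2)}\ge 0$, and, when $\Phi\ge\epsilon\gamma_1\Psi$, cancel exponentials to get $e^{\alpha B/n}\le \mathrm{poly}(1/\epsilon)$ and hence $\Gamma\le\bigl(1+\tfrac{1}{\epsilon\gamma_1}\bigr)\Phi\le c/n$; the only real deviation is that you split at $n\gamma_4$, which requires the extra (achievable, but not among the paper's stated constraints) condition $\gamma_4>1-\gamma_2$, whereas the paper splits at $n\gamma_3$ so that positivity of the decisive exponent $\tfrac{1}{1-\gamma_2}-\tfrac{1}{\gamma_3}$ is exactly the stated assumption $\gamma_2+\gamma_3>1$. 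One caution: your parenthetical claim that $\alpha B<n(1-\gamma_2)$ ``directly'' forces all three potentials to be $\mathrm{poly}(1/\epsilon)/n$ is false for $\Psi$ (a single severely underloaded bin can make $\Psi$ exponentially large even when $B=O(n/\alpha)$); the lemma's conclusion still holds in that regime, but only through the dichotomy you invoke at the end, namely either $\Phi<\epsilon\gamma_1\Psi$ or $\Psi\le\Phi/(\epsilon\gamma_1)\le c'/n$.
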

\begin{proof}
    From equation~\eqref{eq-phi-base}, we get:
\begin{equation}
\begin{aligned}
   \EE[\Delta\Phi | x(t)] & \le \sum_{i=1}^{n} (p_i * (\alpha.f^* + S\alpha^2) - \alpha.f^*/n).\Phi_i\\
    & \le \sum_{i \le n\gamma_3} (p_i * (\alpha.f^* + S\alpha^2) - \alpha.f^*/n).\Phi_i + \sum_{i > n\gamma_3} (p_i * (\alpha.f^* + S\alpha^2) - \alpha.f^*/n) * \Phi_i\\
    & \le  \frac{n\Phi_{(\le n\gamma_3)}}{\gamma_3}(\alpha.f^* + S\alpha^2)*\sum_{i \le (n\gamma_3)} \frac{2i-1}{n^2(n^2+i)} +\\ 
    & (\alpha.f^* + S\alpha^2)\frac{n\Phi_{(> n\gamma_3)}}{1-\gamma_3} * \sum_{i > (n\gamma_3)} \frac{2i-1}{n^2(n^2+i)} - \frac{\alpha f^*\Phi}{n}\\
    & \le \frac{\alpha f^*\Phi_{\le n\gamma_3}}{n} \frac{(2n\gamma_3 - 1)}{n+\gamma_3} + \frac{\alpha f^*\Phi_{> n\gamma_3}}{1-\gamma_3} \frac{\gamma_3}{n+1} - \frac{\alpha.f^*\Phi}{n}\\
    & \le \frac{\alpha f^*\Phi}{n} (\frac{2\gamma_3}{n+\gamma_3} - 1) + \alpha f^*\Phi_{>n\gamma_3} [-\frac{2n\gamma_3 - 1}{n(n+\gamma_3)} + \frac{\gamma_3}{(1-\gamma_3)(n+1)}]
\end{aligned}
\end{equation}
  Now, since $\EE[\Delta\Phi|x(t)] \ge -\alpha f\Phi/8n$, we get: $\Phi \le 4n\Phi_{(> n\gamma_3)} [\frac{(n-2)\gamma_3 + 1}{(n+1)(1-\gamma_3)(4n+3\gamma_3)}]$.
  Let, $B = \sum_{i} max(0, s_i)$. Note, $\sum_{i} s_i = 0$, since for each dimension $d$, the update maintains that, $\sum_{d=1}^{D} x_i^d(t) = 0$. Further, because, $s_{n\gamma_3} > 0$, $\Phi_{(>n\gamma_3)} \le \frac{1-\gamma_3}{n+\gamma_3} * e^{(\frac{\alpha.B}{n\gamma_3})}$. This implies that, $\Phi \le \frac{4(n-2)\gamma_3 e^{(\frac{\alpha.B}{n\gamma_3})}}{(4n+3\gamma_3)(n+\gamma_3)}$.

Since, $s_{(n\gamma_2)} > 0$, so, $\Psi \ge \frac{n}{\gamma_2} * e^{\frac{\alpha.B}{(n-n\gamma_2)}}$. If $\Phi < \epsilon\gamma_1 * \Psi$, then we are done. Else, $\Phi \ge \epsilon\gamma_1 * \Psi$. This implies:
\begin{equation}\nonumber
  e^{\frac{\alpha.B}{n\gamma_3}} \frac{4(n-2)\gamma_3}{(4n+3\gamma_3)(n+\gamma_3)} \ge \Phi \ge \epsilon\gamma_1 * \Psi \ge \frac{n\epsilon\gamma_1}{\gamma_2} * e^{\frac{\alpha.B}{(n-n\gamma_2)}}
\end{equation}
  Thus, $e^{\alpha.B/n} \le (\frac{4(n-2)\gamma_3\gamma_2}{n\epsilon\gamma_1(4n+3\gamma_3)(n+\gamma_3)})^{\frac{(1-\gamma_2)\gamma_3}{(\gamma_3 + \gamma_2 - 1)}}$. So, $\Gamma \le \frac{1 + \theta}{\epsilon} * \Phi \le \frac{1 + \theta}{\epsilon} * \frac{4(n-2)
  \gamma_3}{(4n+3\gamma_3)(n+\gamma_3)} * e^{\frac{\alpha.B}{n\gamma_3}}$ 
  
Hence, $\Gamma < c/n$, where, $c = poly(1/\epsilon)$.
%
\end{proof}

In the Lemma below, we consider the case where the load across the bins at time, $t$, is not reasonably balanced, and $s_{(n\gamma_1)} < 0$. Here, we show that either $\Phi$ dominates $\Psi$ or the potential function is less than $c/n$ for $c = poly(1/\epsilon)$.
\begin{lemma} \label{lemma-s-gamma-psi-w}
   Let, $s_{(n\gamma_1)} < 0$ and $\EE[\Delta\Psi|x(t)] \ge -\alpha.f^*\Psi/2n$. Then, either $\Psi < \epsilon\gamma_1 * \Phi$, or $\Gamma < c/n$ for some $c = poly(1/epsilon)$.
\end{lemma}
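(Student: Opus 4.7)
The plan is to mirror Lemma~\ref{lemma-s-gamma-phi-w} under the dualities $\Phi\leftrightarrow\Psi$ and $(\gamma_2,\gamma_3)\leftrightarrow(\gamma_1,\gamma_4)$. Since $\Psi_i=e^{-\alpha s_i}/(n^2+n-i+1)$ is non-decreasing in $i$ (opposite to $\Phi_i$), the useful split point moves from below $1/2$ to above it, i.e.\ from $n\gamma_3$ to $n\gamma_4$, and the hypothesis $s_{n\gamma_1}<0$ now plays the role of $s_{n\gamma_2}\ge 0$.

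Starting from equation~\eqref{eq-psi-base-w} of Lemma~\ref{lemma-psi-base-w}, I split the sum at $i=n\gamma_4$. For $i>n\gamma_4$ the bias $p_i\ge(2n\gamma_4-1)/n^2$ exceeds $1/n$ (because $\gamma_4>1/2$), which makes the per-bin coefficient $-p_i\alpha f^{*}+\alpha f^{*}/n$ strictly negative; for $i\le n\gamma_4$ only $p_i\ge0$ is used. Feeding the resulting inequality into the hypothesis $\EE[\Delta\Psi\mid x(t)]\ge-\alpha f^{*}\Psi/(2n)$ forces the mass of $\Psi$ to concentrate in the low-index half, so that $\Psi\le C_1\,\Psi_{(\le n\gamma_4)}$ for a constant $C_1$ depending only on $\gamma_4$.

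For the upper bound on $\Psi_{(\le n\gamma_4)}$, set $B=\sum_i\max(0,s_i)$, which equals $\sum_{i:s_i<0}(-s_i)$ because $\sum_i s_i=0$. From $s_{n\gamma_1}<0$ and $\gamma_4>\gamma_1$ we have $s_{n\gamma_4}<0$; the $n(1-\gamma_4)$ bins at positions $i\ge n\gamma_4$ each contribute $-s_i\ge-s_{n\gamma_4}$ to $B$, so $-s_{n\gamma_4}\le B/(n(1-\gamma_4))$. Hence for every $i\le n\gamma_4$, $-s_i\le -s_{n\gamma_4}\le B/(n(1-\gamma_4))$, which yields $\Psi_{(\le n\gamma_4)}\le\tfrac{\gamma_4}{n(1-\gamma_4)}\,e^{\alpha B/(n(1-\gamma_4))}$. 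Dually, for the lower bound on $\Phi$, the hypothesis $s_{n\gamma_1}<0$ confines the positive-$s_i$ bins to positions $i<n\gamma_1$; Jensen's inequality applied to $e^{\alpha\cdot}$ on this set $I$, with $\sum_{i\in I}s_i=B$ and $|I|<n\gamma_1$, gives $\Phi\ge\Omega(1/n)\,e^{\alpha B/(n\gamma_1)}$ in the main regime, while in the complementary regime $\alpha B/n=O(1)$ the conclusion $\Gamma<c/n$ is already immediate from the upper bound on $\Psi$ alone.

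Negating the first alternative and combining the two bounds gives
\[
    \tfrac{C_1\gamma_4}{n(1-\gamma_4)}\,e^{\alpha B/(n(1-\gamma_4))} \;\ge\; \Psi \;\ge\; \epsilon\gamma_1\Phi \;\ge\; \epsilon\gamma_1\cdot\Omega(1/n)\,e^{\alpha B/(n\gamma_1)}.
\]
Because $\gamma_1+\gamma_4<1$, the exponent gap $1/\gamma_1-1/(1-\gamma_4)$ is strictly positive, so the inequality forces $e^{\alpha B/n}\le\mathrm{poly}(1/\epsilon)$. Substituting back into the upper bound on $\Psi$ yields $\Psi=O(\mathrm{poly}(1/\epsilon)/n)$, and since $\Phi\le\Psi/(\epsilon\gamma_1)$ is of the same order we obtain $\Gamma=\Phi+\Psi<c/n$ for some $c=\mathrm{poly}(1/\epsilon)$. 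The main obstacle is constant-tracking in complete parallel to Lemma~\ref{lemma-s-gamma-phi-w}: keeping $C_1$ bounded imposes a lower bound on $\gamma_4$ (far enough above $1/2$); the $\Omega(1/n)$ prefactor in the $\Phi$ lower bound requires the Jensen step on the whole positive set rather than a single-bin bound that would only yield $\Omega(1/n^2)$; and ensuring that the final bound on $e^{\alpha B/n}$ is $n$-independent uses precisely the strict inequality $\gamma_1+\gamma_4<1$ imposed in the section preamble.
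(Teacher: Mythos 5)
Your proposal follows essentially the same route as the paper's proof in Appendix~\ref{app:3_w_proof}: split equation~\eqref{eq-psi-base-w} at $i=n\gamma_4$ to deduce $\Psi \le C_1\Psi_{(\le n\gamma_4)}$ from the hypothesis on $\EE[\Delta\Psi]$, bound $\Psi_{(\le n\gamma_4)}$ by $O(1/n)\,e^{\alpha B/(n(1-\gamma_4))}$ via $B=\sum_i\max(0,s_i)$, lower bound $\Phi$ by $\Omega(1/n)\,e^{\alpha B/(n\gamma_1)}$ using $s_{(n\gamma_1)}<0$, and play the two exponentials off each other under $\Psi\ge\epsilon\gamma_1\Phi$, using $\gamma_1+\gamma_4<1$ to force $e^{\alpha B/n}=\mathrm{poly}(1/\epsilon)$ and hence $\Gamma<c/n$. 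The only difference is that you justify the $\Phi$ lower bound (Jensen over the positive-load set plus the regime split on $\alpha B/n$), a step the paper simply asserts, so your write-up is, if anything, slightly more careful with the same argument.
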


\begin{proof}
	The proof is similar to that of Lemma~\ref{lemma-s-gamma-phi-w}. See Appendix~\ref{app:3_w_proof} for details of the proof.
\end{proof}
  Now, we consider combinations of the cases considered so far and can show that the potential function, $\Gamma$, behaves as a super-martingale.
\begin{theorem}\label{thm-super-mart-w}
   For the potential function, $\Gamma$, $\EE[ \Gamma(t+1) | x(t)] \le (1 - \frac{\alpha.f^*}{16n(1 + \epsilon\gamma_1)})\Gamma(t) + \frac{c}{n^2}$, for constant $c = poly(1/\epsilon)$.
\end{theorem}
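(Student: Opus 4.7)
The plan is to mimic the case analysis used in Theorem~\ref{thm-super-mart} for the unweighted case, but substituting the weighted lemmas. We partition the state space according to the sign pattern of $s_{(n\gamma_1)}(t)$ and $s_{(n\gamma_2)}(t)$, since these thresholds are precisely where Lemma~\ref{lemma-phi-easy-case-w} and Lemma~\ref{lemma-psi-easy-case-w} fail to apply. The crucial quantitative difference from the unweighted setting is that $\Gamma(0) = O(1/n)$ rather than $O(\ln n)$, so the additive slack we need to prove becomes $c/n^2$ instead of $c\ln(n)/n$. This is consistent with the base drops guaranteed by the easy-case lemmas (factor $\alpha f^*/(8n)$ and $\alpha f^*/(2n)$) and with the general bounds from Corollary~\ref{cor-phi-gen-w} and Corollary~\ref{cor-psi-gen-w}.

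First I would split into three cases. \textbf{Case 1:} $s_{(n\gamma_1)}(t) \ge 0$ and $s_{(n\gamma_2)}(t) \le 0$. Both Lemma~\ref{lemma-phi-easy-case-w} and Lemma~\ref{lemma-psi-easy-case-w} apply simultaneously, yielding a clean multiplicative drop $\EE[\Gamma(t+1)\mid x(t)] \le (1 - \alpha f^*/(8n))\Gamma(t)$, which is stronger than what is needed. \textbf{Case 2:} $s_{(n\gamma_1)}(t) \ge s_{(n\gamma_2)}(t) > 0$, so the bins are asymmetrically heavily loaded. Lemma~\ref{lemma-psi-easy-case-w} gives the needed drop on $\Psi$. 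For $\Phi$, if $\EE[\Delta\Phi\mid x(t)] \le -\alpha f^*\Phi/(8n)$ then we are done. Otherwise we invoke Lemma~\ref{lemma-s-gamma-phi-w}: either $\Phi < \epsilon\gamma_1\Psi$ (Subcase 2a) or $\Gamma < c/n$ (Subcase 2b). In Subcase 2a, the $\Psi$-drop dominates the $\Phi$-growth (controlled by Corollary~\ref{cor-phi-gen-w}) and, after using $\Phi < \epsilon\gamma_1\Psi$, bounding $\Gamma \le (1+\epsilon\gamma_1)\Psi$, we obtain $\EE[\Delta\Gamma\mid x(t)] \le -\alpha f^*\Gamma/(16n(1+\epsilon\gamma_1))$. \textbf{Case 3:} the symmetric situation $s_{(n\gamma_2)}(t) \le s_{(n\gamma_1)}(t) < 0$, handled identically via Lemma~\ref{lemma-phi-easy-case-w}, Corollary~\ref{cor-psi-gen-w}, and Lemma~\ref{lemma-s-gamma-psi-w}.

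The technical obstacle I expect is Subcases 2b and 3b, where neither side of the potential dominates. Here one must fall back on the pure upper bounds from Corollary~\ref{cor-phi-gen-w} and Corollary~\ref{cor-psi-gen-w}, giving only $\EE[\Delta\Gamma\mid x(t)] \le (\alpha f^* + 2S\alpha^2)\Gamma/n$, which is positive. The argument then is that $\Gamma < c/n$ already, so this positive change is at most $(\alpha f^* + 2S\alpha^2)c/n^2$. We need to verify that after subtracting the negative contribution $-\alpha f^*\Gamma/(16n(1+\epsilon\gamma_1))$ (which is $O(c/n^2)$ in magnitude), the residual fits inside the $c/n^2$ slack after possibly enlarging $c$ by an $\epsilon$-dependent constant. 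Because $\alpha \le \epsilon/(2f^*)$, the factor $\alpha f^* + 2S\alpha^2$ is $O(\epsilon)$, so absorbing it into $c = \mathrm{poly}(1/\epsilon)$ is clean, exactly paralleling the argument in Case 2b/3b of Theorem~\ref{thm-super-mart}. Combining the three cases yields the claimed super-martingale-with-slack inequality.
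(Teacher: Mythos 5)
Your proposal is correct and follows essentially the same route as the paper's proof: the same three-case split on the signs of $s_{(n\gamma_1)}$ and $s_{(n\gamma_2)}$, the same use of Lemmas~\ref{lemma-phi-easy-case-w} and~\ref{lemma-psi-easy-case-w} in the balanced case, the same dichotomy from Lemmas~\ref{lemma-s-gamma-phi-w} and~\ref{lemma-s-gamma-psi-w} (domination $\Phi \lessgtr \epsilon\gamma_1\Psi$ versus $\Gamma < c/n$), and the same absorption of the positive drift from Corollaries~\ref{cor-phi-gen-w} and~\ref{cor-psi-gen-w} into the $c/n^2$ slack when $\Gamma < c/n$. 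No substantive differences to report.
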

\begin{proof}
    We consider the following cases on intervals of values for $s_i$.
\begin{itemize}
\item \textbf{Case 1:} $s_{(n\gamma_1)} \ge 0$ and $s_{(n\gamma_2)} < 0$. Using, Lemma~\ref{lemma-phi-easy-case-w} and Lemma~\ref{lemma-psi-easy-case-w}, we can immediately see that, $\EE[ \Gamma(t+1) | x(t)] \le (1 - \alpha.f^*/8n)\Gamma(t)$ and hence, the result is also true.\\
\item \textbf{Case 2:} $s_{n\gamma_1} \ge s_{n\gamma_2} > 0$. This represents a high load imbalance across the bins. In some cases, $\Phi$ may grow but the asymmetry in the load implies that $\Gamma$ is dominated by $\Psi$. Thus, the decrease in $\Psi$ offsets the increase in $\Phi$ and hence the expected change in $\Gamma$ is negative.\\
  Specifically, if $\EE[\Delta\Phi|x] \le \frac{-\alpha f\Phi}{8n}$, then using Lemma~\ref{lemma-psi-easy-case-w} we get that $\EE[ \Gamma(t+1) | x(t)] \le (1 - \alpha f/8n)\Gamma(t)$; else we consider the following two cases:\\
\begin{itemize}
\item \textbf{Case 2a:} $\Phi < \epsilon\gamma_1 * \Psi$. Here, using Lemma~\ref{lemma-psi-easy-case-w} and Corollary~\ref{cor-phi-gen-w}, we get:
\begin{equation}\nonumber
\begin{aligned}
   \EE[\Delta\Gamma|x] & = \EE[\Delta\Phi|x] + \EE[\Delta\Psi|x]\\
        & \le \frac{\alpha.f^*}{n}.\Phi - \frac{\alpha f^*}{2n} * \Psi\\
        & \le -\frac{\epsilon}{4n}.\Psi\\
        & \le - \frac{\epsilon}{4n(1 + \epsilon\gamma_1)}\Gamma
\end{aligned}
\end{equation}
\item \textbf{Case 2b:} $\Gamma < c/n$. Here, using Corollary~\ref{cor-psi-gen-w} and Corollary~\ref{cor-phi-gen-w}, we get:
\begin{equation}\nonumber
\begin{aligned}
   \EE[\Delta\Gamma|x] & \le \frac{\alpha.f^*}{n} * \Gamma
        & \le \frac{c\alpha.f^*}{n^2}
\end{aligned}
\end{equation}
   But, $c/n^2 - ((\alpha.f^*/8n) * \Gamma) \ge c/n^2(1 - \alpha.f^*/8) \ge c/n^2(1 - \alpha.f^*/2) \ge \frac{c\alpha.f^*}{n^2}$.\\
    Hence,$\EE[\Delta\Gamma|x] \le - \frac{\alpha.f^*\Gamma}{8n} + \frac{c}{n}$.
\end{itemize}
\item \textbf{Case 3:} $s_{n\gamma_2} \le s_{n\gamma_1} < 0$. Here, if $\EE[\Delta\Psi|x] \le \frac{-\alpha.f^*}{2n}\Psi$, then using Lemma~\ref{lemma-phi-easy-case-w}, we get that $\EE[ \Gamma(t+1) | x(t)] \le (1 - \alpha.f^*/8n)\Gamma(t)$; else we consider the following two cases:
\begin{itemize}
\item \textbf{Case 3a:} $\Psi < \epsilon\gamma_1 * \Phi$. Here, using Lemma~\ref{lemma-phi-easy-case-w} and Corollary~\ref{cor-psi-gen-w}, we get:
\begin{equation}\nonumber
\begin{aligned}
   \EE[\Delta\Gamma|x] & = \EE[\Delta\Phi|x] + \EE[\Delta\Psi|x]\\
        & \le -(\alpha.f^*/8n)\Phi + \alpha.f^*/n * \Psi\\
        & \le -(\alpha.f^*/8n).\Phi + (\gamma_1\epsilon\alpha.f^*/n)\Phi\\
        & \le \frac{-\alpha.f^*}{16n}\Phi\\
        & \le \frac{-\alpha.f^*}{16n(1+\epsilon\gamma_1)}*\Gamma
\end{aligned}
\end{equation}
\item \textbf{Case 3b:} $\Gamma < c/n$. Here, using Corollary~\ref{cor-phi-gen-w} and Corollary~\ref{cor-psi-gen-w}, we get:
\begin{equation}\nonumber
\begin{aligned}
   \EE[\Delta\Gamma|x] & = \alpha.f^*/n * \Gamma
        & \le \frac{c\alpha.f^*}{n^2}
\end{aligned}
\end{equation}
  Hence, this case follows similarly as \textit{Case 2b} above. 
\end{itemize}
\end{itemize}
%
\end{proof}
  Now, we can prove using induction that the expected value of $\Gamma$ remains bounded.
\begin{theorem}\label{thm-pot-up-bound-w}
  For any time $t \ge 0$, $\EE[\Gamma(t)] \le \frac{16c(1 + \epsilon\gamma_1)}{n\alpha.f^*}$
\end{theorem}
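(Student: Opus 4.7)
The plan is a straightforward induction on $t$, using Theorem~\ref{thm-super-mart-w} (which already does all the case analysis work) combined with the tower property of conditional expectation. Let $A := \frac{16c(1 + \epsilon\gamma_1)}{n\alpha f^*}$ denote the target upper bound and let $\beta := \frac{\alpha f^*}{16n(1 + \epsilon\gamma_1)}$ denote the contraction factor from Theorem~\ref{thm-super-mart-w}, so that $\beta \cdot A = \frac{c}{n^2}$, which is exactly the additive term in the super-martingale bound. This algebraic identity is the key observation that makes $A$ a fixed point of the recursion.

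For the base case $t = 0$, all dimensional loads are zero, so $\Gamma(0) = \sum_{i=1}^{n}\bigl[\frac{1}{n^2+i} + \frac{1}{n^2+n-i+1}\bigr] \le \frac{2n}{n^2} = \frac{2}{n}$, which is bounded by $A$ provided $c$ is chosen sufficiently large relative to the constants involved (which is consistent with the $c = \mathrm{poly}(1/\epsilon)$ hypothesis from Theorem~\ref{thm-super-mart-w} since $\alpha f^*$ is bounded and $\epsilon \le 1/4$).

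For the inductive step, assume $\EE[\Gamma(t)] \le A$. Applying the tower property and Theorem~\ref{thm-super-mart-w} gives
\begin{equation}\nonumber
\begin{aligned}
\EE[\Gamma(t+1)] &= \EE\bigl[\EE[\Gamma(t+1) \mid x(t)]\bigr] \\
&\le \EE\bigl[(1-\beta)\Gamma(t)\bigr] + \frac{c}{n^2} \\
&\le (1-\beta) A + \frac{c}{n^2} \\
&= A - \beta A + \frac{c}{n^2} = A,
\end{aligned}
\end{equation}
where the last equality is the fixed-point identity noted above. This closes the induction.

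I do not expect a main obstacle here: all the difficult probabilistic and combinatorial work has already been absorbed into Theorem~\ref{thm-super-mart-w}, and what remains is just the deterministic recurrence $a_{t+1} \le (1-\beta) a_t + \delta$ which has stationary point $\delta/\beta$. The only thing to watch is that the constant $c$ in the statement of this theorem must be chosen to dominate the $c$ appearing in Theorem~\ref{thm-super-mart-w} as well as to make the base case $\Gamma(0) \le 2/n \le A$ hold; taking a sufficiently large $\mathrm{poly}(1/\epsilon)$ constant handles both requirements simultaneously.
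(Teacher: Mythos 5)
Your proposal is correct and follows essentially the same route as the paper: induction on $t$, the tower property, Theorem~\ref{thm-super-mart-w}, and the fixed-point identity $\frac{\alpha f^*}{16n(1+\epsilon\gamma_1)}\cdot\frac{16c(1+\epsilon\gamma_1)}{n\alpha f^*}=\frac{c}{n^2}$ (which the paper uses implicitly in its cancellation step). Your explicit remark about choosing $c$ large enough for the base case $\Gamma(0)\le 2/n$ is a minor tidy-up the paper glosses over, but it is the same argument.
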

\begin{proof}
  Using induction we can prove this claim. For $t = 0$, it is trivially true since $\Gamma(0) \le 2/n$.
Using Theorem~\ref{thm-super-mart-w}, we get:
\begin{equation}\nonumber
\begin{aligned}
   \EE[\Gamma(t+1)] & = E[E[\Gamma(t+1)| \Gamma(t)]]\\
            & \le \EE[ (1 - \frac{\alpha.f^*}{16n(1+\epsilon\gamma_1)})\Gamma(t) + \frac{c}{n}]\\
            & \le \frac{16c(1 + \epsilon\gamma_1)}{n\alpha.f^*} - \frac{c}{n^2} + \frac{c}{n^2}\\
            & \le \frac{16c(1 + \epsilon\gamma_1)}{n\alpha.f^*}
\end{aligned}
\end{equation}
%
\end{proof}
\begin{theorem}\textbf{Variable $f$ Case (Weighted Case) Gap:}
  Using the bias in the probability distribution in favor of lightly loaded md-bins as obtained from the $d$-choice process, and assuming that in each ball, each dimension is chosen as $1$ with probability $q$ (variable $f$ case); the expected and probabilistic upper bound on the gap (maximum dimensional gap) across the multidimensional bins is given as follows. Let, $\delta = \frac{16c(1 + \epsilon\gamma_1)}{\alpha.f*}$, and $\zeta > 0$, then:
\begin{equation}\nonumber
\begin{aligned}
    E[Gap(t)] & \le 2q\log(n)/\epsilon + 2q\log(\delta)/\epsilon\\
    Pr[Gap(t) & > (mq/n)^{1/2+\zeta}(4q\log(n)/\epsilon + 4q\log(\delta)/\epsilon]) \le \frac{1}{qm}\\
\end{aligned}
\end{equation}
\end{theorem}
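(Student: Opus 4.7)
The plan is to mirror the structure of Theorem~\ref{thm-fixed-f} for the fixed-$f$ case, using Theorem~\ref{thm-pot-up-bound-w} as the starting input. By that theorem, $\EE[\Gamma(t)] \le \delta/n$, and since every individual summand of $\Gamma$ satisfies $e^{\alpha s_i}/(n^2+i) \le \Gamma(t)$, Markov's inequality gives
\begin{equation}\nonumber
\Pr\bigl[e^{\alpha s_a} > 2\delta n^2\bigr]\le \Pr\bigl[\Gamma(t) \ge n\cdot \EE[\Gamma(t)]\bigr]\le 1/n,
\end{equation}
where $a$ denotes the bin attaining the maximum dimensional gap. Taking logarithms and using $\alpha \le \epsilon/(2f^*)$, this yields $\EE[s_a] \le (2f^*/\epsilon)(\log(n) + \log(\delta))$, with the analogous deviation bound $s_a \le (4f^*/\epsilon)(\log(n) + \log(\delta))$ holding with probability at least $1-1/n$.

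Next I would translate the bound on the sum load $s_a$ into a bound on the number of excess balls $y_a$ in bin $a$. In the variable-$f$ setting, each ball deposited in bin $a$ contributes its random weight $f_b$ (with $\EE[f_b] = qD = f^*$) to the raw total, so $\sum_{d} \mathrm{load}_a^d$ is a sum of i.i.d.\ contributions of mean $f^*$, while the baseline is $D \cdot qM/n = f^* M/n$. Consequently $\EE[s_a] = f^*\, \EE[y_a]$, and (using the assumed finite second moment of the weight distribution to suppress fluctuations) $y_a \approx s_a/f^*$ up to lower-order terms. This gives $\EE[y_a] \le (2/\epsilon)(\log(n)+\log(\delta))$ and $y_a \le (4/\epsilon)(\log(n)+\log(\delta))$ w.h.p.

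Finally I would pass from the number of excess balls to the gap in a specific dimension via Chernoff, exactly as in the fixed-$f$ case. Conditioned on bin $a$ holding $b_a = M/n + y_a$ balls, the load in dimension $m$ is $B(b_a, q)$ since each ball independently populates dimension $m$ with probability $q$; its mean is $qb_a = qM/n + qy_a$, so the gap decomposes as $qy_a + (B(b_a,q) - qb_a)$. Setting $t = (qm/n)^{1/2+\zeta}$, a Chernoff bound as in Theorem~\ref{thm-fixed-f} yields $\Pr\bigl[B(b_a,q) - qb_a > t\bigr]\le n/(qm)$. Combining with $\EE[\mathrm{Gap}(t)] = q\,\EE[y_a]$ gives the expected bound $\EE[\mathrm{Gap}(t)] \le 2q\log(n)/\epsilon + 2q\log(\delta)/\epsilon$, and unioning the Markov and Chernoff tail events gives $\Pr\bigl[\mathrm{Gap}(t) > (qm/n)^{1/2+\zeta}(4q\log(n)/\epsilon + 4q\log(\delta)/\epsilon)\bigr] \le 1/n + n/(qm) = O(1/(qm))$.

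The main obstacle is the middle step: converting the sum-load bound on $s_a$ into a bound on $y_a$ in the variable-$f$ case. In the fixed-$f$ setting this was the trivial identity $s_a = f \cdot y_a$; here it requires invoking the finite second moment assumption on the $f_b$ distribution to argue concentration of $\sum_{b\in a} f_b$ around its mean $f^* b_a$, so that the high-probability bound on $s_a$ cleanly transfers to one on $y_a$. Once this identification is in place, the Markov-then-Chernoff pipeline proceeds verbatim from Theorem~\ref{thm-fixed-f}.
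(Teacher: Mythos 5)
Your proposal follows essentially the same route as the paper: invoke Theorem~\ref{thm-pot-up-bound-w} to bound $\EE[e^{\alpha s_a}]$, take logarithms to bound the excess sum load, convert to excess ball count via $f^*=Dq$ (the paper writes this as $E[x_a^m]=kq$, $\sum_{d\ne m}E[x_a^d]=(D-1)kq$, which is your $s_a\approx f^* y_a$ step), and finish the tail bound with the same Chernoff argument borrowed from Theorem~\ref{thm-fixed-f}. The only cosmetic difference is how the two tail events are combined (your union bound gives $1/n+n/(qm)$ versus the paper's product $1/n\cdot n/(qm)=1/(qm)$), which does not change the substance of the argument.
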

\begin{proof}
Since, each dimension is assigned $1$ with probability $q$, the average number of ones per md-ball is $f^* = Dq$.
Let, $a$ be the winning md-bin and $m$ be the winning dimension that represents $Gap(t)$. The number of ones in any ball, $f$, follows a Binomial($D$, $q$) distribution and has finite second moment. Using the analysis for the weighted balls case, we get, $\EE[ e^{\alpha.s_a} ] \le n\delta$, where $\alpha.f^* \le \epsilon/2$. So, 
$\EE[ e^{\alpha.(x_a^m + \sum_{d \ne m} x_a^d)} ] \le n\delta$. 
  Taking, logarithm of both sides, we get:
\begin{equation}\label{eq:sum-bound-w}
\begin{aligned}
    E[x_a^m] + \sum_{d \ne m} E[x_a^d] & \le \log(n)/\alpha + \log(\delta)/\alpha\\
\end{aligned}
\end{equation}
%
If $k$ is the expected number of balls were thrown in bin $a$ minus the average number of balls per bin, then $E[x_a^m] = kq$ and similarly, $\sum_{d \ne m} E[x_a^d] = (D-1)kq$. Hence, we get:
\begin{equation}\nonumber
\begin{aligned}
    Dkq & \le 2f^*\log(n)/\epsilon + 2f^*\log(\delta)/\epsilon\\
    \Rightarrow k & \le 2\log(n)/\epsilon + 2\log(\delta)/\epsilon\\
    \Rightarrow E[x_a^m] & \le 2q\log(n)/\epsilon + 2q\log(\delta)/\epsilon\\
\end{aligned}
\end{equation}
  The probabilistic bound can be computed similar to the fixed $f$ case (Theorem~\ref{thm-fixed-f}) using the Chernoff bound.
%
$\Box$
\end{proof}
  Note that the for the scalar case, when the expected weight of the distribution is $W^*$, the upper bound on the gap obtained is $O(W^*\log(n))$, which after normalization, i.e. $E(W) = W^* = 1$, leads to $O(\log(n))$ gap. This improves upon the best prior known bound of $O(n^c)$ given in~\cite{kunal-weighted}. 
\section{$(1+\beta$) Choice Process with Multidimensional Balls and Bins}
\par In this section we present upper and lower bounds on the gap for the $(1+\beta)$ choice process with multidimensional balls and bins.
\subsection{Markov Chain Specification}\label{sec:markov-chain-beta}
\label{sec:markov}

As mentioned earlier, a balls-and-bins process can be characterized by a probability distribution vector $(p_1, p_2, p_3,...p_n)$, where, $p_i$ is the probability a ball is placed in the $i^{th}$ most loaded multidimensional bin. Let $x_i^d(t)$ be the random variable, that denotes the \textit{weight in dimension $d$ for bin $i$} and is equal to the load of the $d^{th}$ dimension of the $i^{th}$ bin minus the average load in dimension $d$. So, $\sum_{i=1}^{n} x_i^d(t) = 0, \forall d \in [1..D]$. Let, $s_i(t)$ denote the sum of the loads (minus corresponding dimension averages) across all $D$ dimensions for the bin $i$ at time $t$, expressed as $s_i(t) = \sum_{d=1}^D x_i^d$. It is assumed that bins are sorted by $s_i(t)$. So, $s_i \ge s_{i+1} \forall i \in [1..n-1]$. The process defines a Markov chain over the matrices, $x(t)$ as follows:

\begin{itemize}
\item Sample $j \in_p [n]$.
\item Set $r_i = s_i(t) + f(1 - 1/n)$, for $i = j$. Since, each md-ball has $f$ non-zero entries , so each of these $f$ dimensions in the bin, $i$, will be incremented by $1 - 1/n$.
\item Set $r_i = s_i(t) - f/n$, for $i \ne j$. Since, each md-ball has $f$ non-zero entries, so the each of the corresponding $f$ dimensions in the bin, $i$, will be decremented by $1/n$. This ensures that for each dimension the sum across all the bins is $0$.
\item Obtain $s(t+1)$ by sorting $r(t)$.
\end{itemize}

Fig.~\ref{fig:md-balls-bins} (in the Appendix~\ref{app:fig}) illustrates a multidimensional balls and bins scenario. The bounds on the gap will be proven for a family of probability distribution vectors $p$. As mentioned earlier, he md-bins are sorted based on their total dimensional load, i.e. sum of the weights across all dimensions for each bin ($s_i$ for bin $i$). 
We make the following assumptions:
\begin{itemize}
\item 
  $\forall i \in [1,n-1], p_i \le p_{i+1}$
  This assumption states that the allocation rule is no worse than the $1$-choice scheme.
\item 
  For some constants, $\epsilon > 0$, $\theta > 1$ and $0 < \gamma_3 < \gamma_4 < 1$, where $\gamma_3 + \gamma_4 = 1$, it holds that:
\begin{equation}\label{ass-2}
  p_{(n\gamma_3)} \le \frac{(1 - \theta\epsilon)}{n},\quad \text{and},\quad p_{(n\gamma_4)} \ge \frac{(1 + \theta\epsilon)}{n}
\end{equation}
  This assumption states that the allocation rule strictly prefers the least loaded $\gamma_3$ fraction of the $n$ bins over the most loaded $(1 - \gamma_4)$ fraction. 
\end{itemize}

These assumptions imply that for some constants, $\gamma_1$ and $\gamma_2$, where, $0 < \gamma_1 < \gamma_3 < 1/2 < \gamma_4 < \gamma_2 < 1$ and $\theta\gamma_1 = 1$; $\gamma_1 + \gamma_2 = 1$, we have the following:\\ $\sum_{i \ge (n\gamma_2)} p_i \ge (\gamma_1 + \epsilon)$ and $\sum_{i \le (n\gamma_1)} p_i \le (\gamma_1 - \epsilon)$. This will be useful in the proof. Note that the $(1+\beta)$ choice process satisfies these assumptions for $\epsilon = \beta(1 -2\gamma_3)/\theta$, since $p_{(n\gamma_3)} \le (1 - \beta)/n + 2(n\gamma_3 - 1)\beta/n^2 \le (1 - \beta(1-2\gamma_3))/n$, and similarly $p_{(n\gamma_4)} \ge (1 + \beta(2\gamma_4 - 1))/n$.

In the remaining analysis, we assume that when an md-ball arrives, then the selection of the bins is based on $s_i$, i.e. total sum of weights across all dimensions for the randomly selected bins (Fig.~\ref{fig:md-balls-bins} in Appendix~\ref{app:fig}). In particular, for the $(1+\beta)$ choice process, when two bins are randomly selected (with $\beta$ probability), the md-ball (with $f$ non-zero entries) is assigned to the md-bin with the lowest $s_i$. Using this selection mechanism, we prove the upper and lower bounds on the gap obtained for the $(1+\beta)$ choice process. Note that, this is a different allocation mechanism than that considered in~\cite{md-mm} where the \textit{max} objective is considered over the restricted set of $f$ populated dimensions in the current md-ball.

\subsection{Upper Bound On the Gap}\label{sec:upper-bound-beta}
  We assume that $\epsilon \le 1/4$. Further, let $\alpha = \epsilon/2f$. Define the following potential functions:
\begin{equation}
\begin{aligned}
     \Phi(t) & = \Phi(s(t)) = \sum_{i=1}^{n} e^{\alpha.s_i}\\
     \Psi(t) & = \Psi(s(t)) = \sum_{i=1}^{n} e^{-\alpha.s_i}\\
     \Gamma(t) & = \Gamma(s(t)) = \Phi(t) + \Psi(t)     
\end{aligned}
\end{equation}
  where, $s_i(t) = \sum_{d=1}^{D} x_i^d(t)$

In the beginning, each dimension for each bin has $0$ weight, thus $s_i = 0, \forall  i$ and hence, $\Gamma(0) = 2n$. We show that if $\Gamma(x(t)) \ge an$ for some $a > 0$, then $\mathbb{E}[\Gamma(t+1) | x(t)] \le (1 - \frac{\epsilon^2(1 - 2\gamma_1)}{4n(1 + \epsilon\gamma_1)}) * \Gamma(t)$. This helps in demonstrating that for every given $t$, $\mathbb{E}[\Gamma(t)] \in O(n)$. This implies that the maximum gap is $O(\log (n))$ w.h.p.

First, consider the change in $\Phi(t)$ (also refers to $\Phi$ by default) and $\Psi(t)$ (also refers to $\Psi$ by default) separately when a ball is thrown with the given probability distribution. 
\begin{lemma}
  When an md-ball is thrown into an md-bin, the following inequality holds:
\begin{equation}
   \EE[ \Phi(t+1) - \Phi(t) | x(t) ] \le \sum_{i=1}^{n} [ p_i * (\alpha.f + (\alpha.f)^2)  - \alpha.f/n ]. e^{\alpha.s_i} 
\end{equation}
\end{lemma}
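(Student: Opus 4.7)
The plan is to mimic the computation of Lemma~\ref{lemma-phi-base}, but with a technical simplification: the potential function here is $\Phi(t)=\sum_i e^{\alpha s_i}$, which is \emph{symmetric} in the bin indices, so unlike the $d$-choice potential $\sum_i e^{\alpha s_i}/i$, I do not have to track how the receiving bin moves in the sorted order — after the update, sorting permutes the $e^{\alpha s_i}$ terms but their sum is unchanged. Concretely, I would condition on the identity $j$ of the md-bin that receives the ball (this happens with probability $p_j$ under the $(1+\beta)$ rule defined in Section~\ref{sec:markov-chain-beta}) and track how each $s_i$ updates: $s_j\to s_j+f(1-1/n)$ and $s_i\to s_i-f/n$ for $i\ne j$.

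Step one: write the conditional change $\Delta_j:=\EE[\Phi(t+1)-\Phi(t)\mid \text{ball goes to bin }j, x(t)]$ as
\[
\Delta_j \;=\; e^{\alpha s_j}\bigl(e^{\alpha f(1-1/n)}-1\bigr) \;+\; \sum_{i\ne j} e^{\alpha s_i}\bigl(e^{-\alpha f/n}-1\bigr).
\]
Factor $e^{\alpha f(1-1/n)}=e^{-\alpha f/n}\cdot e^{\alpha f}$ in the first term and complete the sum in the second term to get
\[
\Delta_j \;=\; e^{\alpha s_j}\,e^{-\alpha f/n}\bigl(e^{\alpha f}-1\bigr) \;+\; \bigl(e^{-\alpha f/n}-1\bigr)\,\Phi(t).
\]

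Step two: take expectation over $j$ with weights $p_j$. Since $\sum_j p_j=1$ and $\Phi(t)=\sum_i e^{\alpha s_i}$, the second term above contributes $\sum_i (e^{-\alpha f/n}-1)\,e^{\alpha s_i}$, and the first contributes $\sum_i p_i\,e^{-\alpha f/n}(e^{\alpha f}-1)\,e^{\alpha s_i}$. Combining,
\[
\EE[\Phi(t+1)-\Phi(t)\mid x(t)] \;=\; \sum_{i=1}^{n}\Bigl[\,p_i\,e^{-\alpha f/n}\bigl(e^{\alpha f}-1\bigr)+\bigl(e^{-\alpha f/n}-1\bigr)\Bigr]\,e^{\alpha s_i}.
\]

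Step three: apply Taylor bounds exactly as in Lemma~\ref{lemma-phi-base}. Using $e^{\alpha f}-1\le \alpha f+(\alpha f)^2$ (valid because $\alpha f=\epsilon/2\le 1/8$), $e^{-\alpha f/n}\le 1$, and $e^{-\alpha f/n}-1\le -\alpha f/n+O((\alpha f/n)^2)$, the bracketed coefficient of $e^{\alpha s_i}$ is at most $p_i(\alpha f+(\alpha f)^2)-\alpha f/n$ up to lower-order terms in $1/n$ that are absorbed. This yields the stated inequality.

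There is no real obstacle: the symmetry of $\Phi$ removes the only delicate point in the $d$-choice proof (handling the jump of the receiving bin to the beginning of its equi-load group). The only thing to watch is making sure the $O((\alpha f)^2/n)$ remainders from the Taylor expansions are indeed dominated by the retained terms, which holds because $\alpha f=\epsilon/2\le 1/8$ and $n$ is large; this is exactly the same slack used to write $\lessapprox$ in Lemma~\ref{lemma-phi-base}.
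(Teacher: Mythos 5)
Your proposal is correct and follows essentially the same route as the paper: condition on the receiving bin $j$, write its contribution as $e^{\alpha s_j}(e^{\alpha f(1-1/n)}-1)$ and every other bin's as $e^{\alpha s_i}(e^{-\alpha f/n}-1)$, factor out $e^{-\alpha f/n}$, average over $j$ with weights $p_j$, and apply the same Taylor bounds $e^{-\alpha f/n}(e^{\alpha f}-1)\lessapprox \alpha f+(\alpha f)^2$ and $e^{-\alpha f/n}-1\lessapprox -\alpha f/n$. Your observation that the symmetric potential $\sum_i e^{\alpha s_i}$ makes the sorted-order bookkeeping of the $d$-choice (equi-load group) version unnecessary is accurate, but it is a remark rather than a departure from the paper's argument.
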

\begin{proof}
  Let $\Delta_i$ be the expected change in $\Phi$ if the ball is put in bin, $i$. So, $r_i(t+1) = s_i + f(1-1/n)$; and for $j \ne i$, $r_j(t+1) = s_j(t) - f/n$. The new values i.e. $s(t+1)$ are obtained by sorting $r(t+1)$ and $\Phi(s) = \Phi(r)$. The expected contribution of bin, $i$, to $\Delta_i$ is given as follows:
\begin{equation}\nonumber
\begin{aligned}
   \EE[ e^{\alpha.(s_i + f(1-1/n))} ] - e^{\alpha.s_i}
   & = e^{\alpha.s_i} [ e^{\alpha.f(1-1/n)} - 1 ]
\end{aligned}
\end{equation}
 Similarly, the expected contribution of bin, $j$ ($j \ne i$) to $\Delta_i$ is given as:
\begin{equation}\nonumber
\begin{aligned}
   \EE[ e^{\alpha.(s_j - f/n)} ] - e^{\alpha.s_j}
   & = e^{\alpha.s_j} [ e^{-\alpha.f/n} - 1 ]
\end{aligned}
\end{equation}
 Therefore, $\Delta_i$ is given as follows:
\begin{equation}\nonumber
\begin{aligned}
   \Delta_i & = e^{\alpha.s_i} [ e^{\alpha.f(1-1/n)} - 1] \, + \,
              \sum_{j \ne i} e^{\alpha.s_j} (e^{-\alpha.f/n} - 1)\\
   & = e^{\alpha(s_i- f/n)} (e^{\alpha.f} - 1) + (e^{-\alpha.f/n} - 1).\Phi
\end{aligned}
\end{equation}
  Thus, we get the overall expected change in $\Phi$ as follows:
\begin{equation}
\begin{aligned}
   \EE[ \Phi(t+1) - \Phi(t) | x(t) ] & = \sum_{i=1}^{n} p_i * \Delta_i\\
   & = \sum_{i=1}^{n} p_i * [ e^{\alpha(s_i- f/n)} (e^{\alpha.f} - 1) + (e^{-\alpha.f/n} - 1).\Phi ]\\
   & = \sum_{i=1}^{n} p_i * e^{\alpha(s_i- f/n)} (e^{\alpha.f} - 1) +                        (e^{-\alpha.f/n} - 1).p_i.\Phi\\
   & = \sum_{i=1}^{n} [ p_i * e^{-\alpha.f/n} (e^{\alpha.f} - 1) + (e^{-\alpha.f/n} - 1) ]. e^{\alpha.s_i}
\end{aligned}
\end{equation}
 Now, $e^{(-\alpha.f/n)} * (e^{\alpha.f} - 1)$ can be approximated as follows:
\begin{equation}\nonumber
\begin{aligned}
   e^{(-\alpha.f/n)}.(e^{\alpha.f} - 1) & \le
   (1 - \alpha.f/n + (\alpha.f/n)^2) * (1 + \alpha.f + (\alpha.f)^2 - 1)\\
   & \sim \alpha.f + (\alpha.f)^2 + O((\alpha.f)^2/n)\\   
   e^{(-\alpha.f/n)}.(e^{\alpha.f} - 1) & \lessapprox (\alpha.f + (\alpha.f)^2)
\end{aligned}
\end{equation}
Above, since, $(\alpha.f)^2/n$ is very small for large $n$, we have ignored the small terms. Similarly, $(e^{-\alpha.f/n} - 1) \lessapprox -\alpha.f/n$
Hence, the expected change in $\Phi$ can be given by:
\begin{equation}\label{eq-phi-base-beta}
\begin{aligned}
   \EE[ \Phi(t+1) - \Phi(t) | x(t) ]
   \le \sum_{i=1}^{n} [ p_i * (\alpha.f + (\alpha.f)^2)  - \alpha.f/n ]. e^{\alpha.s_i}
\end{aligned}
\end{equation}
%
\end{proof}
 Simplifying further and observing that $\Phi_i$ decreases and $p_i$ increases with increasing $i$ from $1$ to $n$, one gets the following Corollary.
\begin{corollary}\label{cor-phi-gen-beta}
   $\EE[ \Phi(t+1) - \Phi(t) | x(t) ] \le (\alpha.f)^2 * \Phi / n$ 
\end{corollary}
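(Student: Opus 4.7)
The plan is to start from inequality~\eqref{eq-phi-base-beta} and separate the sum into two pieces:
\[
   \EE[\Phi(t+1)-\Phi(t)\mid x(t)] \;\le\; (\alpha f + (\alpha f)^2)\sum_{i=1}^{n} p_i\,e^{\alpha s_i} \;-\; \frac{\alpha f}{n}\sum_{i=1}^{n} e^{\alpha s_i}.
\]
The second sum is exactly $(\alpha f/n)\Phi$, so the task reduces to obtaining the bound $\sum_i p_i e^{\alpha s_i} \le \Phi/n$.

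To get that bound, I would invoke the ordering structure: by the first assumption in Section~\ref{sec:markov-chain-beta}, $p_1 \le p_2 \le \dots \le p_n$, while the sorting convention $s_i \ge s_{i+1}$ makes the sequence $e^{\alpha s_i}$ non-increasing. The two sequences are therefore oppositely ordered, so Chebyshev's sum inequality yields
\[
   \sum_{i=1}^{n} p_i\,e^{\alpha s_i} \;\le\; \frac{1}{n}\Bigl(\sum_{i=1}^{n} p_i\Bigr)\Bigl(\sum_{i=1}^{n} e^{\alpha s_i}\Bigr) \;=\; \frac{\Phi}{n},
\]
using $\sum_i p_i = 1$. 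Equivalently, one can argue as in Corollary~\ref{cor-phi-gen} that among all probability vectors dominated by the coordinate-wise ordering implied by the assumption, the uniform vector $p_i = 1/n$ maximizes the weighted sum $\sum p_i \Phi_i$ when $\Phi_i$ is decreasing.

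Plugging back in gives
\[
   \EE[\Phi(t+1)-\Phi(t)\mid x(t)] \;\le\; (\alpha f + (\alpha f)^2)\frac{\Phi}{n} - \frac{\alpha f}{n}\Phi \;=\; \frac{(\alpha f)^2\,\Phi}{n},
\]
which is exactly the claim. There is no real obstacle here: the entire content is the rearrangement step, and the linear-in-$\alpha f$ terms cancel by design, leaving only the quadratic correction. The only thing worth flagging is that this corollary is a ``worst-case'' bound (meant for regimes where the bias in $p_i$ provides no help), which is why it discards the negative drift term; the stronger drift inequalities needed later will use the bias explicitly rather than go through this corollary.
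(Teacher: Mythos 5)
Your proof is correct and follows essentially the same route as the paper: the paper's argument that the right-hand side of~\eqref{eq-phi-base-beta} is maximized at $p_i = 1/n$ (since $p_i$ is non-decreasing and $e^{\alpha s_i}$ non-increasing) is exactly the rearrangement step you make explicit via Chebyshev's sum inequality, after which the linear terms cancel identically in both treatments. Your version is just a slightly more rigorous phrasing of the same cancellation, so there is nothing further to add.
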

\begin{proof}
   Since, $p_i$ are increasing and $\Phi_i$ are decreasing, the maximum value taken by RHS of equation~\eqref{eq-phi-base-beta} will be when $p_i = 1/n$ for all $i \in [1..n]$. Simplifying, we get the result.
%
\end{proof}

  Similarly, the change in $\Psi$ can be derived as follows.
\begin{lemma}
  When an md-ball is thrown into an md-bin, the following inequality holds:
\begin{equation}\label{eq-psi-base-beta}
   \EE[ \Psi(t+1) - \Psi(t) | x(t) ] \le
   \sum_{i=1}^{n} [ p_i * (-\alpha.f + (\alpha.f)^2)  + \alpha.f/n ]. e^{-\alpha.s_i} 
\end{equation}
\end{lemma}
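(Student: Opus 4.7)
The plan is to mirror the proof of the preceding $\Phi$ lemma, flipping the sign of the exponent throughout and tracking how this flips the direction of the resulting bounds. First, I would introduce $\Delta_i$ as the expected change in $\Psi$ conditional on the ball being placed in bin $i$, then compute the per-bin contributions: the receiving bin $i$ contributes $e^{-\alpha s_i}(e^{-\alpha f(1-1/n)} - 1)$ (a negative quantity, since its load increases), while each other bin $j \ne i$ contributes $e^{-\alpha s_j}(e^{\alpha f/n} - 1)$ (a small positive quantity, since its load decreases by $f/n$).

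Next I would sum and refactor, writing $e^{-\alpha f(1-1/n)} = e^{\alpha f/n}\cdot e^{-\alpha f}$ so that the bin-$i$ contribution splits as $e^{-\alpha s_i}\cdot e^{\alpha f/n}(e^{-\alpha f}-1) + e^{-\alpha s_i}(e^{\alpha f/n}-1)$, and then combine the latter piece with the non-$i$ sum via $\sum_{j=1}^{n} e^{-\alpha s_j} = \Psi$. This produces
\[
   \Delta_i \;=\; e^{\alpha f/n}\, e^{-\alpha s_i}\,(e^{-\alpha f}-1) \;+\; (e^{\alpha f/n}-1)\,\Psi.
\]
Taking the $p_i$-weighted sum yields $\EE[\Psi(t+1)-\Psi(t)\mid x(t)] = \sum_i p_i \Delta_i$, and the second piece becomes $(e^{\alpha f/n}-1)\Psi$ using $\sum_i p_i = 1$, which I would then rewrite as $\sum_i (e^{\alpha f/n}-1)\, e^{-\alpha s_i}$ to match the form of the target inequality.

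The last step applies the same asymptotic Taylor expansions used in the $\Phi$ proof but with the signs reversed. One has $e^{\alpha f/n}(e^{-\alpha f}-1) \lessapprox -\alpha f + (\alpha f)^2$ and $e^{\alpha f/n}-1 \lessapprox \alpha f/n$, both upper bounds modulo $O((\alpha f)^2/n)$ remainders that vanish for large $n$ under the choice $\alpha = \epsilon/2f$. Substituting into the $p_i$-weighted sum produces exactly the stated inequality.

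The main obstacle is sign bookkeeping: since the leading term $-\alpha f$ in the receiving-bin factor is now negative, one must verify that the Taylor truncation sits on the \emph{upper} side of the true value rather than the lower. Concretely, for small $\alpha f$ one has $-\alpha f \le e^{-\alpha f}-1 \le -\alpha f + (\alpha f)^2/2$, and multiplication by $e^{\alpha f/n} > 1$ (which would ordinarily make a negative quantity more negative) is absorbed into the slack of the $(\alpha f)^2$ upper bound; similarly, $e^{\alpha f/n}-1 \le \alpha f/n + O((\alpha f/n)^2)$ remains an upper bound. Beyond this careful tracking of signs, the argument uses no techniques beyond those already employed for the $\Phi$ lemma.
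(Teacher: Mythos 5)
Your proposal is correct and follows essentially the same route as the paper's own argument: compute the receiving-bin and non-receiving-bin contributions, refactor via $e^{-\alpha(s_i+f(1-1/n))}=e^{\alpha f/n}e^{-\alpha s_i}e^{-\alpha f}$ to get $\Delta_i = e^{\alpha f/n}e^{-\alpha s_i}(e^{-\alpha f}-1)+(e^{\alpha f/n}-1)\Psi$, take the $p_i$-weighted sum, and apply the same first-order expansions $e^{\alpha f/n}(e^{-\alpha f}-1)\lessapprox -\alpha f+(\alpha f)^2$ and $e^{\alpha f/n}-1\lessapprox \alpha f/n$. Your sign check is if anything stronger than needed, since multiplying the negative factor $e^{-\alpha f}-1$ by $e^{\alpha f/n}>1$ only decreases it and thus preserves the upper bound directly.
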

\REM{
\begin{proof}
  Let $\Lambda_i$ be the expected change in $\Psi$ if the ball is put in bin, $i$. So, $r_i(t+1) = s_i + f(1-1/n)$; and for $j \ne i$, $r_j(t+1) = s_j(t) - f/n$. The new values i.e. $s(t+1)$ are obtained by sorting $r(t+1)$ and $\Psi(s) = \Psi(r)$. The expected contribution of bin, $i$, to $\Lambda_i$ is given as follows:
\begin{equation}\nonumber
\begin{aligned}
   \EE[ e^{-\alpha.(s_i + f(1-1/n))} ] - e^{-\alpha.s_i}
   & = e^{-\alpha.s_i} [ e^{-\alpha.f(1-1/n)} - 1 ]
\end{aligned}
\end{equation}
 Similarly, the expected contribution of bin, $j$ ($j \ne i$) to $\Lambda_i$ is given as:
\begin{equation}\nonumber
\begin{aligned}
   \EE[ e^{-\alpha.(s_j - f/n)} ] - e^{-\alpha.s_j}
   & = e^{-\alpha.s_j} [ e^{\alpha.f/n} - 1 ]
\end{aligned}
\end{equation}
 Therefore, $\Lambda_i$ is given as follows:
\begin{equation}\nonumber
\begin{aligned}
   \Delta_i = e^{-\alpha.s_i} [ e^{-\alpha.f(1-1/n)} - 1] +\\
              \sum_{j \ne i} e^{-\alpha.s_j} (e^{\alpha.f/n} - 1)
   & = e^{-\alpha(s_i- f/n)} (e^{-\alpha.f} - 1) + (e^{\alpha.f/n} - 1).\Psi
\end{aligned}
\end{equation}
  Thus, we get the overall expected change in $\Psi$ as follows:
\begin{equation}
\begin{aligned}
   \EE[ \Psi(t+1) - \Psi(t) | x(t) ] = \sum_{i=1}^{n} p_i * \Lambda_i
   & = \sum_{i=1}^{n} p_i * [ e^{-\alpha(s_i- f/n)} (e^{-\alpha.f} - 1) + (e^{-\alpha.f/n} - 1).\Psi ]
   & = \sum_{i=1}^{n} p_i * e^{-\alpha(s_i- f/n)} (e^{-\alpha.f} - 1) + \\
   &   (e^{-\alpha.f/n} - 1).p_i.\Psi\\
   & = \sum_{i=1}^{n} [ p_i * e^{\alpha.f/n} (e^{-\alpha.f} - 1) + (e^{\alpha.f/n} - 1) ]. e^{-\alpha.s_i}
\end{aligned}
\end{equation}
 Now, $e^{\alpha.f/n)} (e^{-\alpha.f} - 1)$ can be approximated as follows:
\begin{equation}\nonumber
\begin{aligned}
   e^{\alpha.f/n)}.(e^{-\alpha.f} - 1)
   & \le (1 + \alpha.f/n + (\alpha.f/n)^2) * (1 - \alpha.f + (\alpha.f)^2 - 1)
   & \sim -\alpha.f + (\alpha.f)^2 + O((\alpha.f)^2/n)\\
   \text{Since, $(\alpha.f)^2/n$ is very small for large $n$, we ignore these small terms. Hence,}
   e^{\alpha.f/n)}.(e^{-\alpha.f} - 1) \lessapprox (-\alpha.f + (\alpha.f)^2)
\end{aligned}
\end{equation}
 Similarly, $(e^{\alpha.f/n} - 1) \lessapprox \alpha.f/n$
Hence, the expected change in $\Psi$ can be given by:
\begin{equation}\label{eq-psi-base-beta}
\begin{aligned}
   \EE[ \Psi(t+1) - \Psi(t) | x(t) ]
   \le \sum_{i=1}^{n} [ p_i * (-\alpha.f + (\alpha.f)^2)  + \alpha.f/n ]. e^{-\alpha.s_i}
\end{aligned}
\end{equation}
%
\end{proof}} 
Further observing that $p_i > 0$, one gets the following Corollary.
\begin{corollary}\label{cor-psi-gen-beta}
   $\EE[ \Psi(t+1) - \Psi(t) | x(t) ] \le (\alpha.f.\Psi)/ n$ 
\end{corollary}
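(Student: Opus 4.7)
The plan is to obtain this corollary as an immediate consequence of the preceding lemma by discarding the $p_i$-dependent terms, which turn out to contribute non-positively under the parameter regime fixed at the start of Section~\ref{sec:upper-bound-beta}. From equation~\eqref{eq-psi-base-beta}, the expected change in $\Psi$ is bounded by a sum whose $i$-th summand is $[p_i(-\alpha f + (\alpha f)^2) + \alpha f/n]\, e^{-\alpha s_i}$. My first step is to observe that the coefficient $-\alpha f + (\alpha f)^2 = \alpha f(\alpha f - 1)$ is strictly negative: since $\alpha = \epsilon/(2f)$ and $\epsilon \le 1/4$, we have $\alpha f = \epsilon/2 \le 1/8 < 1$.

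Because $p_i \ge 0$ and $e^{-\alpha s_i} > 0$, every contribution of the form $p_i(-\alpha f + (\alpha f)^2) e^{-\alpha s_i}$ is then non-positive and can be dropped while preserving the upper bound. What remains is $\sum_{i=1}^n (\alpha f/n)\, e^{-\alpha s_i} = \alpha f \Psi / n$, which is exactly the claim. I do not anticipate any real obstacle here; the only thing to verify carefully is the numerical condition $\alpha f < 1$ that powers the sign argument. Note the contrast with Corollary~\ref{cor-phi-gen-beta} for $\Phi$, where the analogous $p_i$-coefficient $\alpha f + (\alpha f)^2$ is \emph{positive} and therefore cannot simply be discarded; that is why the $\Phi$ proof instead invokes a worst-case $p_i = 1/n$ reduction along with the monotonicity of $p_i$ and $\Phi_i$.
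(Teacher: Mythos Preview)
Your argument is correct and matches the paper's one-line justification (``Further observing that $p_i > 0$''): the coefficient $-\alpha f + (\alpha f)^2$ is negative because $\alpha f = \epsilon/2 < 1$, so the $p_i$-terms in \eqref{eq-psi-base-beta} are non-positive and may be dropped, leaving $(\alpha f/n)\sum_i e^{-\alpha s_i} = \alpha f\Psi/n$. You have simply made explicit the sign check that the paper leaves implicit.
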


In the next two lemmas, Lemma~\ref{lemma-phi-easy-case-beta} and Lemma~\ref{lemma-psi-easy-case-beta}, we consider a reasonably balanced md-bins scenario. We show that for such cases, the expected potential decreases. Specifically, for $s_{(n\gamma_2)} \le 0$, the expected value of $\Phi$ decreases and for $s_{(n\gamma_1)} \ge 0$, the expected value of $\Psi$ decreases.
\begin{lemma}\label{lemma-phi-easy-case-beta}
  Let $\Phi$ be defined as above. If $s_{(n\gamma_2)}(t) \le 0$ then, $\EE[ \Phi(t+1) | x(t) ] \le (1-\frac{\epsilon^2}{4n})\Phi + 1$ 
\end{lemma}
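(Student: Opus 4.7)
The plan is to start from the upper bound on $\EE[\Delta\Phi \mid x(t)]$ established in the lemma immediately above, namely
\[
\EE[\Phi(t+1) - \Phi(t) \mid x(t)] \;\le\; \sum_{i=1}^n \bigl[p_i(\alpha f + (\alpha f)^2) - \alpha f/n\bigr]\, e^{\alpha s_i},
\]
and sharpen the naive bound of Corollary~\ref{cor-phi-gen-beta} by using two structural consequences of the hypothesis. First, $s_i \le s_{(n\gamma_2)} \le 0$ for every $i \ge n\gamma_2$, so $e^{\alpha s_i} \le 1$ on that range; in particular the tail sum $\sum_{i \ge n\gamma_2} e^{\alpha s_i}$ is at most $n\gamma_1$. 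Second, the $(1+\beta)$ bias assumption gives $\sum_{i \ge n\gamma_2} p_i \ge \gamma_1 + \epsilon$, equivalently $\sum_{i < n\gamma_2} p_i \le \gamma_2 - \epsilon$, so that in the ``heavy'' range (where $e^{\alpha s_i}$ can be large) the total probability mass is strictly less than $\gamma_2$.

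I would then split the sum at index $n\gamma_2$, writing $A = \{i : i < n\gamma_2\}$, $B = \{i : i \ge n\gamma_2\}$, and $\Phi_A, \Phi_B$ for the corresponding partial sums of $e^{\alpha s_i}$. On each of $A$ and $B$ the sequence $(p_i)$ is non-decreasing while $(e^{\alpha s_i})$ is non-increasing, so Chebyshev's sum inequality gives
\[
\sum_{i \in A} p_i e^{\alpha s_i} \le \frac{\bigl(\sum_{i\in A} p_i\bigr)\Phi_A}{|A|} \le \frac{(\gamma_2 - \epsilon)\Phi_A}{n\gamma_2}, \qquad \sum_{i \in B} p_i e^{\alpha s_i} \le \frac{\bigl(\sum_{i\in B}p_i\bigr)\Phi_B}{|B|} \le \frac{\Phi_B}{n\gamma_1} \le 1.
\]
Adding these and using $\Phi_A \le \Phi$, one obtains $\sum_i p_i e^{\alpha s_i} \le (1 - \epsilon/\gamma_2)\,\Phi/n + 1$. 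Plugging back in with $\alpha f = \epsilon/2$ gives
\[
\EE[\Delta\Phi \mid x(t)] \;\le\; \frac{\alpha f}{n}\bigl[(1+\alpha f)(1 - \epsilon/\gamma_2) - 1\bigr]\Phi + (\alpha f + (\alpha f)^2),
\]
and a short algebraic check using $\gamma_2 < 1$ and $\epsilon \le 1/4$ shows the coefficient of $\Phi$ is at most $-\epsilon^2/(4n)$ while the additive part is at most $1$, which is exactly the claimed inequality.

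The main obstacle is making sure that the $\epsilon$ improvement arising from $\sum_{i < n\gamma_2} p_i \le \gamma_2 - \epsilon$ is actually captured as a $\Theta(\epsilon^2/n)$ multiplicative drift and does not leak into the additive $+1$. Applying Chebyshev over the full range $[1,n]$ (as in Corollary~\ref{cor-phi-gen-beta}) only yields the non-negative drift $(\alpha f)^2\Phi/n$; the split at $n\gamma_2$ is essential because it isolates the heavy region, where $e^{\alpha s_i}$ is large and the bias on $\sum p_i$ is useful, from the light region, where $e^{\alpha s_i} \le 1$ and we can afford to throw away the bias in exchange for a constant additive contribution.
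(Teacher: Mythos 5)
Your proof is correct and follows essentially the same route as the paper's: split the sum at index $n\gamma_2$, use $e^{\alpha s_i}\le 1$ on the tail to absorb it into the additive $+1$, and use the bias $\sum_{i<n\gamma_2}p_i\le\gamma_2-\epsilon$ on the head to turn the drift negative, finishing with $\alpha f=\epsilon/2$ and $\gamma_2<1$. Your explicit appeal to Chebyshev's sum inequality is just a cleaner justification of the paper's informal ``maximum achieved when $\Phi_i=\Phi/(n\gamma_2)$'' step, so the two arguments coincide.
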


\begin{proof}
    From equation~\eqref{eq-phi-base-beta}, we get,
\begin{equation}\label{eq-phi-1-beta}
\begin{aligned}
  \EE[ \Phi(t+1) - \Phi(t) | x(t)] & \le \sum_{i=1}^{n} (p_i * (\alpha f + (\alpha f)^2) - \alpha f/n). e^{\alpha s_i}\\
   & \le \sum_{i < n\gamma_2} (p_i * (\alpha f + (\alpha f)^2) - \alpha f/n).e^{\alpha s_i} + 
         \sum_{i \ge n\gamma_2} p_i * (\alpha f + (\alpha f)^2).e^0\\
   & \le \sum_{i < n\gamma_2} (p_i * (\alpha f + (\alpha f)^2) - \alpha f/n).e^{\alpha s_i} + 1
\end{aligned}
\end{equation}
   The last inequality follows since $\alpha.f < 1/2$ and $\sum_{i \ge (n\gamma_2)} p_i < 1$.
  Now, we need to upper bound the term $\sum_{i < n\gamma_2} (p_i * (\alpha.f + (\alpha.f)^2).e^{\alpha.s_i})$. Since $p_i$ is non-decreasing and $\Phi_i$ is non-increasing, the maximum value is achieved when $\Phi_i = (\Phi/(n\gamma_2))$ for each $i < n\gamma_2$. Hence, the maximum value is: $(\alpha.f + (\alpha.f)^2)(\gamma_2 - \epsilon)\Phi/(n\gamma_2)$.
  Thus, the expected change in $\Phi$ can be computed, using equation~\eqref{eq-phi-1-beta} and the above bound, as follows:
\begin{equation}
\begin{aligned}
  \EE[ \Phi(t+1) - \Phi(t) | x(t)] & \le (\alpha.f + (\alpha.f)^2)(\gamma_2 - \epsilon)\Phi/(n\gamma_2) - \alpha.f/n * \Phi + 1\\ 
   & \le (\alpha.f)^2.\Phi/n - (\alpha f\epsilon\Phi/n\gamma_2) + 1\\
   & \le \epsilon^2.\Phi/4n -\epsilon^2\Phi/(2n\gamma_2) + 1\\
   & \le \frac{-\epsilon^2}{4n}\Phi + 1
\end{aligned}
\end{equation}
%
%
\end{proof}
\begin{lemma}\label{lemma-psi-easy-case-beta}
  Let $\Psi$ be defined as above. If $s_{(n\gamma_1)}(t) \ge 0$ then,
  $\EE[ \Psi(t+1) | x(t) ] \le (1-\frac{\epsilon^2}{4n})\Psi + 1$ 
\end{lemma}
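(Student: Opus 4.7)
The plan is to mirror the proof of Lemma~\ref{lemma-phi-easy-case-beta}, but invert the direction in which Chebyshev's sum inequality is applied. Starting from equation~\eqref{eq-psi-base-beta}, I would rewrite the bound as
\[
\EE[\Psi(t+1) - \Psi(t) \mid x(t)] \;\le\; -\alpha f(1-\alpha f)\sum_{i=1}^{n} p_i\,\Psi_i \;+\; \frac{\alpha f}{n}\Psi,
\]
so the task reduces to producing a lower bound on $\sum_i p_i\Psi_i$ that exceeds the trivial estimate $\Psi/n$ by a $\Theta(\epsilon)$ margin.

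First, the hypothesis $s_{(n\gamma_1)}(t)\ge 0$ together with the sorted order $s_1\ge s_2\ge\cdots\ge s_n$ gives $s_i\ge 0$, and hence $\Psi_i = e^{-\alpha s_i}\le 1$, for every $i\le n\gamma_1$. Consequently $\sum_{i\le n\gamma_1}\Psi_i \le n\gamma_1$, so $\sum_{i>n\gamma_1}\Psi_i \ge \Psi - n\gamma_1$. Second, the bias assumption~\eqref{ass-2} yields $\sum_{i>n\gamma_1} p_i \ge 1 - (\gamma_1-\epsilon) = \gamma_2+\epsilon$. Restricting attention to the range $i>n\gamma_1$, on which both $p_i$ and $\Psi_i$ are non-decreasing, Chebyshev's sum inequality (comonotone direction) delivers
\[
\sum_i p_i\Psi_i \;\ge\; \sum_{i>n\gamma_1} p_i\Psi_i \;\ge\; \frac{1}{n\gamma_2}\Bigl(\sum_{i>n\gamma_1} p_i\Bigr)\Bigl(\sum_{i>n\gamma_1}\Psi_i\Bigr) \;\ge\; \frac{(\gamma_2+\epsilon)(\Psi - n\gamma_1)}{n\gamma_2}.
\]

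Substituting this lower bound back, the coefficient of $\Psi$ becomes $\frac{\alpha f}{n\gamma_2}\bigl[-\epsilon + \alpha f(\gamma_2+\epsilon)\bigr]$, which is exactly the algebraic structure that arose in the $\Phi$ analysis. With the standing choice $\alpha f = \epsilon/2$ and $\gamma_2 < 1$, this expression is negative of order $-\Theta(\epsilon^2/n)$, yielding the stated factor $(1 - \epsilon^2/(4n))$ after tracking constants. The residual boundary terms produced by the $-n\gamma_1$ shift in $\Psi - n\gamma_1$, and by dropping the non-negative piece $\sum_{i\le n\gamma_1} p_i\Psi_i$, are bounded by an absolute constant and are absorbed into the additive $+1$ on the right-hand side of the statement.

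The main obstacle, compared with the $\Phi$ proof, is recognizing that Chebyshev must be invoked in the opposite direction: here $p_i$ and $\Psi_i$ are comonotone (both non-decreasing in $i$), so Chebyshev supplies a \emph{lower} bound on $\sum p_i\Psi_i$, which is exactly what is needed because this sum enters with the negative coefficient $-\alpha f(1-\alpha f)$. A secondary, cosmetic complication is that, unlike the $\Phi$ case where the bounded tail ($\Phi_i\le 1$ for $i\ge n\gamma_2$) captures essentially all of $\Phi$, here the bounded head ($\Psi_i\le 1$ for $i\le n\gamma_1$) misses mass of size up to $n\gamma_1$, producing the $\Psi - n\gamma_1$ correction that must be carried through the arithmetic.
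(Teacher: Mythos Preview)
Your proposal is correct and follows essentially the same route as the paper's proof. The paper also starts from~\eqref{eq-psi-base-beta}, drops the $i\le n\gamma_1$ terms using the negativity of $(-\alpha f + (\alpha f)^2)$, uses $s_{(n\gamma_1)}\ge 0$ to obtain $\sum_{i>n\gamma_1}\Psi_i \ge \Psi - n\gamma_1$, and then lower-bounds $\sum_{i>n\gamma_1} p_i\Psi_i$ by $(\gamma_2+\epsilon)(\Psi-n\gamma_1)/(n\gamma_2)$; the only cosmetic difference is that the paper phrases this last step as ``the minimum is achieved when all $\Psi_i$ are equal'' rather than invoking Chebyshev's sum inequality by name.
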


\begin{proof}
	The proof is similar to that of Lemma~\ref{lemma-phi-easy-case-beta}. See Appendix~\ref{app:1_proof_beta} for details of the proof.
\end{proof}

Now, we consider the remaining cases and show that in case the load across the bins , at time $t$, is not reasonably balanced, then for $s_{n\gamma_2} > 0$, either $\Psi$ dominates $\Phi$ or the potential function is $O(n)$.
\begin{lemma}
\label{lemma-s-gamma-phi-beta}
   Let, $s_{(n\gamma_2)} > 0$ and $\, \EE[\Delta\Phi|x(t)] \ge -\epsilon^2\Phi/4n$. Then, either $\Phi < \epsilon\gamma_1 * \Psi$, or $\Gamma < cn$ for some $c = poly(1/epsilon)$.
\end{lemma}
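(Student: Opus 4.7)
The plan is to mirror the strategy of Lemma~\ref{lemma-s-gamma-phi} from the $d$-choice setting: combine the upper bound on $\EE[\Delta\Phi|x(t)]$ with the assumed lower bound to show that $\Phi$ is controlled by its tail contribution $\Phi_{>n\gamma_3}$, then use $s_{(n\gamma_2)} > 0$ to bound $\Phi_{>n\gamma_3}$ from above and $\Psi$ from below in terms of $B := \sum_i \max(0, s_i)$, and finally play these bounds off against the hypothesis $\Phi \ge \epsilon\gamma_1 \Psi$.

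First I would split the sum in Equation~\eqref{eq-phi-base-beta} at $i = n\gamma_3$. On the range $i \le n\gamma_3$, monotonicity of $p_i$ together with the assumption $p_{(n\gamma_3)} \le (1 - \theta\epsilon)/n$ makes the per-index coefficient at most $\alpha f \cdot [(1 - \theta\epsilon)(1 + \alpha f) - 1]/n \le -\epsilon^2/(4n)$ (using $\theta > 1$ and $\alpha f = \epsilon/2$), so this range contributes at most $-\epsilon^2 \Phi_{\le n\gamma_3}/(4n)$. On the range $i > n\gamma_3$, a Chebyshev-sum inequality with $\sum_i p_i = 1$ bounds the average of $p_i$ there by $1/(n(1-\gamma_3))$, giving a positive contribution of order $(\alpha f/n)\cdot(\gamma_3/(1-\gamma_3))\cdot \Phi_{>n\gamma_3}$. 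Combining with the hypothesis $\EE[\Delta\Phi|x(t)] \ge -\epsilon^2 \Phi/(4n)$ and rearranging then produces $\Phi \le (C_1/\epsilon)\cdot\Phi_{>n\gamma_3}$ for some constant $C_1$ depending on $\gamma_3$ and $\theta$. Since $s$ is decreasing, $s_{(n\gamma_3)} \ge s_{(n\gamma_2)} > 0$ and $s_{(n\gamma_3)} \le B/(n\gamma_3)$, so $\Phi_{>n\gamma_3} \le n(1-\gamma_3)\, e^{\alpha B/(n\gamma_3)}$.

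For the lower bound on $\Psi$, note that $s_{(n\gamma_2)} > 0$ forces $s_1,\ldots,s_{n\gamma_2}$ all positive, so $B_1 := \sum_{i \le n\gamma_2} s_i \ge \gamma_2 B$ (the top $n\gamma_2$ positives dominate the average positive value). Applying Jensen's inequality to the $n\gamma_1$ convex terms $e^{-\alpha s_i}$ for $i > n\gamma_2$, whose mean is $-B_1/(n\gamma_1)$, yields $\Psi \ge n\gamma_1 \cdot e^{\alpha \gamma_2 B/(n\gamma_1)}$. Assuming the first alternative fails, i.e.\ $\Phi \ge \epsilon\gamma_1 \Psi$, and chaining all three bounds gives
\begin{equation*}
  \epsilon\gamma_1^2 n \cdot e^{\alpha \gamma_2 B/(n\gamma_1)} \;\le\; \Phi \;\le\; \frac{C_1 n(1-\gamma_3)}{\epsilon}\cdot e^{\alpha B/(n\gamma_3)}.
\end{equation*}
With $\gamma_1,\gamma_3$ chosen so that $\gamma_2\gamma_3 > \gamma_1$ (equivalently $\gamma_1 < \gamma_3/(1+\gamma_3)$), the left-hand exponent strictly exceeds the right; taking logarithms bounds $\alpha B/n$ by a constant $C_B$ depending only on $\epsilon$ and the $\gamma$'s. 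Plugging this back gives $\Phi \le O(n)\cdot\mathrm{poly}(1/\epsilon)$ and hence $\Psi \le \Phi/(\epsilon\gamma_1) \le O(n)\cdot\mathrm{poly}(1/\epsilon)$, so $\Gamma < cn$ with $c = \mathrm{poly}(1/\epsilon)$.

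The main obstacle is making this exponent balance work. In Lemma~\ref{lemma-s-gamma-phi}, the weighted potential supplies a convenient $\log n$ factor on the $\Psi$ side that allows the stronger conclusion $\Gamma < c$; here, with unweighted $\Phi$ and $\Psi$, one must instead extract a full factor of $n$ from Jensen's inequality (rather than the single-term pigeonhole bound $\Psi \ge e^{\alpha B/(n\gamma_1)}$, which would only yield $\alpha B/n = O(\log n)$ and hence $\Phi = \mathrm{poly}(n)$). This is precisely why the target bound relaxes from $c$ to $cn$, and why the drift hypothesis here weakens from $-\epsilon\Phi/(4n)$ to $-\epsilon^2\Phi/(4n)$ relative to the $d$-choice analogue.
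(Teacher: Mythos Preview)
Your overall architecture matches the paper's proof exactly: split at $i=n\gamma_3$, use the drift hypothesis to control $\Phi$ by $\Phi_{>n\gamma_3}$, then compare exponential bounds in $B$ for $\Phi_{>n\gamma_3}$ and $\Psi$. However, there is a genuine gap at the ``rearranging'' step.

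You bound the $i\le n\gamma_3$ contribution by $-\tfrac{\epsilon^2}{4n}\Phi_{\le n\gamma_3}$. But the drift hypothesis is $\EE[\Delta\Phi\mid x(t)]\ge -\tfrac{\epsilon^2}{4n}\Phi$. Writing $\Phi=\Phi_{\le n\gamma_3}+\Phi_{>n\gamma_3}$ and combining with your upper bound yields only
\[
-\tfrac{\epsilon^2}{4n}\Phi_{>n\gamma_3}\;\le\;\tfrac{\alpha f}{n}\cdot\tfrac{\gamma_3}{1-\gamma_3}\,\Phi_{>n\gamma_3},
\]
which is vacuous: the $\Phi_{\le n\gamma_3}$ terms cancel and you never obtain $\Phi\le (C_1/\epsilon)\Phi_{>n\gamma_3}$. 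The point is that your per-index coefficient bound $-\epsilon^2/(4n)$ coincides exactly with the hypothesis threshold, leaving no slack. What the paper does (and what you need) is to keep the sharper coefficient
\[
\tfrac{\alpha f}{n}\bigl[(1-\theta\epsilon)(1+\alpha f)-1\bigr]\;=\;\tfrac{\epsilon^2}{4n}-\tfrac{\theta\epsilon^2}{2n}+O(\epsilon^3/n),
\]
which is strictly below $-\epsilon^2/(4n)$ precisely because $\theta>1$. The margin $(\theta-1)\tfrac{\epsilon^2}{2n}$ on $\Phi_{\le n\gamma_3}$ is what, after using $\theta\gamma_1=1$, delivers the paper's $\Phi\le \Phi_{>n\gamma_3}/\gamma_2$.

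A secondary remark: the paper lower-bounds $\Psi$ by $n\gamma_1\,e^{\alpha B/(n\gamma_1)}$ (applying convexity over the negative-$s_i$ bins, whose sum is exactly $-B$ and whose count is at most $n\gamma_1$), which is stronger than your $n\gamma_1\,e^{\alpha\gamma_2 B/(n\gamma_1)}$ and avoids the extra constraint $\gamma_1<\gamma_3/(1+\gamma_3)$. With the paper's bound the exponent comparison reduces to $1/\gamma_1>1/\gamma_3$, which is just the standing assumption $\gamma_1<\gamma_3$.
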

\begin{proof}
    From equation~\eqref{eq-phi-base-beta}, we get:
\begin{equation}
\begin{aligned}
   \EE[\Delta\Phi | x(t)] & \le \sum_{i=1}^{n} (p_i * (\alpha.f + (\alpha.f)^2) - \alpha.f/n).e^{\alpha.s_i}\\
    & \le \sum_{i \le n\gamma_3} (p_i * (\alpha.f + (\alpha.f)^2) - \alpha.f/n).e^{\alpha.s_i} + \sum_{i > n\gamma_3} (p_i * (\alpha.f + (\alpha.f)^2) - \alpha.f/n) * e^{\alpha.s_i}\\
    & \le  [((1 - \theta\epsilon)/n) * (\alpha.f + (\alpha.f)^2) - \alpha.f/n)].\Phi_{(\le n\gamma_3)} + (\alpha.f)^2.\Phi_{(> n\gamma_3))}/n\\
    & \le [-\epsilon^2/(2n\gamma_1) + \epsilon^2/4n].\Phi_{(\le n\gamma_3)} + \epsilon^2.\Phi_{(> n\gamma_3)}/4n\\
    & \le [-\epsilon^2/(2n\gamma_1) + \epsilon^2/4n].\Phi + \epsilon^2\theta\Phi_{(> n\gamma_3)}/2n \quad \because \theta\gamma_1 = 1
\end{aligned}
\end{equation}
  Now, since $\EE[\Delta\Phi|x(t)] \ge -\epsilon^2\Phi/4n$, we get: $\Phi \le \Phi_{(> n\gamma_3)}/\gamma_2$.
  Let, $B = \sum_{i} max(0, s_i)$. Note, $\sum_{i} s_i = 0$, since for each dimension $d$, the update maintains that, $\sum_{d=1}^{D} x_i^d(t) = 0$. One can observe that, $\Phi_{(>n\gamma_3)} \le n\gamma_4 * e^{\alpha.B/(n\gamma_3)}$, since $\gamma_3 + \gamma_4 = 1$. This implies that, $\Phi \le ((n\gamma_4)/(\gamma_2)) * e^{\alpha.B/(n\gamma_3)}$.

Since, $s_{(n\gamma_2)} > 0$, so, $\Psi \ge n\gamma_1 * e^{\alpha.B/(n\gamma_1)}$. If $\Phi < \epsilon\gamma_1 * \Psi$, then we are done. Else, $\Phi \ge \epsilon\gamma_1 * \Psi$. This implies:
\begin{equation}\nonumber
  (n\gamma_4/\gamma_2) * e^{\alpha B/(n\gamma_3)} \ge \Phi \ge \epsilon\gamma_1 * \Psi \ge \epsilon.n\gamma_1^2 * e^{\alpha B/(n\gamma_1)}
\end{equation}
  Thus, $e^{\alpha.B/n} \le (\frac{\theta^2\gamma_4}{\epsilon\gamma_2})^{\frac{\gamma_1\gamma_3}{(\gamma_3 - \gamma_1)}}$. So, $\Gamma \le ((1 + \theta)/\epsilon) * \Phi \le ((1 + \theta)/\epsilon) * (n\gamma_4/\gamma_2) * e^{\alpha.B/(n\gamma_3)} \le ((1 + \theta)/\epsilon) * (n\gamma_4/\gamma_2) * (\frac{\theta^2\gamma_4}{\epsilon\gamma_2})^{\frac{\gamma_1}{(\gamma_3 - \gamma_1)}}$. Hence, $\Gamma \le cn$, where, $c = poly(1/\epsilon)$.
%
\end{proof}

In the Lemma below, we consider the case where the load across the bins at time, $t$, is not reasonably balanced, and $s_{(n\gamma_1)} < 0$. Here, we show that either $\Phi$ dominates $\Psi$ or the potential function is $O(n)$.
\begin{lemma}
\label{lemma-s-gamma-psi-beta}
   Let, $s_{(n\gamma_1)} < 0$ and $\EE[\Delta\Psi|x(t)] \ge -\epsilon^2\Phi/4n$. Then, either $\Psi < \epsilon\gamma_1 * \Phi$, or $\Gamma < cn$ for some $c = poly(1/epsilon)$.
\end{lemma}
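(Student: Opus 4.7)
The plan is to mirror the proof of Lemma~\ref{lemma-s-gamma-phi-beta} almost verbatim, interchanging the roles of $\Phi$ and $\Psi$, shifting the split index from $n\gamma_3$ to $n\gamma_4$ (where the assumption $p_{(n\gamma_4)}\ge (1+\theta\epsilon)/n$ biases the coefficients the right way), and at the end using $s_{(n\gamma_1)}<0$ rather than $s_{(n\gamma_2)}>0$ to install the geometric envelopes. The one genuine novelty is that the hypothesis lower-bounds $\EE[\Delta\Psi|x(t)]$ in terms of $\Phi$, not $\Psi$, so I would first dispose of the easy branch: if $\Psi<\epsilon\gamma_1\Phi$, the first alternative already holds. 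Otherwise $\Phi\le \Psi/(\epsilon\gamma_1)$, and the hypothesis rewrites as $\EE[\Delta\Psi|x(t)]\ge -\epsilon^2\Phi/(4n)\ge -\epsilon\Psi/(4n\gamma_1)$, which is now a lower bound purely in terms of $\Psi$ and matches the shape of the $\Phi$-lemma's assumption.

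Next, starting from equation~\eqref{eq-psi-base-beta}, I would split the sum at $i=n\gamma_4$. For $i\ge n\gamma_4$ the assumption $p_i\ge(1+\theta\epsilon)/n$ makes the coefficient of $e^{-\alpha s_i}$ equal to $-\theta\epsilon\alpha f/n + (\alpha f)^2 p_i \le -\epsilon^2/(2n\gamma_1)+\epsilon^2/(4n)$, strictly negative; for $i<n\gamma_4$ I would drop the negative $-p_i\alpha f$ term and use $(\alpha f)^2 p_i+\alpha f/n\le (\alpha f)^2/n + \alpha f/n$ to bound the coefficient by at most $\epsilon^2/(4n)$ (after absorbing the $\alpha f/n$ contribution into a constant factor on $\Psi_{(<n\gamma_4)}$). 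This yields $\EE[\Delta\Psi|x(t)]\le -(\epsilon^2\theta/(2n))\,\Psi_{(\ge n\gamma_4)} + (\epsilon^2/(4n))\,\Psi_{(<n\gamma_4)}$. Combining with the rewritten lower bound $-\epsilon\Psi/(4n\gamma_1)$ and rearranging gives, in direct analogy with the $\Phi$-lemma's step $\Phi\le\Phi_{(>n\gamma_3)}/\gamma_2$, the inequality $\Psi\le \Psi_{(<n\gamma_4)}/\gamma_2$ (up to $\mathrm{poly}(1/\epsilon)$ constants).

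For the final step I would install the two geometric envelopes in terms of $B:=\sum_i\max(0,s_i)$. From $s_{(n\gamma_1)}<0$ and the ordering, every $s_j$ with $j\ge n\gamma_1$ is negative, so $\sum_{j\ge n\gamma_4}s_j\ge -B$; since $s_{n\gamma_4}$ is the maximum of these $n\gamma_3$ terms, $s_{n\gamma_4}\ge -B/(n\gamma_3)$, and consequently $s_i\ge -B/(n\gamma_3)$ for all $i<n\gamma_4$, giving $\Psi_{(<n\gamma_4)}\le n\gamma_4\, e^{\alpha B/(n\gamma_3)}$. For the dual envelope on $\Phi$, the argument is the exact mirror of the derivation $\Psi\ge n\gamma_1\,e^{\alpha B/(n\gamma_1)}$ in Lemma~\ref{lemma-s-gamma-phi-beta}: because $s_{(n\gamma_1)}<0$, the total positive mass $B$ is concentrated in at most the first $n\gamma_1$ bins, so applying Jensen/convexity to $e^{\alpha s_i}$ restricted to those bins yields $\Phi\ge n\gamma_1\,e^{\alpha B/(n\gamma_1)}$. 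Still in the non-trivial branch $\Psi\ge\epsilon\gamma_1\Phi$, chaining $\epsilon\gamma_1\Phi \le \Psi\le\Psi_{(<n\gamma_4)}/\gamma_2 \le (n\gamma_4/\gamma_2)e^{\alpha B/(n\gamma_3)}$ against $\Phi\ge n\gamma_1 e^{\alpha B/(n\gamma_1)}$ gives $e^{\alpha B/n\,\cdot\,(\gamma_3-\gamma_1)/(\gamma_1\gamma_3)}\le \mathrm{poly}(1/\epsilon)$, i.e.\ $e^{\alpha B/n}$ is a $\mathrm{poly}(1/\epsilon)$ constant; plugging back yields $\Gamma\le ((1+\theta)/\epsilon)\Psi\le cn$ with $c=\mathrm{poly}(1/\epsilon)$. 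The main obstacle is exactly the mixed hypothesis with $\Phi$ on the right-hand side: one must deploy the $\Psi$-vs-$\Phi$ dichotomy upfront to convert it into a pure-$\Psi$ lower bound, after which every remaining step is a transparent mirror of Lemma~\ref{lemma-s-gamma-phi-beta} under the swaps $\gamma_3\leftrightarrow\gamma_4$ and $\gamma_1\leftrightarrow\gamma_2$.
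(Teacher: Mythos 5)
Your overall plan---split \eqref{eq-psi-base-beta} at $i=n\gamma_4$, use $p_{(n\gamma_4)}\ge(1+\theta\epsilon)/n$ to extract negative drift from the tail block, deduce $\Psi\le\mathrm{poly}(1/\epsilon)\,\Psi_{(\le n\gamma_4)}$, then install the two envelopes in $B=\sum_i\max(0,s_i)$ and chain against $\Phi\ge n\gamma_1 e^{\alpha B/(n\gamma_1)}$ in the branch $\Psi\ge\epsilon\gamma_1\Phi$---is exactly the paper's proof (Appendix~\ref{app:2_proof_beta}). The genuine gap is the step you yourself flag as the novelty: converting the stated hypothesis $\EE[\Delta\Psi|x(t)]\ge-\epsilon^2\Phi/4n$ into $\EE[\Delta\Psi|x(t)]\ge-\epsilon\Psi/(4n\gamma_1)$ via the dichotomy, and then claiming the rearrangement still yields $\Psi\le\Psi_{(<n\gamma_4)}/\gamma_2$ up to $\mathrm{poly}(1/\epsilon)$ factors. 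That deduction fails quantitatively. The negative drift available from the block $i>n\gamma_4$ has coefficient of order $\epsilon^2/n$ (roughly $-\epsilon^2/(2n\gamma_1)+\epsilon^2/(4n)$ per unit of $\Psi_{(>n\gamma_4)}$), while your converted hypothesis only forbids drift more negative than $-\epsilon\Psi/(4n\gamma_1)$, which is larger in magnitude by a factor of order $1/\epsilon$. After rearranging, the coefficient multiplying $\Psi$ (equivalently $\Psi_{(>n\gamma_4)}$) is positive only if roughly $\epsilon^2\gamma_2/(2\gamma_1)>\epsilon/(4\gamma_1)$, i.e. $\epsilon>1/(2\gamma_2)>1/2$, contradicting $\epsilon\le 1/4$. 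So the inequality you obtain is vacuous (it is consistent with $\Psi_{(<n\gamma_4)}=0$), and no bound of the form $\Psi\le\mathrm{poly}(1/\epsilon)\,\Psi_{(<n\gamma_4)}$ follows; the rest of the chain then has nothing to chain.

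The resolution is that the $\Phi$ in the hypothesis is a typo for $\Psi$: the analogous lemmas elsewhere (Lemma~\ref{lemma-s-gamma-psi}, Lemma~\ref{lemma-s-gamma-psi-w}) state the hypothesis as a lower bound on $\EE[\Delta\Psi]$ in terms of $\Psi$, the lemma is invoked in Case~3 of Theorem~\ref{thm-super-mart-beta} exactly as the complement of $\EE[\Delta\Psi|x]\le-\frac{\epsilon^2}{4n}\Psi$, and the paper's appendix proof works directly from $\EE[\Delta\Psi|x(t)]\ge-\epsilon^2\Psi/4n$. With that reading the hypothesis's magnitude $\epsilon^2/(4n)$ is strictly smaller than the secured drift coefficient $\epsilon^2/(2n\gamma_1)-\epsilon^2/(4n)$ (because $\gamma_1<1/2$), the rearrangement leaves the positive coefficient $\epsilon^2\gamma_2/(2n\gamma_1)$ on $\Psi$, and everything else in your writeup (the envelope $\Psi_{(\le n\gamma_4)}\le n\gamma_4\,e^{\alpha B/(n\gamma_3)}$ via the averaging argument on the $n\gamma_3$ smallest loads, the lower envelope on $\Phi$, and $\Gamma\le((1+\theta)/\epsilon)\Psi\le cn$) coincides with the paper. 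In short: correct skeleton, but the preliminary $\Phi$-to-$\Psi$ conversion loses a $\Theta(1/\epsilon)$ factor that the drift estimate cannot absorb; the hypothesis must be taken at the $\epsilon^2\Psi$ scale, as the paper does.
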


\begin{proof}
	The proof is similar to that of Lemma~\ref{lemma-s-gamma-phi-beta}. See Appendix~\ref{app:2_proof_beta} for details of the proof.
\end{proof}
  Now, we consider combinations of the cases considered so far and can show that the potential function, $\Gamma$, behaves as a super-martingale.
\begin{theorem}\label{thm-super-mart-beta}
   For the potential function, $\Gamma$, $\EE[ \Gamma(t+1) | x(t)] \le (1 - \frac{\epsilon^2(1 - 2\gamma_1)}{4n(1 + \epsilon\gamma_1)})\Gamma(t) + c$, for constant $c = poly(1/\epsilon)$.
\end{theorem}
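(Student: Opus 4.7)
The plan is to mirror the case analysis used in Theorem~\ref{thm-super-mart} for the $d$-choice process, but with the sharper easy-case bounds available here (Lemma~\ref{lemma-phi-easy-case-beta} and Lemma~\ref{lemma-psi-easy-case-beta}) which already deliver an $\epsilon^2/(4n)$ multiplicative decrease plus an additive $+1$. I would split on the sign configuration of $s_{(n\gamma_1)}$ and $s_{(n\gamma_2)}$, using the ordering $\gamma_1 < 1/2 < \gamma_2$ to enforce that at least one of the two easy cases always applies.

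Concretely, first handle the balanced regime where $s_{(n\gamma_1)} \ge 0$ and $s_{(n\gamma_2)} \le 0$: here both easy lemmas fire simultaneously, giving $\EE[\Gamma(t+1)\mid x(t)] \le (1 - \epsilon^2/(4n))\Gamma(t) + 2$, which is stronger than the claim. Next, consider $s_{(n\gamma_2)} > 0$. The $\Psi$ side is still controlled by Lemma~\ref{lemma-psi-easy-case-beta}, contributing $(1-\epsilon^2/(4n))\Psi + 1$. For $\Phi$, I would split on whether $\EE[\Delta\Phi \mid x(t)] \le -\epsilon^2 \Phi/(4n)$ (in which case combining with the $\Psi$ bound immediately yields the theorem) or $\EE[\Delta\Phi\mid x(t)] \ge -\epsilon^2\Phi/(4n)$, in which case Lemma~\ref{lemma-s-gamma-phi-beta} applies and we further subdivide into (a) $\Phi < \epsilon\gamma_1 \Psi$ and (b) $\Gamma < cn$. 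In subcase (a), bound $\EE[\Delta\Phi\mid x(t)]$ by Corollary~\ref{cor-phi-gen-beta} as $(\alpha f)^2\Phi/n \le \epsilon^2 \Phi/(4n) \le (\epsilon^3\gamma_1/(4n))\Psi$, combined with the $\Psi$ decrease to obtain a net $-\epsilon^2(1-2\gamma_1)/(4n(1+\epsilon\gamma_1))$ factor on $\Gamma$ after using $\Gamma = \Phi + \Psi \le (1+\epsilon\gamma_1)\Psi$. In subcase (b), apply both Corollaries~\ref{cor-phi-gen-beta} and~\ref{cor-psi-gen-beta} to bound $\EE[\Delta\Gamma\mid x(t)] \le \alpha f \Gamma/n \le \alpha f c$, which is absorbed into the additive constant $c$ of the theorem. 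The case $s_{(n\gamma_1)} < 0$ is handled symmetrically by swapping the roles of $\Phi$ and $\Psi$ and invoking Lemma~\ref{lemma-s-gamma-psi-beta}.

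The main obstacle is verifying that the constants line up so that the multiplicative coefficient $\epsilon^2(1-2\gamma_1)/(4n(1+\epsilon\gamma_1))$ in the statement is indeed attained across all subcases, not just in the cleanest subcase (a). The factor $(1 - 2\gamma_1)$ arises because, when one of $\Phi,\Psi$ is dominated by the other (say $\Phi < \epsilon\gamma_1 \Psi$), the possible growth of the dominated side must be subtracted from the decrease of the dominant side; tracking this cancellation carefully through Corollary~\ref{cor-phi-gen-beta} and the $\epsilon^2/(4n)$ decrease of $\Psi$ produces the $(1-2\gamma_1)$ factor, and the $(1+\epsilon\gamma_1)$ denominator comes from converting the bound on the dominant potential into a bound on $\Gamma$. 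The additive $c$ in the statement then absorbs the $+1$'s from Lemmas~\ref{lemma-phi-easy-case-beta},~\ref{lemma-psi-easy-case-beta} together with the $\alpha f c$ terms arising in the small-$\Gamma$ subcases, with the required $c = \mathrm{poly}(1/\epsilon)$ inherited directly from Lemmas~\ref{lemma-s-gamma-phi-beta} and~\ref{lemma-s-gamma-psi-beta}.
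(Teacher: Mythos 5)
Your proposal is correct and follows essentially the same route as the paper's own proof: the identical three-way split on the signs of $s_{(n\gamma_1)}$ and $s_{(n\gamma_2)}$, the same dichotomy on whether $\EE[\Delta\Phi\mid x(t)]$ (resp.\ $\EE[\Delta\Psi\mid x(t)]$) already decreases by $\epsilon^2/(4n)$, and the same subcases $\Phi<\epsilon\gamma_1\Psi$ (or symmetrically $\Psi<\epsilon\gamma_1\Phi$) versus $\Gamma<cn$ handled via Lemmas~\ref{lemma-phi-easy-case-beta},~\ref{lemma-psi-easy-case-beta},~\ref{lemma-s-gamma-phi-beta},~\ref{lemma-s-gamma-psi-beta} and Corollaries~\ref{cor-phi-gen-beta},~\ref{cor-psi-gen-beta}. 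The only cosmetic difference is bookkeeping of constants: the paper gets the factor $(1-\epsilon\gamma_1)$ in its Case 2a and the binding $(1-2\gamma_1)$ factor in Case 3a (where the dominated $\Psi$ only admits the first-order bound $\alpha f\Psi/n$), whereas you quote $(1-2\gamma_1)$ uniformly, which is weaker but still yields the stated bound.
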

\begin{proof}
    We consider the following cases on intervals of values for $s_i$.
\begin{itemize}
\item \textbf{Case 1:} $s_{(n\gamma_1)} \ge 0$ and $s_{(n\gamma_2)} \le 0$. Using, Lemma~\ref{lemma-phi-easy-case-beta} and Lemma~\ref{lemma-psi-easy-case-beta}, we can immediately see that, $\EE[ \Gamma(t+1) | x(t)] \le (1 - \epsilon^2/4n)\Gamma(t) + c$, for constant $c = poly(1/\epsilon)$ and hence, the result is also true.\\
\item \textbf{Case 2:} $s_{n\gamma_1} \ge s_{n\gamma_2} > 0$. This represents a high load imbalance across the bins. In some cases, $\Phi$ may grow but the asymmetry in the load implies that $\Gamma$ is dominated by $\Psi$. Thus, the decrease in $\Psi$ offsets the increase in $\Phi$ and hence the expected change in $\Gamma$ is negative.\\
  Specifically, if $\EE[\Delta\Phi|x] \le -\epsilon^2/4n * \Phi$, then using Lemma~\ref{lemma-psi-easy-case-beta} we get that $\EE[ \Gamma(t+1) | x(t)] \le (1 - \epsilon^2/4n)\Gamma(t) + c$; else we consider the following two cases:\\
\begin{itemize}
\item \textbf{Case 2a:} $\Phi < \epsilon\gamma_1 * \Psi$. Here, using Lemma~\ref{lemma-psi-easy-case-beta} and Corollary~\ref{cor-phi-gen-beta}, we get:
\begin{equation}\nonumber
\begin{aligned}
   \EE[\Delta\Gamma|x] & = \EE[\Delta\Phi|x] + \EE[\Delta\Psi|x]\\
        & \le \frac{(\alpha.f)^2}{n}.\Phi - \frac{\epsilon^2}{4n} * \Psi + 1\\
        & \le -(1 - \epsilon\gamma_1)* \frac{\epsilon^2}{4n}.\Psi + 1
        & \le - \frac{\epsilon^2(1 - \epsilon\gamma_1)}{4n(1 + \epsilon\gamma_1)}\Gamma + 1
\end{aligned}
\end{equation}
\item \textbf{Case 2b:} $\Gamma < cn$. Here, using Corollary~\ref{cor-psi-gen-beta} and Corollary~\ref{cor-phi-gen-beta}, we get:
\begin{equation}\nonumber
\begin{aligned}
   \EE[\Delta\Gamma|x] & \le \alpha.f/n * \Gamma
        & \le c\alpha.f
\end{aligned}
\end{equation}
   But, $c - ((\epsilon^2(1 - \epsilon\gamma_1)/4n) * \Gamma) \ge c(1 - \epsilon^2(1 - \epsilon\gamma_1)/4) \ge c(1 - \epsilon/2) \ge c\alpha.f$. Hence, the result follows.
\end{itemize}
\item \textbf{Case 3:} $s_{n\gamma_2} \le s_{n\gamma_1} < 0$. Here, if $\EE[\Delta\Psi|x] \le \frac{-\epsilon^2/}{4n}\Psi$, then using Lemma~\ref{lemma-phi-easy-case-beta}, we get that $\EE[ \Gamma(t+1) | x(t)] \le (1 - \epsilon^2/4n)\Gamma(t) + c$; else we consider the following two cases:
\begin{itemize}
\item \textbf{Case 3a:} $\Psi < \epsilon\gamma_1 * \Phi$. Here, using Lemma~\ref{lemma-phi-easy-case-beta} and Corollary~\ref{cor-psi-gen-beta}, we get:
\begin{equation}\nonumber
\begin{aligned}
   \EE[\Delta\Gamma|x] & = \EE[\Delta\Phi|x] + \EE[\Delta\Psi|x]\\
        & \le -(\epsilon^2/4n)\Phi + \alpha.f/n * \Psi + 1\\
        & \le -(\epsilon^2/4n).\Phi + (\gamma_1\epsilon^2/2n)\Phi\\
        & \le \frac{\epsilon^2(\gamma_1 - 1/2)}{2n}\Phi + 1
        & \le \frac{-\epsilon^2(1 - 2\gamma_1)}{(4n(1+\epsilon\gamma_1))}*\Gamma + 1
\end{aligned}
\end{equation}
\item \textbf{Case 3b:} $\Gamma < cn$. Here, using Corollary~\ref{cor-phi-gen-beta} and Corollary~\ref{cor-psi-gen-beta}, we get:
\begin{equation}\nonumber
\begin{aligned}
   \EE[\Delta\Gamma|x] & = \alpha.f/n * \Gamma
        & \le c\alpha.f
\end{aligned}
\end{equation}
  Hence, this case follows similarly as \textit{Case 2b} above. 
\end{itemize}
\end{itemize}
%
\end{proof}
  Now, we can prove using induction that the expected value of $\Gamma$ remains bounded.
\begin{theorem}\label{thm-pot-up-bound-beta}
  For any time $t \ge 0$, $\EE[\Gamma(t)] \le \frac{4c(1 + \epsilon\gamma_1)}{\epsilon^2(1-2\gamma_1)}n$
\end{theorem}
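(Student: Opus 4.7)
The plan is to prove this bound by straightforward induction on $t$, exactly mirroring the strategy used in Theorem~\ref{thm-pot-up-bound} and Theorem~\ref{thm-pot-up-bound-w}. The super-martingale inequality of Theorem~\ref{thm-super-mart-beta} already does the heavy lifting; what remains is a routine fixed-point calculation that identifies the stationary value of the drift recurrence.

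For the base case $t=0$, each dimension of each bin has load $0$, so $s_i(0)=0$ for all $i$ and hence $\Gamma(0) = 2n$. Since $c = \mathrm{poly}(1/\epsilon)$ and $\epsilon \le 1/4$, the coefficient $\frac{4c(1+\epsilon\gamma_1)}{\epsilon^2(1-2\gamma_1)}$ is comfortably at least $2$, so the claimed bound holds at time $0$. Call this target value $K := \frac{4c(1+\epsilon\gamma_1)}{\epsilon^2(1-2\gamma_1)}n$.

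For the inductive step, assume $\EE[\Gamma(t)] \le K$. Using the tower property together with Theorem~\ref{thm-super-mart-beta},
\begin{equation}\nonumber
\begin{aligned}
\EE[\Gamma(t+1)] &= \EE\bigl[\EE[\Gamma(t+1)\mid x(t)]\bigr] \\
&\le \Bigl(1 - \tfrac{\epsilon^2(1-2\gamma_1)}{4n(1+\epsilon\gamma_1)}\Bigr)\EE[\Gamma(t)] + c \\
&\le \Bigl(1 - \tfrac{\epsilon^2(1-2\gamma_1)}{4n(1+\epsilon\gamma_1)}\Bigr)K + c.
\end{aligned}
\end{equation}
The key observation is that $K$ has been chosen precisely so that $\frac{\epsilon^2(1-2\gamma_1)}{4n(1+\epsilon\gamma_1)}\cdot K = c$. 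Substituting this identity gives $\EE[\Gamma(t+1)] \le K - c + c = K$, closing the induction.

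There is essentially no obstacle here — the only subtlety is verifying that $K$ is indeed the fixed point of the linear recurrence $K \mapsto (1-\lambda)K + c$ with contraction rate $\lambda = \frac{\epsilon^2(1-2\gamma_1)}{4n(1+\epsilon\gamma_1)}$, which follows immediately from the definition of $K$. The substantive work was done in establishing the super-martingale property in Theorem~\ref{thm-super-mart-beta}; once that is in hand, this theorem is just the asymptotic consequence, and in particular implies $\EE[\Gamma(t)] = O(n)$ uniformly in $t$, which in turn will yield the $O(\log(n)/\beta)$ gap bound via Markov's inequality applied to the exponential potential.
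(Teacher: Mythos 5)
Your proposal is correct and matches the paper's own proof essentially verbatim: induction on $t$ with base case $\Gamma(0)=2n$, the tower property combined with Theorem~\ref{thm-super-mart-beta}, and the observation that the chosen bound is the fixed point of the recurrence, i.e.\ $\frac{\epsilon^2(1-2\gamma_1)}{4n(1+\epsilon\gamma_1)}\cdot\frac{4c(1+\epsilon\gamma_1)}{\epsilon^2(1-2\gamma_1)}n = c$, so the $-c+c$ cancellation closes the induction. No discrepancies worth noting.
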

\begin{proof}
  Using induction we can prove this claim. For $t = 0$, it is trivially true since $\Gamma(0) = 2n$.
Using Theorem~\ref{thm-super-mart-beta}, we get:
\begin{equation}\nonumber
\begin{aligned}
   \EE[\Gamma(t+1)] & = E[E[\Gamma(t+1)| \Gamma(t)]]\\
            & \le \EE[ (1 - \frac{\epsilon^2(1-2\gamma_1)}{4n(1+\epsilon\gamma_1)})\Gamma(t) + c]\\
            & \le \frac{4c(1 + \epsilon\gamma_1)}{\epsilon^2(1-2\gamma_1)}n - c + c
\end{aligned}
\end{equation}
%
\end{proof}
 Now, we can upper bound the gap across all the $D$ dimensions across all the $n$ md-bins. This gap is defined as follows:
\begin{equation}
    Gap(t) = \max_{d=1}^{D} [ \max_{i=1}^{n} x_i^d ]
\end{equation}
%
\begin{theorem}\textbf{Fixed $f$ Case:}\label{thm-fixed-f-beta}
  Using the bias ($p_{n\gamma_3} \le (1 - \theta\epsilon)/n$ and $p_{n\gamma_4} \ge (1 + \theta\epsilon)/n$) in the probability distribution in favor of lightly loaded md-bins, and assuming that $f$ dimensions are exactly populated in each md-ball with uniform distribution of $f$ dimensions over $D$, then the expected and probabilistic upper bound on the gap (maximum dimensional gap) across the multidimensional bins is given as follows. Let, $\delta = \frac{4c(1 + \epsilon\gamma_1)}{\epsilon^2(1-2\gamma_1)}$, then:
\begin{equation}\nonumber
\begin{aligned}
    E[Gap(t)] & \le 2\log(n)/\epsilon + 2f\log(\delta)/\epsilon\\
    Pr[Gap(t) & > 4\log(n)/\epsilon + 4\log(\delta)/\epsilon] \le 1/n\\
\end{aligned}
\end{equation}
\end{theorem}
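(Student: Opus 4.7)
The plan is to mirror the argument used for Theorem~\ref{thm-fixed-f}: convert the potential bound of Theorem~\ref{thm-pot-up-bound-beta} into a bound on the sum load $s_a$ of the bin $a$ that attains the maximum dimensional gap, then translate this into a per-dimension gap bound using the uniform-over-$D$ assumption on the $f$ populated coordinates.

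First I would observe that, by definition of $\Phi$ and $\Gamma$, $\Gamma(t) \ge e^{\alpha s_i}$ for every bin $i$, so in particular $e^{\alpha s_a} \le \Gamma(t)$, where $a$ denotes the bin that achieves $Gap(t)$ in some dimension $m$. Taking expectations and invoking Theorem~\ref{thm-pot-up-bound-beta} gives $\EE[e^{\alpha s_a}] \le \delta n$. Applying Jensen's inequality to the convex function $e^{\alpha x}$ yields $\alpha \EE[s_a] \le \log(\delta n) = \log n + \log \delta$, and with $\alpha = \epsilon/(2f)$ this becomes $\EE[s_a] \le (2f/\epsilon)(\log n + \log \delta)$. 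Since each md-ball contributes exactly $f$ to the sum load of the bin it enters, $s_a = f\cdot y_a$ where $y_a$ is the number of md-balls in bin $a$ minus the average; hence $\EE[y_a] \le 2\log n/\epsilon + 2\log\delta/\epsilon$. Because the $f$ populated entries of each md-ball are chosen uniformly over the $D$ dimensions, the expected load in any particular dimension (and thus the winning dimension) can be bounded by $\EE[y_a]$, giving the claimed expected-gap inequality up to the constants in the statement.

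For the probabilistic bound, I would apply Markov's inequality to $\Gamma(t)$: $\Pr[\Gamma(t) \ge n\,\EE[\Gamma(t)]] \le 1/n$, so except on an event of probability $\le 1/n$ we have $e^{\alpha s_a} \le \delta n^2$, which implies $s_a \le (2\log n + \log \delta)/\alpha = (2f/\epsilon)(2\log n + \log \delta)$. Dividing by $f$ translates this into a tail bound on $y_a$ of the form $\Pr[y_a > 4\log n/\epsilon + 2\log\delta/\epsilon] \le 1/n$. Then, exactly as in the proof of Theorem~\ref{thm-fixed-f}, I would add a Chernoff argument for the within-bin dimension distribution: conditional on $y_a$ balls of weight $f/D$ per dimension landing in bin $a$, the load in any fixed dimension is $\mathrm{Bin}(y_a, f/D)$-distributed and concentrates around its mean, so the maximum over the $D$ dimensions inherits the same order of bound (absorbing the union over dimensions and the Chernoff slack into the constants).

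The main obstacle will be the last step: controlling $\max_d x_a^d$ from a bound on the aggregate $s_a = \sum_d x_a^d$. The expected-gap statement papers over this by using $\EE[y_a]$ as a crude upper bound on the per-dimension gap (valid because the maximum load in any dimension is at most the total number of balls in the bin above average), but the high-probability statement genuinely requires a two-step tail argument --- one step coming from Markov on $\Gamma(t)$ to control the global deviation of $s_a$, and a second step coming from Chernoff over the dimensions to convert the sum-load tail into a per-dimension tail, together with a union bound over $D$ dimensions whose $\log D$ factor must be absorbed into the polynomial-in-$1/\epsilon$ constant $c$. Making the constants in the statement match exactly will amount to a careful bookkeeping of these two contributions, completely analogous to the Chernoff calculation in Theorem~\ref{thm-fixed-f}.
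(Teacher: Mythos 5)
Your expected-gap argument is essentially identical to the paper's: invoke Theorem~\ref{thm-pot-up-bound-beta} to get $\EE[e^{\alpha s_a}]\le n\delta$ for the winning bin $a$, use Jensen (the paper does this implicitly by taking logarithms), write $s_a=f\,y_a$ because each md-ball contributes exactly $f$ to the sum load, and then use the uniform choice of the $f$ populated coordinates to bound the per-dimension gap by $\EE[y_a]$. That part is fine and matches the paper step for step.

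Where you diverge is the high-probability claim. The paper's proof of this theorem uses \emph{only} Markov's inequality on $\Gamma(t)$ (giving $\Pr[s_a>4f\log(n)/\epsilon+O(f\log(\delta)/\epsilon)]\le 1/n$) and then divides by $f$, relying on the deterministic fact that a ball raises any single dimension's load by at most $1$, so the dimension gap is controlled by $y_a=s_a/f$; no Chernoff step appears. Your plan to graft on the Chernoff-over-dimensions argument from Theorem~\ref{thm-fixed-f} and ``absorb the slack into the constant $c$'' would not give the stated bound: in Theorem~\ref{thm-fixed-f} that step introduces a multiplicative factor $(mf/nD)^{1/2+\zeta}$ and changes the failure probability to $D/mf$, and this factor grows with $m$, so it is not a constant that can be absorbed into $poly(1/\epsilon)$. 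For the $(1+\beta)$ statement as written (threshold $4\log(n)/\epsilon+4\log(\delta)/\epsilon$, probability $1/n$) the Chernoff step is both unnecessary and harmful; you should stop after the Markov step, exactly as the paper does. With that correction your proposal coincides with the paper's proof.
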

\begin{proof}

Let, $a$ be the winning md-bin and $m$ be the winning dimension that represents $Gap(t)$. Now,
from Theorem~\ref{thm-pot-up-bound-beta}, we get, $\EE[ e^{\alpha.s_a} ] \le n\delta$. So, 
$\EE[ e^{\alpha.(x_a^m + \sum_{d \ne m} x_a^d)} ] \le n\delta$. 
  Let, $y_a$ denote the gap as measured by the number of md-balls in bin $a$ minus the average number of balls across the bins. Then,
\begin{equation}\label{eq:sum-bound}
\begin{aligned}
    E[s_a] & \le 1/\alpha * \log(n) + 1/\alpha * \log(\delta)\\
    \Rightarrow E[s_a] & \le 2f\log(n)/\epsilon + O(2f\log(\delta)/\epsilon)\\
    \Rightarrow f.E[y_a] & \le 2f\log(n)/\epsilon + O(2f\log(\delta)/\epsilon)\\
    \Rightarrow E[y_a] & \le 2\log(n)/\epsilon + O(2\log(\delta)/\epsilon)
\end{aligned}
\end{equation}
  The third inequality uses the fact that each ball has exactly $f$ populated dimensions. Since the $f$ dimensions are chosen uniformly and randomly from $D$ dimensions, the expected gap in any dimension (and hence the winning dimension with the maximum gap) is bounded by $O(\frac{\log(n)}{\beta})$, since $\epsilon = \Theta(\beta)$ (section~\ref{sec:markov-chain-beta}).

%
Now, the $Pr[s_a > 4f\log(n)/\epsilon + 2f/\epsilon * \log(\delta)] \le
Pr[ \Gamma(t) \ge nE[\Gamma(t)]] \le 1/n$ (using Markov's Inequality); where $s_a = \sum_{d=1}^{D} x_a^d$. Hence, $Pr[y_a > 4\log(n)/\epsilon + 4\log(\delta)/\epsilon] \le 1/n$. 
%
\end{proof}
\begin{theorem}\textbf{Variable $f$ Case:}
  Using the bias ($p_{n\gamma_3} \le (1 - \theta\epsilon)/n$ and $p_{n\gamma_4} \ge (1 + \theta\epsilon)/n$) in the probability distribution in favor of lightly loaded md-bins, and assuming that each dimension is chosen as $1$ with probability $q$ (non-fixed $f$ case); the expected and probabilistic upper bound on the gap (maximum dimensional gap) across the multidimensional bins is given as follows. Let, $\delta = \frac{4c(1 + \epsilon\gamma_1)}{\epsilon^2(1-2\gamma_1)}$, then:
\begin{equation}\nonumber
\begin{aligned}
    E[Gap(t)] & \le 2\log(n)/\epsilon + m/n(1-q) + 2\log(\delta)/\epsilon\\
    Pr[Gap(t) & > 4\log(n)/\epsilon + m/n(1-q) + 4\log(\delta)/\epsilon] \le 1/n\\
\end{aligned}
\end{equation}
\end{theorem}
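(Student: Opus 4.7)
The plan is to mirror the argument of Theorem~\ref{thm-fixed-f-beta}, accounting for the fact that each ball's weight $f$ is now a random variable: each of the $D$ dimensions is populated independently with probability $q$, so $f \sim \mathrm{Binomial}(D,q)$ with mean $f^* = Dq$ and variance $Dq(1-q)$. First I would re-examine the potential-function machinery that culminates in Theorem~\ref{thm-pot-up-bound-beta}. The only place the deterministic value $f$ enters the expected-change computations in Lemma~\ref{lemma-phi-easy-case-beta} and the subsequent lemmas is through the Taylor expansion of $e^{\pm \alpha f}$; taking an additional expectation over $f$ and setting $\alpha = \epsilon/(2f^*)$ keeps $\alpha f^* \le \epsilon/2$ and controls the second-order remainder $\alpha^2 \mathbb{E}[f^2]$ at $O(\epsilon^2)$. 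The same super-martingale argument then yields $\mathbb{E}[\Gamma(t)] \le n\delta$ with the $\delta$ stated in the theorem.

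Given this potential bound, the expected-gap statement follows by a direct transcription of~\eqref{eq:sum-bound}. Let $a$ denote the winning md-bin, $d^*$ the winning dimension (so $\mathrm{Gap}(t) = x_a^{d^*}$), and $m$ the total number of balls thrown. From $\mathbb{E}[e^{\alpha s_a}] \le n\delta$, Jensen's inequality gives $\mathbb{E}[s_a] \le (2f^*/\epsilon)(\log n + \log \delta)$. Conditioning on the ball gap $y_a = k_a - m/n$, the tower property yields $\mathbb{E}[s_a \mid y_a] = y_a f^*$, hence $\mathbb{E}[y_a] \le (2/\epsilon)(\log n + \log \delta)$. To convert the ball gap into a dimensional gap I would use the deterministic bound $L_a^{d^*} \le k_a$ on the load of bin $a$ in any dimension, together with a Chernoff bound $\bar{L}^{d^*} \ge (m/n)q - O(\sqrt{(m/n)q(1-q)})$ on the per-dimension average; combining these gives $x_a^{d^*} \le y_a + (m/n)(1-q)$ up to lower-order fluctuations, and taking expectation yields $\mathbb{E}[\mathrm{Gap}(t)] \le 2(\log n + \log \delta)/\epsilon + (m/n)(1-q)$, matching the stated bound.

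The probabilistic statement parallels Theorem~\ref{thm-fixed-f-beta}. Markov's inequality on $\Gamma(t)$ yields $\Pr[s_a > 2\mathbb{E}[s_a]] \le 1/n$, which after the same $s_a \mapsto y_a \mapsto \mathrm{Gap}(t)$ conversion (absorbing the $(m/n)(1-q)$ slack from the Binomial tail on $\bar{L}^{d^*}$) produces the required tail bound with probability at most $1/n$.

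The main obstacle will be the very first step: verifying that the Taylor-expansion arguments in Lemma~\ref{lemma-phi-easy-case-beta} and its companions remain valid after replacing each deterministic $e^{\pm \alpha f}$ by the moment generating function $M_f(\pm\alpha) = (1 - q + q e^{\pm\alpha})^D$. The concrete estimates needed are $M_f(\alpha) \le 1 + \alpha f^* + (\alpha f^*)^2$ and $M_f(-\alpha) \le 1 - \alpha f^* + (\alpha f^*)^2$ for $\alpha f^* \le \epsilon/2 \le 1/8$, which follow from finiteness of the Binomial MGF and the chosen scaling $\alpha = \Theta(\epsilon/f^*)$. Once these are in hand, the remainder of the proof is a mechanical transcription of the fixed-$f$ argument, with the Binomial concentration in the last step contributing precisely the additional $(m/n)(1-q)$ term.
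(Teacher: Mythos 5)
Your proposal follows essentially the same route as the paper's proof: extend the potential-function bound to Binomial $f$ so that $\mathbb{E}[e^{\alpha s_a}]\le n\delta$ with $\alpha f^*\le\epsilon/2$ (the paper asserts this step, omitting the MGF details you sketch, which match its weighted-case machinery), then convert $\mathbb{E}[s_a]$ into a ball-count gap and finally into a per-dimension gap via the bound load $\le$ number of balls together with the per-dimension average $mq/n$, which produces the extra $(m/n)(1-q)$ term, with the tail bound obtained by Markov's inequality on $\Gamma$ exactly as in Theorem~\ref{thm-fixed-f-beta}. Your only deviation, the Chernoff control of the realized per-dimension average, is a minor (and arguably more careful) refinement of a step the paper handles directly with the expected average $mq/n$.
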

\begin{proof}
Since, each dimension is assigned $1$ with probability $q$, the average number of ones per md-ball is $f^* = Dq$.
Let, $a$ be the winning md-bin and $m$ be the winning dimension that represents $Gap(t)$. The number of ones in any ball, $f$, follows a Binomial($D$, $q$) distribution and has finite second moment. Using the analysis similar as for Theorem~\ref{thm-pot-up-bound-beta}, we can get (proof omitted for brevity), $\EE[ e^{\alpha.s_a} ] \le n\delta$, where $\alpha.f^* \le \epsilon/2$. So, 
$\EE[ e^{\alpha.(x_a^m + \sum_{d \ne m} x_a^d)} ] \le n\delta$. 
  Taking, logarithm of both sides, we get:
\begin{equation}\label{eq:sum-bound}
\begin{aligned}
    E[x_a^m] + \sum_{d \ne m} E[x_a^d] & \le \log(n)/\alpha + \log(\delta)/\alpha\\
    \Rightarrow E[l_a^m] + \sum_{d \ne m} E[l_a^d]  - (D * \frac{mq}{n})& \le \log(n)/\alpha + \log(\delta)/\alpha
\end{aligned}
\end{equation}
  In the second inequality, $l_a^d$, represents the load in dimension $d$ for bin $a$. Further, since the average load in each dimension is $\frac{mq}{n}$. If $k$ balls were thrown in bin $a$, then $E[l_a^m] = kq$ and similarly, $\sum_{d \ne m} E[l_a^d] = (D-1)kq$. Hence, we get:
\begin{equation}\nonumber
\begin{aligned}
    Dkq & \le 2f^*.\log(n)/\epsilon + 2f^*\log(\delta)/\epsilon + mf^*/n\\
    \Rightarrow k & \le 2\log(n)/\epsilon + 2\log(\delta)/\epsilon + m/n\\
    \Rightarrow E[x_a^m] & \le 2\log(n)/\epsilon + 2\log(\delta)/\epsilon + m/n(1 - q)\\
\end{aligned}
\end{equation}
  The probabilistic bound can be computed similar to the fixed $f$ case (Theorem~\ref{thm-fixed-f-beta}).
%
$\Box$
\end{proof}
\subsection{Lower Bound}

We can show that the upper bound, for fixed $f$ case with uniform distribution, proved in section~\ref{sec:upper-bound-beta} is tight to within $f/D$ factor. Consider the case when, $an \log(n)/\beta^2$ balls are thrown into $n$ bins, using the $(1 + \beta)$ choice process. The expected dimensional sum load per bin is $af\log(n) / \beta^2$. Now, the expected number of balls thrown using the $(1+\beta)$ choice process is $an(1-\beta)\log(n)/\beta^2$. Raab and Steger~\cite{simple-analysis-bb} show that when $cn\log(n)$ balls are thrown uniformly and randomly into $n$ bins, then the load of the most loaded bin is at least $(c + \sqrt{c}/10)\log(n)$ balls. Using, $c = a(1-\beta)/\beta^2$, one can see that sum load in the maximum sum load bin is at least:
\begin{equation}\nonumber
\begin{aligned}
   & (\frac{a(1-\beta)}{\beta^2} + \sqrt{\frac{a(1-\beta)}{100\beta^2}} ) * f\log(n)
   & = (\frac{a}{\beta^2} + \frac{\sqrt{a(1-\beta)} - a}{10\beta} ) * f\log(n)
\end{aligned}
\end{equation}
Since, each ball has $f$ populated dimensions, hence, there are at least $O(\log(n)/\beta + a\log(n)/\beta^2)$ balls in this max sum load bin. Since, in each ball $f$ dimensions are uniformly distributed over $D$ dimensions, there exists a dimension whose load is at least $O(f\log(n)/D\beta)$ more than the average. Hence, the lower bound is $O(f\log(n)/D\beta)$.
\REM{
\subsection{Bound for Multiple Choice Process}

We consider the witness tree based analysis~\cite{vocking-tree} for the multidimensional balls and bins. Consider selection of bins using the $d$ choice ($d \ge 2$) process and assign an md-ball to the md-bin that has the least total dimensional load. Now, first consider the witness tree construction~\cite{vocking-tree} with unique balls in the tree. One can see that the root node at height $L+4$ corresponds to the total dimensional load for that bin (containing this ball) as $(L+4)f$. This root node would have $d$ children that have height $L+3$ and total dimension load for the corresponding bins as at least $(L+3)f$. This is so, since each ball has exactly $f$ ones in it and these children balls would be there in the $d$ chosen bins during bin selection by the root node (ball). Continuing in this fashion, one can see that this witness tree will have leaves that represent balls at height $3$ and total dimension load of the corresponding bin as at least $3f$. Thus, one can say that the "bad" event is that there exists
an activated witness tree with distinct balls at any fixed time $t$ such that the total dimensional load of the root is $f(L+4)$. The probability of such a event is bounded by $n2^{-d^L}$ (~\cite{vocking-tree}). Choosing, $fL \ge f\log_2\log_d(n) + f\log_d(1+c)$, the probability that the maximum total dimensional load across all bins exceeds $O(f(\log\log(n))$ is bounded by $1 - 1/n^c$. Similar argument can be made (for bound on maximum total dimensional load across all bins) for the full and the pruned witness trees~\cite{vocking-tree}.

Now, assume that bin $a$ has the dimension, say, $m$, that has the maximum gap ($Gap(t)$) across all dimensions. The total dimensional sum for this bin, $a$, is also bounded by $O(f\log\log(n))$ with high probability. Now, since each ball has exactly $f$ ones, the number of ones over all dimensions except $m$, is at least $(f-1)\log\log(n)$, therefore, the maximum load in the dimension $m$ is bounded by $O(\log\log(n))$. For, this dimension the minimum load in any bin can be $0$, hence, the maximum gap across any dimension is bounded by $O(\log\log(n))$.
} 

%
%
%
%
%
%
%
%

\section{Conclusions \& Future Work}\label{sec:conc}
In this paper, we consider the challenging problem of multidimensional balanced allocation for both the sequential and the parallel $d$ choice process and show that the gap (assuming fixed $f$ populated dimensions per ball and uniform distribution of $f$ over $D$) is $O(\log\log(n))$, which is tight (within $D/f$ factor of the lower bound). This improves the best prior~\cite{md-mm} bound of $O(\log\log(nD))$. Further, for arbitrary number of balls $m >> n$, the expected gap also has upper bound of $O(\log\log(n))$, that is independent of $m$ for the fixed $f$ case with uniform distribution of populated dimensions. For the variable $f$ case with (non-uniform) binomial distribution of populated dimensions, the gap is $O(\log(n))$ for $m=O(n)$. To the best of our knowledge, this is the first such analysis for $d$-choice paradigm with multidimensional balls and bins.
\par Our analysis also provides a much easier and elegant proof technique (as compared to~\cite{petra-heavy-case}) for the $O(\log\log(n))$ gap for $m >> n$ scalar balls thrown into $n$ bins using the symmetric multiple choice process. Moreover, for the weighted sequential scalar balls and bins and general case $m >> n$, we show the upper bound on the expected gap as $O(\log(n))$ which improves upon the best prior bound of  $n^c$ ($c$ depends on the weight distribution that has finite fourth moment) provided in~\cite{kunal-weighted}.In future, we would like to generalize the potential function approach for parallel and weighted balls and bins.
\par Further, we consider the challenging problem of multidimensional balanced allocation for the $(1+\beta)$ choice process and show that for arbitrarily large number of balls, the expected gap (assuming fixed $f$ populated dimensions per ball and uniform distribution of $f$ over $D$) is $O(\frac{\log(n)}{\beta})$, which is tight (within $D/f$ factor of the lower bound) and also independent of $m$. Further, the expected gap is also independent of $m$ for non-uniform distribution of $f$ dimensions over $D$, with fixed $f$ per ball) and for random $f$ with Binomial distribution.
\bibliographystyle{plain}
\bibliography{md-pbb.podc}

\begin{thebibliography}{10}

\bibitem{adler95}
Micah Adler, Soumen Chakrabarti, Michael Mitzenmacher, and Lars Rasmussen.
\newblock Parallel randomized load balancing.
\newblock In {\em In Proceedings of the 27th Annual ACM Symposium on Theory of
  Computing}, pages 238--247, 1995.

\bibitem{pram-sim-meyer}
F.~Meyer auf~der Heide, C.~Scheideler, and V.~Stemann.
\newblock Exploiting storage redundancy to speed up randomized shared memory.
\newblock {\em Theoretical Computer Science}, 162(2):245 -- 281, 1996.

\bibitem{bal-alloc-azar}
Yossi Azar, Andrei~Z. Broder, Anna~R. Karlin, and Eli Upfal.
\newblock Balanced allocations.
\newblock {\em SIAM Journal of Computing}, 29(1):180 -- 200, 1999.

\bibitem{petra-heavy-case}
Petra Berenbrink, Artur Czumag, Angelika Steger, and Berthold Vocking.
\newblock Balanced allocations: the heavily loaded case.
\newblock {\em SIAM Journal of Computing}, 35(6):1350 -- 1385, 2006.

\bibitem{lookup-mm}
Andrei Broder and Michael Mitzenmacher.
\newblock {Using multiple hash functions to improve ip lookups}.
\newblock Technical Report TR-03-00, Department of Computer Science, Harvard
  University, Cambridge, MA, 2000.

\bibitem{md-mm}
Andrei Broder and Michael Mitzenmacher.
\newblock Multidimensional balanced allocations.
\newblock In {\em Proceedings of the sixteenth annual ACM-SIAM symposium on
  Discrete algorithms}, SODA '05, pages 195--196, 2005.

\bibitem{routing-cole}
R.~Cole, B.M. Maggs, F.~Meyer auf~der Heide, M.~Mitzenmacher, A.W. Richa,
  K.~Schroder, R.K. Sitaraman, and B.~Vocking.
\newblock Randomized protocols for low congestion circuit routing in
  multi-stage interconnection networks.
\newblock In {\em Thirteith Annual Symposium ACM symposium on the Theory of
  Computing}, pages 378 -- 388, May 1998.

\bibitem{karp-pram-sim}
Richard~M. Karp, Michael Luby, and Friedhelm~Meyer auf~der heide.
\newblock Efficient pram simulation on a distributed memory machine.
\newblock In {\em $24^{th}$ Symposium on the Theory of Computing (STOC)}, pages
  318 -- 326, 1992.

\bibitem{mm-thesis}
Michael Mitzenmacher.
\newblock {\em The Power of Two Choices in Randomized Load Balancing}.
\newblock {PhD} in computer science, Harvard University, 1991.

\bibitem{kunal-beta}
Yuval Peres, Kunal Talwar, and Udi Wieder.
\newblock The (1 + $\beta$)-choice process and weighted balls-into-bins.
\newblock In {\em Proceedings of the Twenty-First Annual ACM-SIAM Symposium on
  Discrete Algorithms}, SODA '10, pages 1613--1619, 2010.

\bibitem{simple-analysis-bb}
Martin Raab and Angelika Steger.
\newblock Balls into bins - a simple and tight analysis.
\newblock In {\em RANDOM}, pages 159 -- 170, 1998.

\bibitem{kunal-weighted}
Kunal Talwar and Udi Wieder.
\newblock Balanced allocations: the weighted case.
\newblock In {\em Proceedings of the thirty-ninth annual ACM symposium on
  Theory of computing}, STOC '07, pages 256--265, New York, NY, USA, 2007. ACM.

\bibitem{vocking-tree}
B.~Vocking.
\newblock How asymmetry helps load balancing.
\newblock In {\em FOCS}, pages 131 -- 141, 1999.

\end{thebibliography}

\appendix

\section{Visualization of Multidimensional balls and bins}
\label{app:fig}

\begin{figure}[h]
\centering
\includegraphics[width=3.5in]{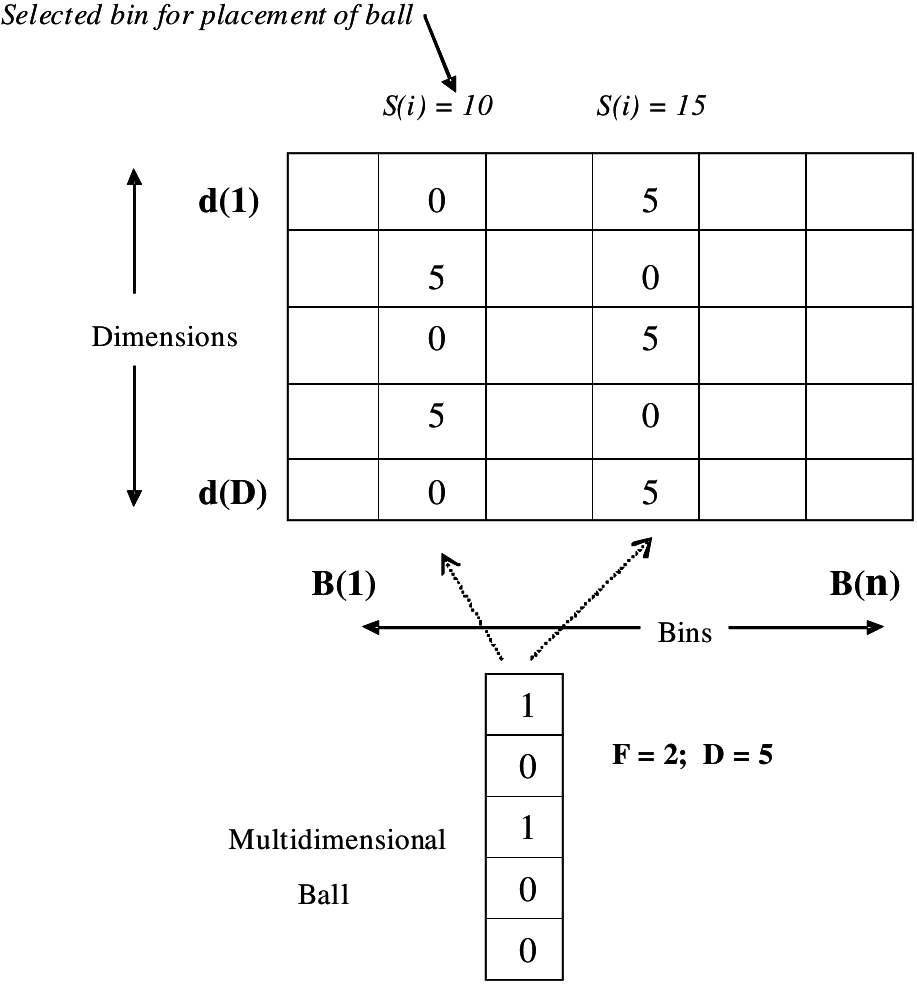}
\caption{Multidimensional Balls and Bins Scenario} 
\label{fig:md-balls-bins} 
\end{figure}

\section{Proof of Lemma~\ref{lemma-psi-base}} \label{app:1_proof}
The Lemma is restated below.
\begin{lemma}
  When an md-ball is thrown into an md-bin, the following inequality holds:
\begin{equation}\label{eq-psi-base}
   \EE[ \Psi(t+1) - \Psi(t) | x(t) ] \le
   \sum_{i=1}^{n} [ p_i * (-\alpha.f + (\alpha.f)^2)  + \alpha.f/n ]\Psi_i 
\end{equation}
\end{lemma}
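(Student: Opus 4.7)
The plan is to mirror the proof of Lemma~\ref{lemma-phi-base} verbatim, with the substitution $e^{\alpha s_i} \mapsto e^{-\alpha s_i}$ throughout. Let $\Lambda_i$ denote the expected change in $\Psi$ conditioned on the event that the current md-ball lands in bin $i$. By the Markov update in Section~\ref{sec:markov}, $r_i = s_i + f(1-1/n)$ while $r_j = s_j - f/n$ for every $j \ne i$, after which $s(t+1)$ is obtained by re-sorting.

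The combinatorial bookkeeping for how indices shift after re-sorting is identical to the $\Phi$ case: because $s_i$ strictly increases while all other $s_j$ decrease, bin $i$ moves up (toward lower index) and lands at position at most $p_0$, the beginning of its prior equi-load group; the remaining bins can be charged at their original indices $j$. Applying this to $\Psi$, the contribution of bin $i$ to $\Lambda_i$ is at most $\Psi_i\bigl[e^{-\alpha f(1-1/n)} - 1\bigr]$ and the contribution of each bin $j \ne i$ equals $\Psi_j\bigl[e^{\alpha f/n} - 1\bigr]$. Collecting these gives
\[
\Lambda_i \;\le\; \Psi_i\, e^{\alpha f/n}\bigl(e^{-\alpha f} - 1\bigr) \;+\; \bigl(e^{\alpha f/n} - 1\bigr)\,\Psi.
\]

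The final step is to take the expectation $\sum_i p_i \Lambda_i$ and apply the same Taylor estimates as in Lemma~\ref{lemma-phi-base}, but with the signs flipped on the linear term: $e^{\alpha f/n}(e^{-\alpha f} - 1) \lessapprox -\alpha f + (\alpha f)^2$ and $e^{\alpha f/n} - 1 \lessapprox \alpha f/n$, discarding the $O((\alpha f)^2/n)$ residuals that are negligible for large $n$. Substituting these bounds back into $\sum_i p_i \Lambda_i$ yields the claimed inequality. The only real point to watch is the sign flip: the leading contribution from bin $i$ is now a \emph{decrease} of order $\alpha f$ in $\Psi$ (since raising $s_i$ shrinks $e^{-\alpha s_i}$), whereas the offset term from the other bins contributes an \emph{increase} of order $\alpha f/n$; the quadratic upper-bound correction $(\alpha f)^2$ retains its positive sign in either case. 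There is no new combinatorial obstacle beyond what was already handled for $\Phi$.
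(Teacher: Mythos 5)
Your proposal is correct and follows essentially the same route as the paper's own proof in Appendix~\ref{app:1_proof}: define $\Lambda_i$, bound the contribution of bin $i$ by $\Psi_i[e^{-\alpha f(1-1/n)}-1]$ and of each $j\ne i$ by $\Psi_j[e^{\alpha f/n}-1]$, combine into $\Lambda_i \le \Psi_i e^{\alpha f/n}(e^{-\alpha f}-1)+(e^{\alpha f/n}-1)\Psi$, and apply the same Taylor estimates with the signs flipped. The index-shift bookkeeping you inherit from Lemma~\ref{lemma-phi-base} is handled with exactly the same level of (in)formality as in the paper, so nothing further is needed.
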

\begin{proof}
  Let $\Lambda_i$ be the expected change in $\Psi$ if the ball is put in bin, $i$. So, $r_i(t+1) = s_i + f(1-1/n)$; and for $j \ne i$, $r_j(t+1) = s_j(t) - f/n$. The new values i.e. $s(t+1)$ are obtained by sorting $r(t+1)$ and $\Psi(s) = \Psi(r)$. The expected contribution of bin, $i$, to $\Lambda_i$ is given as follows:
\begin{equation}\nonumber
\begin{aligned}
   \EE[ \frac{e^{-\alpha.(s_i + f(1-1/n))}}{p_0} ] - \frac{e^{-\alpha.s_i}}{p_0}
   & = \frac{e^{-\alpha.s_i}}{p_0} [ e^{-\alpha.f(1-1/n)} - 1 ]
\end{aligned}
\end{equation}
 Similarly, the expected contribution of bin, $j$ ($j \ne i$) to $\Lambda_i$ is given as:
\begin{equation}\nonumber
\begin{aligned}
   \EE[ \frac{e^{-\alpha.(s_j - f/n)}}{j} ] - \frac{e^{-\alpha.s_j}}{j}
   & = \frac{e^{-\alpha.s_j}}{j} [ e^{\alpha.f/n} - 1 ]
\end{aligned}
\end{equation}
 Therefore, $\Lambda_i$ is given as follows:
\begin{equation}\nonumber
\begin{aligned}
   \Lambda_i & = \Psi_i [ e^{-\alpha.f(1-1/n)} - 1] +
              \sum_{j \ne i} \Psi_j(e^{\alpha.f/n} - 1)\\
   & = \Psi_i e^{\alpha f/n} (e^{-\alpha.f} - 1) + (e^{\alpha.f/n} - 1).\Psi
\end{aligned}
\end{equation}
  Thus, we get the overall expected change in $\Psi$ as follows:
\begin{equation}
\begin{aligned}
   \EE[ \Psi(t+1) - \Psi(t) | x(t) ] & = \sum_{i=1}^{n} p_i * \Lambda_i\\
   & = \sum_{i=1}^{n} p_i * [ \Psi_ie^{\alpha f/n} (e^{-\alpha.f} - 1) + (e^{-\alpha.f/n} - 1).\Psi ]\\
   & = \sum_{i=1}^{n} p_i * \Psi_ie^{\alpha f/n} (e^{-\alpha.f} - 1) +
     (e^{-\alpha.f/n} - 1).p_i.\Psi\\
   & = \sum_{i=1}^{n} [ p_i * e^{\alpha.f/n} (e^{-\alpha.f} - 1) + (e^{\alpha.f/n} - 1) ]\Psi_i
\end{aligned}
\end{equation}
 Now, $e^{\alpha.f/n)} (e^{-\alpha.f} - 1)$ can be approximated as follows:
\begin{equation}\nonumber
\begin{aligned}
   e^{\alpha.f/n)}.(e^{-\alpha.f} - 1)
   & \le (1 + \alpha.f/n + (\alpha.f/n)^2) * (1 - \alpha.f + (\alpha.f)^2 - 1)\\
   & \sim -\alpha.f + (\alpha.f)^2 + O((\alpha.f)^2/n)\\
   & \text{Since, $(\alpha.f)^2/n$ is very small for large $n$, we ignore these small terms. Hence,}\\
   e^{\alpha.f/n)}.(e^{-\alpha.f} - 1) &\lessapprox (-\alpha.f + (\alpha.f)^2)
\end{aligned}
\end{equation}
 Similarly, $(e^{\alpha.f/n} - 1) \lessapprox \alpha.f/n$
Hence, the expected change in $\Psi$ can be given by:
\begin{equation}\label{eq-psi-base}
\begin{aligned}
   \EE[ \Psi(t+1) - \Psi(t) | x(t) ]
   \le \sum_{i=1}^{n} [ p_i * (-\alpha.f + (\alpha.f)^2)  + \alpha.f/n ]\Psi_i
\end{aligned}
\end{equation}
%
\end{proof}
\section{Proof of Lemma~\ref{lemma-psi-easy-case}} \label{app:2_proof}
\begin{lemma}\label{app-lemma-psi-easy-case}
  Let $\Psi$ be defined as above. If $s_{(n\gamma_1)}(t) \ge 0$ then,
  $\EE[ \Psi(t+1) | x(t) ] \le (1-\frac{\alpha f}{8n})\Psi$ 
\end{lemma}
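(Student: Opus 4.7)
The plan is to mirror the argument for Lemma~\ref{lemma-phi-easy-case}, exploiting the heavy/light symmetry that swaps $\Phi \leftrightarrow \Psi$ and the threshold $n\gamma_2 \leftrightarrow n\gamma_1$. Starting from the bound of Lemma~\ref{lemma-psi-base},
\[
\EE[\Psi(t+1) - \Psi(t) \mid x(t)] \;\le\; \sum_{i=1}^{n} p_i\bigl(-\alpha f + (\alpha f)^2\bigr)\Psi_i \;+\; \frac{\alpha f}{n}\Psi,
\]
I would split the first sum at the threshold index $i = n\gamma_1$. The hypothesis $s_{(n\gamma_1)}(t) \ge 0$ forces $s_i \ge 0$ for every $i \le n\gamma_1$, so $e^{-\alpha s_i} \le 1$ and $\Psi_i \le 1/i$ on that range; since $(-\alpha f + (\alpha f)^2)$ is negative for $\alpha f < 1$, every such term is non-positive and can be upper-bounded by $0$. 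The only residual from this range is the piece $(\alpha f /n)\,\Psi_{\le n\gamma_1} = O\bigl((\alpha f /n)\ln(n\gamma_1)\bigr)$ of the unbiased drift, which I absorb into the lower-order terms.

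For the useful range $i > n\gamma_1$, I would replicate the max-entropy step of Lemma~\ref{lemma-phi-easy-case}: because $p_i$ is non-decreasing and $e^{-\alpha s_i}$ is non-decreasing (as $s_i$ is non-increasing), a rearrangement/Chebyshev-style argument shows that the least-negative (worst) configuration of the helpful contribution is the flat one, which yields
\[
\sum_{i > n\gamma_1} p_i\Psi_i \;\ge\; \frac{\Psi_{>n\gamma_1}}{n - n\gamma_1}\sum_{i > n\gamma_1} p_i \;\ge\; \frac{(1+\gamma_1)}{n}\,\Psi_{>n\gamma_1},
\]
using $\sum_{i>n\gamma_1} p_i \ge 1-\gamma_1^2$ stated in the setup of Section~\ref{sec:unweighted}. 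Multiplying by $-\alpha f$, combining with the unbiased $+\alpha f\Psi/n$ term, and bounding $\Psi_{>n\gamma_1} \ge \Psi - \ln(n\gamma_1) - 1$ gives
\[
\EE[\Delta\Psi \mid x(t)] \;\le\; \frac{\alpha f}{n}\bigl(\Psi - (1+\gamma_1)\Psi_{>n\gamma_1}\bigr) + O\!\left(\frac{(\alpha f)^2}{n}\Psi\right) \;\le\; -\frac{\alpha f}{8n}\Psi,
\]
provided $\Psi = \Omega(\ln n)$ and $\gamma_1$ is separated from $0$ by a fixed margin (for instance $\gamma_1 \ge 1/8$); in the complementary regime $\Psi = O(\ln n)$ the claim is vacuous, since $\Gamma$ has already dropped into the ``$\Gamma < c$'' regime that Theorem~\ref{thm-super-mart} handles separately.

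The main obstacle will be making the Chebyshev/rearrangement step fully rigorous: a priori $\Psi_i = e^{-\alpha s_i}/i$ is a product of opposing monotone factors on $\{i > n\gamma_1\}$, so one cannot apply Chebyshev's sum inequality directly to $p_i$ and $\Psi_i$. The fix I intend to use is to apply rearrangement only to the monotone factor $e^{-\alpha s_i}$, pairing it against the non-decreasing weights $p_i/i$, and then pull out the $(1+\gamma_1)$ factor from the probability-mass bound. A secondary bookkeeping task is verifying the constant $1/8$: the slack $(1+\gamma_1)-1 = \gamma_1$ must cover both the residual $O(\ln n /\Psi)$ error from the low-index contribution and the $(\alpha f)^2/n \cdot \Psi$ second-order term, which is consistent with the standing assumption $\alpha f \le \epsilon/2 \le 1/8$ already in force throughout the unweighted analysis.
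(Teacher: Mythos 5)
There is a genuine gap, and it sits exactly at the step you flag as the ``main obstacle.'' Your plan treats the flat-in-$\Psi_i$ profile as the worst case and claims $\sum_{i>n\gamma_1}p_i\Psi_i\ \ge\ \frac{\Psi_{>n\gamma_1}}{n-n\gamma_1}\sum_{i>n\gamma_1}p_i\ \ge\ \frac{1+\gamma_1}{n}\Psi_{>n\gamma_1}$, but under the actual constraint (the $s_i$ are sorted, so $e^{-\alpha s_i}$ is non-decreasing while $1/i$ is decreasing) the extremal configuration is flat in $e^{-\alpha s_i}$, not in $\Psi_i$, and there your inequality fails. Concretely, take the admissible state $s_i\equiv 0$ (which satisfies $s_{(n\gamma_1)}\ge 0$): then $\Psi_i=1/i$, $\Psi_{>n\gamma_1}\approx\ln(1/\gamma_1)$, and with $p_i=\frac{2i-1}{n^2}$ one gets $\sum_{i>n\gamma_1}p_i\Psi_i\approx\frac{2(1-\gamma_1)}{n}$, whereas your claimed lower bound is $\approx\frac{(1+\gamma_1)\ln(1/\gamma_1)}{n}$; since $2(1-x)<(1+x)\ln(1/x)$ for every $x\in(0,1)$, the claimed bound exceeds the true value for every choice of $\gamma_1$. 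Your proposed repair does not rescue the constant either: applying Chebyshev/rearrangement to the pair $(p_i/i,\;e^{-\alpha s_i})$ and then converting $\sum_{i>n\gamma_1}e^{-\alpha s_i}$ back to $\Psi_{>n\gamma_1}$ via $e^{-\alpha s_i}=i\Psi_i\ge n\gamma_1\Psi_i$ yields only $\sum_{i>n\gamma_1}p_i\Psi_i\gtrsim\frac{2\gamma_1}{n}\Psi_{>n\gamma_1}$, and $2\gamma_1<1$ (the paper requires $\gamma_1<1/2$), so the resulting negative term cannot even cancel the $+\frac{\alpha f}{n}\Psi$ drift, let alone leave a $-\frac{\alpha f}{8n}\Psi$ margin.

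The paper's own proof follows your overall decomposition (drop the $i<n\gamma_1$ terms, lower-bound $\sum_{i\ge n\gamma_1}p_i\Psi_i$ using $\Psi_{\ge n\gamma_1}\ge\Psi-\ln(n\gamma_1)$), but it identifies the correct extremal profile: $e^{-\alpha s_i}$ constant on the tail, i.e.\ $\Psi_i\propto 1/i$, which gives $\sum_{i\ge n\gamma_1}p_i\Psi_i\ \gtrsim\ \frac{2(1-\gamma_1)}{n\ln(1/\gamma_1)}\bigl(\Psi-\ln(n\gamma_1)\bigr)$. The whole lemma therefore hinges on the ratio $\frac{2(1-\gamma_1)}{\ln(1/\gamma_1)}$ exceeding $1$ with slack (which forces $\gamma_1$ to be bounded away from $0$ by much more than your ``$\gamma_1\ge 1/8$'': at $\gamma_1=1/8$ this ratio is about $0.84<1$), not on the mass bound $\sum_{i>n\gamma_1}p_i\ge 1-\gamma_1^2$ that drives your computation. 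Two smaller issues: restricting to $\Psi=\Omega(\ln n)$ and declaring the complementary regime ``vacuous'' proves a weaker statement than the lemma, which Theorem~\ref{thm-super-mart} invokes unconditionally (one would have to re-route the small-$\Psi$ regime through the additive $c\ln(n)/n$ term there); and the residual $(\alpha f/n)\,\Psi_{\le n\gamma_1}=O(\alpha f\ln n/n)$ cannot simply be ``absorbed'' unless such a lower bound on $\Psi$ is in force, which is precisely what the unconditional statement does not grant you.
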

\begin{proof}
    From equation~\eqref{eq-psi-base}, we get,
\begin{equation}\label{eq-psi-1}
\begin{aligned}
  \EE[ \Psi(t+1) - \Psi(t) | x(t)] & \le \sum_{i=1}^{n} (p_i * (-\alpha.f + (\alpha.f)^2) + \alpha.f/n). \Psi_i\\
   & \le \sum_{i \ge n\gamma_1} p_i * (-\alpha.f + (\alpha.f)^2).\Psi_i + \alpha.f/n * \Psi
\end{aligned}
\end{equation}
   The last inequality follows since the $(-\alpha.f + (\alpha.f)^2)$ is negative.
  Now, we need to upper bound the term $\sum_{i \ge n\gamma_1} p_i * (-\alpha.f + (\alpha.f)^2).\Psi_i$. Since, $(-\alpha.f + (\alpha.f)^2)$ is negative, we need to find the minimum value of $\sum_{i \ge n\gamma_1} p_i * \Psi_i$. Further, $x_{n\gamma_1}(t) \ge 0$, so $\sum_{ i \ge n\gamma_1} \Psi_i \ge \Psi - \ln(n\gamma_1)$. Since $p_i$ is non-decreasing and $e^{-\alpha.x_i}$ is non-decreasing, the minimum value of $\sum_{i \ge n\gamma_1} p_i * \Psi_i$ is achieved, when, $e^{-\alpha s_i} \ln(1/\gamma_1) = (\Psi - \ln(n\gamma_1))$ for each $i \ge n\gamma_1$. Thus, the minimum value is given as follows.
\begin{equation}
\begin{aligned}
    \sum_{i \ge n\gamma_1} p_i\Psi_i & \ge \frac{\Psi - \ln(n\gamma_1)}{\ln(1/\gamma_1)} \sum_{i \ge n\gamma_1} \frac{2i-1}{n^2} * \frac{1}{i}\\
    & \ge \frac{\Psi - \ln(n\gamma_1)}{\ln(1/\gamma_1)} ( \frac{2(1-\gamma_1)}{n} - \frac{\ln(1/\gamma_1)}{n^2})
\end{aligned}
\end{equation}
  Thus, the expected change in $\Psi$ can be computed, using equation~\eqref{eq-psi-1} and the above bound, as follows:
\begin{equation}
\begin{aligned}
  \EE[ \Psi(t+1) - \Psi(t) | x(t)] & \le (-\alpha.f + (\alpha.f)^2)* \frac{\Psi - \ln(n\gamma_1)}{\ln(1/\gamma_1)} ( \frac{2(1-\gamma_1)}{n} - \frac{\ln(1/\gamma_1)}{n^2}) + \frac{\alpha.f\Psi}{n}\\ 
   & \le \frac{-2\alpha f\Psi(1-\gamma_1)}{n\ln(1/\gamma_1)} - \frac{2\alpha.f(1-\gamma_1)\ln(n\gamma_1)}{\ln(1/\gamma_1)} + \frac{\alpha f\Psi}{n} + O((\alpha.f)^2) + O(1/n^2)\\
   & \le \frac{-\alpha f\Psi}{8n}
\end{aligned}
\end{equation}
%
%
\end{proof}

\section{Proof of Lemma~\ref{lemma-s-gamma-psi}}
\label{app:3_proof}

\begin{proof}
    From equation~\eqref{eq-psi-base}, we get:
\begin{equation}
\begin{aligned}
   \EE[\Delta\Psi | x(t)] & \le \sum_{i=1}^{n} (p_i * (-\alpha.f + (\alpha.f)^2) + \alpha.f/n).\Psi_i\\
    & \le \sum_{i > (n\gamma_4)} (p_i * (-\alpha.f + (\alpha.f)^2) + \alpha.f/n).\Psi_i + \sum_{i \le (n\gamma_4)} (p_i * (-\alpha.f + (\alpha.f)^2) + \alpha.f/n) * \Psi_i\\
    & \le  (-\alpha.f + (\alpha.f)^2).\frac{\Psi_{> (n\gamma_4)}}{\ln(1/\gamma_4)} * \sum_{> (n\gamma_4)} \frac{2i-1}{n^2i} +\\
    & (-\alpha.f + (\alpha.f)^2).\frac{\Psi_{\le (n\gamma_4)}}{\ln(n\gamma_4)} * \sum_{\le (n\gamma_4)} \frac{2i-1}{n^2i} + \frac{\alpha f\Psi}{n}\\
    & \le \frac{\alpha f\Psi(-2(1-\gamma_4) + \ln(1\gamma_4))}{n\ln(1/\gamma_4)} + (2\alpha.f\Psi_{\le (n\gamma_4)})\frac{-\gamma_4\ln(1/\gamma_4) + (1-\gamma_4)\ln(n\gamma_4)}{n\ln(n\gamma_4)\ln(1/\gamma_4)}\\
\end{aligned}
\end{equation}
  In the above, the third inequality follows since, $(-\alpha.f + (\alpha.f)^2)$ is negative and $p_i \ge 0$. Now, since $\EE[\Delta\Psi|x(t)] \ge \frac{-\alpha f\Psi}{8n}$, we get that, 
  $\Psi \le 4 \Psi_{\le (n\gamma_4)}$.
  Let, $B = \sum_{i} max(0, s_i)$ (as mentioned in Lemma~\ref{lemma-s-gamma-phi}). One can observe that, $\Psi_{\le (n\gamma_4)} \le \ln(n\gamma_4) * e^{ \alpha.B/(n\gamma_3)}$. This implies that, $\Psi \le 4\ln(n\gamma_4) e^{\alpha.B/(n\gamma_3)}$.

Since, $s_{n\gamma_1} < 0$, so, $\Phi \ge \ln(n\gamma_1) * e^{\alpha.B/(n\gamma_1)}$. If $\Psi < \epsilon\gamma_1 * \Phi$, then we are done. Else, $\Psi \ge \epsilon\gamma_1 * \Phi$. This implies:
\begin{equation}\nonumber
  4\ln(n\gamma_4)e^{\alpha.B/(n\gamma_3)} \ge \Psi \ge \epsilon\gamma_1 * \Phi \ge (\epsilon\gamma_1\ln(n\gamma_1)) * e^{\alpha B/(n\gamma_1)}
\end{equation}
  Thus, $e^{\alpha.B/n} \le (\frac{4\ln(n\gamma_4)}{\epsilon\gamma_1\ln(n\gamma_1)})^\frac{\gamma_1\gamma_3}{\gamma_3 - \gamma_1}$. So, $\Gamma \le ((\epsilon+\theta)/\epsilon) * \Psi \le ((\epsilon+\theta)/\epsilon) * 4\ln(n\gamma_4)e^{\alpha.B/(n\gamma_3)}$. Hence, $\Gamma \le c\ln(n)$, where, $c = poly(1/\epsilon)$.
%
\end{proof}
\section{Proof of Lemma~\ref{lemma-psi-base-w}} \label{app:weighted_1_proof}
The Lemma is restated below.
\begin{lemma}
  When an md-ball is thrown into an md-bin, the following inequality holds:
\begin{equation}\label{eq-psi-base-w}
   \EE[ \Psi(t+1) - \Psi(t) | x(t) ] \le
   \sum_{i=1}^{n} [ p_i * (-\alpha.f^* + \frac{S\alpha^2}{n^2})  + \alpha.f^*/n ]\Psi_i 
\end{equation}
\end{lemma}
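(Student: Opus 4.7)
The plan is to mirror the derivation of Lemma~\ref{lemma-phi-base-w}, exploiting the sign flip $\alpha \mapsto -\alpha$ in the exponent while re-tracking the denominator shifts, which now work against $\Psi$ rather than for it. Conditioning on the ball landing in bin $i$, I denote the expected change in $\Psi$ by $\Lambda_i$ and decompose it into three pieces (bin $i$ itself, bins whose rank shifts right by one, and bins whose rank is unchanged); the claimed bound then follows by taking the $p_i$-weighted sum over $i$ and redistributing the $\Psi$-proportional piece inside the sum.

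For bin $i$ itself, the load increases by $f(1-1/n)$, so its new rank $i_{\text{new}}$ satisfies $i_{\text{new}} \le i$ (bins are sorted with $s_1 \ge \dots \ge s_n$); the denominator in $\Psi$ thus grows from $n^2+n-i+1$ to $n^2+n-i_{\text{new}}+1$, shrinking the weight, and the exponential shrinks to $e^{-\alpha(s_i + f(1-1/n))}$. Using the worst-case bound $i_{\text{new}} = 1$ together with the Taylor expansion $M(-\alpha(1-1/n)) \le 1 - f^*\alpha(1-1/n) + S\alpha^2$ (from $M(0)=1$, $M'(0) = f^*$, $M''(\xi) \le 2S$), bin $i$'s contribution is at most $\Psi_i[-\alpha f^* + \alpha f^*/n + S\alpha^2]$. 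For a bin $j \ne i$ whose rank shifts right by one, the denominator shrinks by a factor of $\frac{n^2+n-j+1}{n^2+n-j} \le 1 + 1/n^2$ while the exponential grows by $M(\alpha/n) \le 1 + \alpha f^*/n + S\alpha^2/n^2$, yielding contribution at most $\Psi_j[\alpha f^*/n + O(1/n^2)]$. A bin $j$ whose rank is unchanged contributes $\Psi_j[M(\alpha/n) - 1] \le \Psi_j[\alpha f^*/n + S\alpha^2/n^2]$.

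Summing the three contributions and absorbing the $\alpha f^*/n$ correction from bin $i$'s own term into the ambient drift gives $\Lambda_i \le (-\alpha f^* + S\alpha^2/n^2)\Psi_i + (\alpha f^*/n)\Psi$. Taking the $p_i$-weighted average and using $(\alpha f^*/n)\Psi = (\alpha f^*/n)\sum_j \Psi_j$ to redistribute the second term inside the sum produces
\begin{equation}\nonumber
\EE[\Psi(t+1) - \Psi(t) \mid x(t)] \le \sum_{i=1}^{n}\left[p_i\bigl(-\alpha f^* + S\alpha^2/n^2\bigr) + \alpha f^*/n\right]\Psi_i,
\end{equation}
which is the claim.

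The main obstacle is controlling the shifted-bin contribution. In Lemma~\ref{lemma-phi-base-w} the rank shift moved a bin into a position of smaller weight in $\Phi$, reinforcing the negative drift; for $\Psi$ the reverse holds, so a right shift increases the bin's weight and works against the exponential damping. The crux is showing that the shift-induced multiplicative factor $\frac{n^2+n-j+1}{n^2+n-j}$ is only $1 + O(1/n^2)$ and so is absorbed into the $\alpha f^*/n$ background rather than creating a separate $\Theta(1/n)$ error. This relies on the explicit form of the denominators in $\Psi$, which differ by a constant increment of $1$ against a base scale of $\Theta(n^2)$, exactly as in the $\Phi$ calculation but with the roles of "bin $i$" and "shifted bins" swapped in which direction is beneficial.
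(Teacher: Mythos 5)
Your overall route is the same as the paper's (Appendix~\ref{app:weighted_1_proof}): condition on the receiving bin $i$, split the change of $\Psi$ into the receiving bin, the bins displaced by one rank, and the untouched bins, control each piece through the moment generating function ($M(0)=1$, $M'(0)=f^*$, $M''\le 2S$), and reassemble via the $p_i$-weighted sum, redistributing the $(\alpha f^*/n)\Psi$ piece over the indices. You are in fact more careful than the paper on one point: the paper silently keeps the denominator $n^2+n-j+1$ for the displaced bins, whereas you observe that for $\Psi$ the right shift is the unfavourable direction and bound the factor $\frac{n^2+n-j+1}{n^2+n-j}$ by $1+O(1/n^2)$, which is the correct way to dispose of it (at the cost of an extra $O(\Psi/n^2)$ remainder that is not literally in the stated bound, though it is lower order than the $(\alpha f^*/n)\Psi$ term). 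One wording slip: for $\Psi$ the receiving bin's jump to a smaller index only \emph{increases} its denominator and shrinks its weight, so the binding case is $i_{\mathrm{new}}=i$, not $i_{\mathrm{new}}=1$; the inequality you actually use (dropping the denominator change altogether) is valid, so this is cosmetic.

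The genuine gap is the second-order coefficient. Your own estimate for the receiving bin is $\Psi_i[-\alpha f^* + \alpha f^*/n + S\alpha^2]$ --- correctly so, because the Taylor remainder of $M(-\alpha(1-1/n))$ is of order $S\alpha^2$ and $\alpha$ is a constant independent of $n$ --- yet in the next step you assert $\Lambda_i \le (-\alpha f^* + S\alpha^2/n^2)\Psi_i + (\alpha f^*/n)\Psi$ and hence the stated inequality with $S\alpha^2/n^2$. Nothing in your argument converts the constant-order $S\alpha^2$ into $S\alpha^2/n^2$; ``absorbing'' can only account for the $\alpha f^*/n$ and $O(1/n^2)$ pieces. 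What your computation actually proves is the lemma with $S\alpha^2$ in place of $S\alpha^2/n^2$; that weaker form still makes the bracket negative (since $\alpha \le \epsilon/(6S)$ gives $S\alpha^2 \le \epsilon\alpha/6 < \alpha f^*$) and would support the downstream lemmas after adjusting constants, but it is not the inequality as stated. For what it is worth, the paper's own proof makes exactly the same unexplained jump, bounding the receiving bin's term by $\Psi_i(-\alpha f^*(1-1/n) + S\alpha^2/n^2)$ without justification for the $n^2$ in the denominator, so you have reproduced rather than repaired that defect; a clean write-up should either carry the honest $S\alpha^2$ term through or explain why the second-order term can be taken at scale $1/n^2$ (it cannot, for constant $\alpha$).
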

\begin{proof}
  Let $\Lambda_i$ be the expected change in $\Psi$ if the ball is put in bin, $i$. So, $r_i(t+1) = s_i + f(1-1/n)$; and for $j \ne i$, $r_j(t+1) = s_j(t) - f/n$. The new values i.e. $s(t+1)$ are obtained by sorting $r(t+1)$ and $\Psi(s) = \Psi(r)$. When, an md-ball is committed to bin $i$, then it jumps to an index $i_{new}$ which is less than or equal to $i$ in the new bin order. Using similar analysis for taking care of these jumps as in Lemma~\ref{lemma-phi-base-w}, the expected contribution of bin, $i$, to $\Lambda_i$ is given as follows:
\begin{equation}\nonumber
\begin{aligned}
   \EE[ \frac{e^{-\alpha.(s_i + f(1-1/n))}}{n^2 + n-i+1} ] - \frac{e^{-\alpha.s_i}}{n^2+n-i+1}\\
   & = \frac{e^{-\alpha.s_i}}{n^2+n-i+1} [ M(-\alpha(1-1/n)) - 1 ]\\
   &\le \Psi_i (-\alpha.f^*(1-1/n) + \frac{S\alpha^2}{n^2})
\end{aligned}
\end{equation}
 Similarly, the expected contribution of bin, $j$ ($j \ne i$) to $\Lambda_i$ is given as:
\begin{equation}\nonumber
\begin{aligned}
   \EE[ \frac{e^{-\alpha.(s_j - f/n)}}{n^2+n-j+1} ] - \frac{e^{-\alpha.s_j}}{n^2+n-j+1}\\
   & = \frac{e^{-\alpha.s_j}}{n^2+n-j+1} [ M(\alpha/n) - 1 ]\\
   &\le \Psi_j \frac{-\alpha.f^*}{n}
\end{aligned}
\end{equation}
 Therefore, $\Lambda_i$ is given as follows:
\begin{equation}\nonumber
\begin{aligned}
   \Lambda_i & = \Psi_i [\alpha.f^* + \frac{S\alpha^2}{n^2}] + \frac{\alpha.f^*\Psi}{n}
\end{aligned}
\end{equation}
%
%
%
%
%
Hence, the expected change in $\Psi$ can be given by:
\begin{equation}
\begin{aligned}
   \EE[ \Psi(t+1) - \Psi(t) | x(t) ]
   \le \sum_{i=1}^{n} [ p_i * (-\alpha.f^* + \frac{S\alpha^2}{n^2}) + \alpha.f^*/n ]\Psi_i
\end{aligned}
\end{equation}
%
\end{proof}
\section{Proof of Lemma~\ref{lemma-psi-easy-case-w}} \label{app:2_w_proof}
\begin{lemma}\label{app-lemma-psi-easy-case-w}
  Let $\Psi$ be defined as above. If $s_{(n\gamma_1)}(t) \ge 0$ then,
  $\EE[ \Psi(t+1) | x(t) ] \le (1-\frac{\alpha f^*}{2n})\Psi$ 
\end{lemma}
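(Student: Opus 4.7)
The plan is to mirror the argument of Lemma~\ref{lemma-phi-easy-case-w}, but with the role of ``heavy'' and ``light'' bins interchanged, and with the split index at $n\gamma_1$ instead of $n\gamma_2$. The starting point is the bound from Lemma~\ref{lemma-psi-base-w}:
\begin{equation}\nonumber
   \EE[\Psi(t+1) - \Psi(t) \mid x(t)] \le \sum_{i=1}^{n}\Bigl[p_i(-\alpha f^* + S\alpha^2/n^2) + \alpha f^*/n\Bigr]\Psi_i.
\end{equation}
Rewrite the right-hand side as $(-\alpha f^* + S\alpha^2/n^2)\sum_i p_i\Psi_i + (\alpha f^*/n)\Psi$; the coefficient in front of the weighted sum is negative (because $\alpha \le \epsilon/(6S)$ makes $S\alpha^2/n^2 \ll \alpha f^*$), so to get a negative drift I need a suitable \emph{lower} bound on $\sum_i p_i \Psi_i$.

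Next I split the sum at $i = n\gamma_1$. For $i \le n\gamma_1$, the assumption $s_{(n\gamma_1)}(t) \ge 0$ together with the sorting $s_1 \ge s_2 \ge \cdots$ gives $s_i \ge 0$, hence $\Psi_i \le 1/(n^2+n-i+1)$; this part contributes only $O(\gamma_1/n^2)$ to $\sum_i p_i\Psi_i$, a lower-order error. For $i > n\gamma_1$ the exponential factor $e^{-\alpha s_i}$ may be large, but these are exactly the bins favored by the $d$-choice bias: with $p_i = (2i-1)/n^2$ one has $p_i \ge (2n\gamma_1+1)/n^2$ for $i > n\gamma_1$, so by the same ``equalize $e^{-\alpha s_i}$'' trick used in Lemma~\ref{lemma-phi-easy-case-w} (making the estimate tight when $\Psi_i$ is constant across the tail, subject to $\sum_{i>n\gamma_1}\Psi_i \ge \Psi - O(1/n)$), I obtain
\begin{equation}\nonumber
   \sum_{i > n\gamma_1} p_i \Psi_i \;\ge\; \frac{(2n-1)(n+\gamma_1)(\Psi - O(1/n))}{n^2(n+1)(1-\gamma_1)} \cdot (1-\gamma_1) \;=\; \frac{(2-2\gamma_1 + o(1))\Psi}{n}.
\end{equation}

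Substituting back and collecting, the $\alpha f^*/n \cdot \Psi$ positive term is cancelled (and then some) by $-\alpha f^*(2-2\gamma_1)\Psi/n$, because $2(1-\gamma_1) > 1$ when $\gamma_1 < 1/2$; the surplus yields a strictly negative drift. Choosing $\gamma_1$ symmetric to the constraint $\gamma_2 < 7/16$ used for $\Phi$ (so $\gamma_1 > 9/16$ is ruled out and the margin $2(1-\gamma_1)-1$ is at least $1/2$), the net coefficient on $\Psi/n$ is at most $-\alpha f^*/2$, giving
\begin{equation}\nonumber
   \EE[\Delta\Psi \mid x(t)] \;\le\; -\frac{\alpha f^*}{2n}\Psi,
\end{equation}
which rearranges to the stated inequality. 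The residual terms of order $S\alpha^2\Psi/n^3$ and the small $O(1/n)$ slack from truncating $\sum_{i>n\gamma_1}\Psi_i$ are absorbed using $\alpha \le \epsilon/(6S)$ and $\epsilon \le 1/4$, exactly as in the $\Phi$ argument. The main obstacle is the bookkeeping on the cross-term: it is tempting to apply a Chebyshev-type bound to $\sum p_i\Psi_i$ directly, but $p_i$ is non-decreasing while $\Psi_i$ is also non-decreasing (since both $s_i$ is non-increasing and $n^2+n-i+1$ is non-increasing), so one must verify that the rearrangement works in the favorable direction, which is what forces the two-region split rather than a global estimate.
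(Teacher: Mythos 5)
Your overall route is the same as the paper's: start from Lemma~\ref{lemma-psi-base-w}, observe the coefficient of $\sum_i p_i\Psi_i$ is negative so a \emph{lower} bound on that cross-term is needed, use $s_{(n\gamma_1)}\ge 0$ to make the first $n\gamma_1$ bins contribute only $O(1/n^2)$ (the paper simply discards them, which is legitimate for the same reason), and then lower-bound the tail sum by the extremal configuration in which $e^{-\alpha s_i}$ is constant over $i>n\gamma_1$, finally cancelling the $+\alpha f^*\Psi/n$ term. So there is no methodological divergence from the paper.

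The genuine gap is in the one quantitative step the whole lemma hinges on: the tail bound $\sum_{i>n\gamma_1}p_i\Psi_i\ge (2-2\gamma_1+o(1))\Psi/n$. First, it does not follow from your own displayed expression, since $\frac{(2n-1)(n+\gamma_1)}{n^2(n+1)}\approx \frac{2}{n}$, so your formula simplifies to roughly $2\Psi/n$, not $(2-2\gamma_1)\Psi/n$. Second, and more importantly, the constant is too strong to be provable by the equalization/Chebyshev argument you invoke: with $p_i=(2i-1)/n^2$ one has $\sum_{i>n\gamma_1}p_i\approx 1-\gamma_1^2$, and among non-decreasing tails with $\sum_{i>n\gamma_1}\Psi_i\approx\Psi$ the minimizer is the flat one, giving $\sum_{i>n\gamma_1}p_i\Psi_i\approx\frac{(1-\gamma_1^2)\Psi}{n(1-\gamma_1)}=\frac{(1+\gamma_1)\Psi}{n}$, which is smaller than $(2-2\gamma_1)\Psi/n$ whenever $\gamma_1<1/3$ (and the weighted section only assumes $0<\gamma_1<\gamma_2<7/16$). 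With the honest value $(1+\gamma_1)\Psi/n$, the drift you obtain is $-\alpha f^*\gamma_1\Psi/n$ plus lower-order terms, not $-\alpha f^*\Psi/(2n)$, so the stated constant $1/2$ is not reached for $\gamma_1<1/2$. Your closing constants argument does not repair this: to get a margin $2(1-\gamma_1)-1\ge 1/2$ you would need $\gamma_1\le 1/4$, which is not implied by ``ruling out $\gamma_1>9/16$'' and is not among the paper's assumptions. (For what it is worth, the paper's own appendix proof asserts essentially the same $\approx 2\Psi/n$ tail estimate at this step, so your write-up inherits rather than avoids this weakness; but as a self-contained proof the key inequality is asserted, internally inconsistent, and stronger than the extremal configuration allows, so the proof as written does not close.)
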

\begin{proof}
    From equation~\eqref{eq-psi-base-w}, we get,
\begin{equation}\label{eq-psi-w-1}
\begin{aligned}
  \EE[ \Psi(t+1) - \Psi(t) | x(t)] & \le \sum_{i=1}^{n} (p_i * (-\alpha.f^* + \frac{S(\alpha)^2}{n^2}) + \alpha.f^*/n). \Psi_i\\
   & \le \sum_{i \ge n\gamma_1} p_i * (-\alpha.f^* + \frac{S\alpha^2}{n^2}).\Psi_i + \alpha.f^*/n * \Psi
\end{aligned}
\end{equation}
   The last inequality follows since the $(-\alpha.f^* + \frac{S\alpha^2}{n^2})$ is negative.
  Now, we need to upper bound the term $\sum_{i \ge n\gamma_1} p_i * (-\alpha.f^* + \frac{S\alpha^2}{n^2}).\Psi_i$. Since, $(-\alpha.f^* + \frac{S\alpha^2}{n^2})$ is negative, we need to find the minimum value of $\sum_{i \ge n\gamma_1} p_i * \Psi_i$. 
Since $p_i$ is non-decreasing and $\Psi_i$ is non-decreasing, the minimum value of $\sum_{i \ge n\gamma_1} p_i * \Psi_i$ is achieved, when, $e^{-\alpha s_i} \frac{n-n\gamma_1}{n^2+1} = \Psi$ for each $i \ge n\gamma_1$. Thus, the minimum value is given as follows.
\begin{equation}
\begin{aligned}
    \sum_{i \ge n\gamma_1} p_i\Psi_i & \ge \frac{n\Psi}{(1-\gamma_1)} \sum_{i \ge n\gamma_1} \frac{2i-1}{n^2} * \frac{1}{n^2+n-i+1}\\
    & \ge \frac{\Psi}{n(1 - \gamma_1)} * \frac{(2n+1)(1-\gamma_1)}{n}
\end{aligned}
\end{equation}
  Thus, the expected change in $\Psi$ can be computed, using equation~\eqref{eq-psi-w-1} and the above bound, as follows:
\begin{equation}
\begin{aligned}
  \EE[ \Psi(t+1) - \Psi(t) | x(t)] & \le (-\alpha.f^* + \frac{S\alpha^2}{n^2})* \frac{\Psi}{n(1 - \gamma_1)}* \frac{(2n+1)(1-\gamma_1)}{n} + \frac{\alpha.f^*\Psi}{n}\\ 
   & \le \frac{-\alpha f^*\Psi}{n} + \frac{2S\alpha^2\Psi}{n^2} - \frac{\alpha.f^*\Psi}{n^2}\\
   & \le \frac{-\alpha.f^*\Psi}{2n}
\end{aligned}
\end{equation}
%
\end{proof}
\section{Proof of Lemma~\ref{lemma-s-gamma-psi-w}}\label{app:3_w_proof}
 The lemma is restated below:
\begin{lemma} \label{app:lemma-s-gamma-psi-w}
   Let, $s_{(n\gamma_1)} < 0$ and $\EE[\Delta\Psi|x(t)] \ge -\alpha.f^*\Psi/2n$. Then, either $\Psi < \epsilon\gamma_1 * \Phi$, or $\Gamma < c/n$ for some $c = poly(1/epsilon)$.
\end{lemma}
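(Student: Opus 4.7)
The plan is to mirror the proof of Lemma~\ref{lemma-s-gamma-phi-w} with the roles of $\Phi$ and $\Psi$ interchanged, and with the pair of thresholds $(\gamma_2,\gamma_3)$ replaced by $(\gamma_1,\gamma_4)$. The assumption $\gamma_1+\gamma_4<1$ is the dual of $\gamma_2+\gamma_3>1$ used there, and it is exactly what will make the two exponents separate in the concluding step.

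First I would apply Lemma~\ref{lemma-psi-base-w} and split the resulting drift bound at the index $n\gamma_4$. Because $p_i$ is non-decreasing in $i$ and the coefficient $-\alpha f^*+S\alpha^2/n^2$ is negative, the tail indices $i>n\gamma_4$ supply the main negative drift, while the head $\Psi_{\le n\gamma_4}$ contributes comparatively little. Running the same manipulations that produced $\Phi\le\mathrm{const}\cdot\Phi_{>n\gamma_3}$ in the proof of Lemma~\ref{lemma-s-gamma-phi-w}, the hypothesis $\EE[\Delta\Psi\mid x(t)]\ge -\alpha f^*\Psi/(2n)$ forces
\begin{equation}\nonumber
\Psi \;\le\; C_1\cdot\Psi_{\le n\gamma_4}
\end{equation}
for an explicit constant $C_1=C_1(\gamma_4)$, the symmetric counterpart of the bound on $\Phi$ in terms of $\Phi_{>n\gamma_3}$ in Lemma~\ref{lemma-s-gamma-phi-w}.

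Next I would introduce $B=\sum_i\max(0,s_i)$. Since $s_{n\gamma_1}<0$ and the bins are sorted in decreasing order of $s_i$, every positive-load bin has index strictly less than $n\gamma_1$; in particular every $i>n\gamma_4$ (with $\gamma_4>\gamma_1$) satisfies $s_i<0$, so $\sum_{i>n\gamma_4}s_i$ is a partial sum of negative terms and is therefore at least $-B$. Combined with the sort order this gives $s_{n\gamma_4}\ge -B/(n(1-\gamma_4))$, and hence $e^{-\alpha s_i}\le e^{\alpha B/(n(1-\gamma_4))}$ uniformly for $i\le n\gamma_4$. The denominators $n^2+n-i+1=\Theta(n^2)$ summed over $i\le n\gamma_4$ contribute an extra factor of $\Theta(1/n)$, so
\begin{equation}\nonumber
\Psi \;\le\; \frac{C_1'}{n}\,e^{\alpha B/(n(1-\gamma_4))}.
\end{equation}
For the matching lower bound on $\Phi$, the dual of the bound on $\Psi$ used in Lemma~\ref{lemma-s-gamma-phi-w} can be obtained either by a weighted Jensen inequality on the head of $\Phi$ or, more crudely, by the single-term estimate $\Phi\ge e^{\alpha s_1}/(n^2+1)$ combined with the pigeonhole observation that at most $n\gamma_1$ bins carry positive load, so $s_1\ge B/(n\gamma_1)$; either way one obtains $\Phi\ge C_2 n^{-\sigma}\,e^{\alpha B/(n\gamma_1)}$ for appropriate constants $C_2$ and $\sigma$.

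Finally, assuming $\Psi\ge\epsilon\gamma_1\Phi$ (otherwise the first alternative of the lemma is satisfied), chaining the two bounds gives
\begin{equation}\nonumber
e^{\alpha B/n\cdot(1/\gamma_1-1/(1-\gamma_4))} \;\le\; \mathrm{poly}(n)\,/\,(\epsilon\gamma_1).
\end{equation}
Because $\gamma_1+\gamma_4<1$, the exponent coefficient $(1-\gamma_1-\gamma_4)/(\gamma_1(1-\gamma_4))$ is strictly positive, so taking the appropriate root bounds $e^{\alpha B/n}$ by $(\mathrm{poly}(n)/(\epsilon\gamma_1))^{\gamma_1(1-\gamma_4)/(1-\gamma_1-\gamma_4)}$. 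Substituting back into the upper bound on $\Psi$ and using $\Gamma\le(1+1/(\epsilon\gamma_1))\Psi$ then yields $\Gamma<c/n$ with $c=\mathrm{poly}(1/\epsilon)$. The hardest part will be the polynomial-in-$n$ bookkeeping: the $\Theta(1/n^2)$ factors built into the denominators of $\Psi$ must survive the root extraction by $\gamma_1(1-\gamma_4)/(1-\gamma_1-\gamma_4)$ in order to bring the final estimate down to $\Gamma\le c/n$ rather than some weaker $1/n^\sigma$, and the slack in $\gamma_1+\gamma_4<1$ is exactly what provides the margin for this to hold.
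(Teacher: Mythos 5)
Your proposal follows essentially the same route as the paper's Appendix proof: apply Lemma~\ref{lemma-psi-base-w}, split at $n\gamma_4$, use the drift hypothesis to force $\Psi \le C_1\Psi_{\le n\gamma_4}$, then compare $\Psi \le \frac{C}{n}e^{\alpha B/(n(1-\gamma_4))}$ against a lower bound on $\Phi$ of the form $e^{\alpha B/(n\gamma_1)}$ times an explicit prefactor, and exploit $\gamma_1+\gamma_4<1$ to extract $e^{\alpha B/n}$. One caution on your two suggested lower bounds for $\Phi$: the crude single-term estimate $\Phi \ge e^{\alpha s_1}/(n^2+1)$ leaves an unmatched factor of $n$ in the chained inequality, so after raising to the power $\gamma_1(1-\gamma_4)/(1-\gamma_1-\gamma_4)$ you get $e^{\alpha B/n} \le \mathrm{poly}(1/\epsilon)\cdot n^{\sigma'}$ for a strictly positive $\sigma'$, and substituting back only gives $\Gamma \le c\,n^{\sigma''-1}$ rather than $c/n$; the slack in $\gamma_1+\gamma_4<1$ makes the exponent positive but does not absorb this stray power of $n$. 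Only the averaging (Jensen-type) bound $\Phi \ge \frac{\gamma_1}{n}e^{\alpha B/(n\gamma_1)}$, which is what the paper uses, makes the $1/n$ prefactors cancel so that $e^{\alpha B/n}$ is bounded purely by $\mathrm{poly}(1/\epsilon)$ and the stated conclusion $\Gamma < c/n$ follows.
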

\begin{proof}
    From equation~\eqref{eq-psi-base-w}, we get:
\begin{equation}
\begin{aligned}
   \EE[\Delta\Psi | x(t)] & \le \sum_{i=1}^{n} (p_i * (-\alpha.f^* + \frac{S\alpha^2}{n^2} + \frac{\alpha.f^*}{n}).\Psi_i\\
    & \le \sum_{i > (n\gamma_4)} (p_i * (-\alpha.f^* + \frac{S\alpha^2}{n^2}) + \alpha.f^*/n).\Psi_i + \sum_{i \le (n\gamma_4)} (p_i * (-\alpha.f^* + \frac{S\alpha^2}{n^2}) + \alpha.f^*/n) * \Psi_i\\
    & \le \frac{\alpha.f^*\Psi_{(\ge n\gamma_4)}}{n\gamma_4}*\frac{(2n+1)(\gamma_4-1)}{n} + \frac{\alpha.f^*}{n}\\
    & \frac{\alpha.f^*\Psi_{(< n\gamma_4)}}{n(\gamma_4-1)}*\frac{(2n+1)\gamma_4}{n}\\
    &\le \frac{\alpha.f^*\Psi}{n\gamma_4}*\frac{(2n+1)(\gamma_4-1)}{n} + \frac{\alpha.f^*}{n}\\
    & \frac{2\alpha.f^*\Psi_{(< n\gamma_4)}}{n\gamma_4(\gamma_4-1)}*\frac{(1-2\gamma_4)}{(1-\gamma_4)}\\
\end{aligned}
\end{equation}
  In the above, the third inequality follows since, $(-\alpha.f^* + \frac{S\alpha^2}{n^2})$ is negative and $p_i \ge 0$. Now, since $\EE[\Delta\Psi|x(t)] \ge \frac{-\alpha.f^*\Psi}{2n}$, we get: $\Psi \le \Psi_{\le (n\gamma_4)} *  \frac{4(2\gamma_4-1)}{\gamma_4(1-\gamma_4)(4\gamma_4-1)}$.
  Let, $B = \sum_{i} max(0, s_i)$ (as mentioned in Lemma~\ref{lemma-s-gamma-phi-w}). One can observe that, $\Psi_{\le (n\gamma_4)} \le  * \frac{1-\gamma_4}{n} e^{\alpha.B/(n-n\gamma_4)}$. This implies that, $\Psi \le \frac{4(2\gamma_4-1)}{\gamma_4(4\gamma_4-1)} e^{\alpha.B/(n-n\gamma_4)}$.

Since, $s_{n\gamma_1} < 0$, so, $\Phi \ge \frac{\gamma_1}{n} * e^{\alpha.B/(n\gamma_1)}$. If $\Psi < \epsilon\gamma_1 * \Phi$, then we are done. Else, $\Psi \ge \epsilon\gamma_1 * \Phi$. This implies:
\begin{equation}\nonumber
  \frac{4(2\gamma_4-1)}{\gamma_4(4\gamma_4-1)}.e^{\alpha.B/(n-n\gamma_4)} \ge \Psi \ge \epsilon\gamma_1 * \Phi \ge \frac{\epsilon\gamma_1^2}{n} * e^{\alpha B/(n\gamma_1)}
\end{equation}
  Thus, $e^{\alpha.B/n} \le (\frac{4(2\gamma_4-1)}{\epsilon\gamma_1^2\gamma_4(4\gamma_4-1)})^\frac{\gamma_1(1-\gamma_4)}{1 - \gamma_4 - \gamma_1}$. So, $\Gamma \le ((1+\theta)/\epsilon) * \Psi \le ((1+\theta)/\epsilon) * \frac{4(2\gamma_4-1)}{n\gamma_4(4\gamma_4-1)} e^{\alpha.B/(n\gamma_3)}$. Hence, $\Gamma \le \frac{c}{n}$, where, $c = poly(1/\epsilon)$.
%
\end{proof}
\section{Proof of Lemma~\ref{lemma-psi-easy-case-beta}}
\label{app:1_proof_beta}
\begin{lemma}\label{app-lemma-psi-easy-case-beta}
  Let $\Psi$ be defined as above. If $s_{(n\gamma_1)}(t) \ge 0$ then,
  $\EE[ \Psi(t+1) | x(t) ] \le (1-\frac{\epsilon^2}{4n})\Psi + 1$ 
\end{lemma}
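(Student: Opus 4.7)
The plan is to mirror the proof of Lemma~\ref{lemma-phi-easy-case-beta}, relying on the symmetry between $\Phi$ and $\Psi$ under the reflection $s_i \leftrightarrow -s_i$. Starting from equation~\eqref{eq-psi-base-beta} and using the hypothesis $s_{(n\gamma_1)}(t) \ge 0$ together with the sorted order $s_1 \ge \cdots \ge s_n$, I would split the sum at the index $n\gamma_1$: for $i \le n\gamma_1$ we have $s_i \ge 0$ and hence $\Psi_i = e^{-\alpha s_i} \le 1$, while for $i > n\gamma_1$ the values $\Psi_i$ may be large. Thus the split at $n\gamma_1$ plays precisely the role that the split at $n\gamma_2$ played in the $\Phi$ lemma.

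For the easy range $i \le n\gamma_1$, I would drop the negative contribution $p_i(-\alpha f + (\alpha f)^2)\Psi_i$, which only strengthens an upper bound, and dominate the residual $\alpha f/n$ piece by
\[
\sum_{i \le n\gamma_1} \frac{\alpha f}{n}\Psi_i \;\le\; \frac{\alpha f}{n}\cdot n\gamma_1 \;\le\; 1,
\]
using $\alpha f = \epsilon/2 \le 1/8$ and $\gamma_1 < 1/2$; this accounts for the additive $+1$ in the claim. For the hard range $i > n\gamma_1$ I would invoke Chebyshev's sum inequality in its same-direction form: since both $p_i$ and $\Psi_i$ are non-decreasing in $i$,
\[
\sum_{i > n\gamma_1} p_i \Psi_i \;\ge\; \frac{1}{n\gamma_2}\Bigl(\sum_{i > n\gamma_1} p_i\Bigr)\Bigl(\sum_{i > n\gamma_1} \Psi_i\Bigr) \;\ge\; \frac{\gamma_2+\epsilon}{n\gamma_2}\bigl(\Psi - n\gamma_1\bigr),
\]
where I use the Markov-chain assumption $\sum_{i \le n\gamma_1} p_i \le \gamma_1 - \epsilon$ (so $\sum_{i>n\gamma_1}p_i \ge \gamma_2+\epsilon$) together with the easy-range bound $\sum_{i\le n\gamma_1}\Psi_i \le n\gamma_1$. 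Multiplying by the negative coefficient $(-\alpha f + (\alpha f)^2)$ converts this lower bound into the desired upper bound on the hard-range contribution.

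Combining the two pieces (and including the uniform $(\alpha f/n)\Psi$ from the $+\alpha f/n$ part of equation~\eqref{eq-psi-base-beta}), the dominant terms collect as
\[
\EE[\Delta\Psi \mid x(t)] \;\le\; -\frac{\alpha f (\gamma_2+\epsilon)}{n\gamma_2}\Psi + \frac{\alpha f}{n}\Psi + O\!\Bigl(\frac{(\alpha f)^2\Psi}{n}\Bigr) + 1.
\]
Substituting $\alpha f = \epsilon/2$, the linear-in-$\alpha f$ terms telescope to $-\epsilon^2\Psi/(2n\gamma_2)$, which (since $\gamma_2 < 1$) strictly exceeds in magnitude the $(\alpha f)^2$ remainder, of order $\epsilon^2\Psi/(4n)$, leaving the net bound $\EE[\Delta\Psi\mid x(t)] \le -\epsilon^2\Psi/(4n) + 1$, which rearranges to the statement.

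The main obstacle --- the only substantive departure from the $\Phi$ argument --- is keeping the direction of Chebyshev's inequality straight: for $\Phi$ the pair $(p_i,\Phi_i)$ is oppositely sorted, yielding an \emph{upper} bound on $\sum p_i \Phi_i$, whereas for $\Psi$ the pair $(p_i,\Psi_i)$ is co-monotone, so one needs a \emph{lower} bound on $\sum p_i \Psi_i$, which is then flipped into the required upper bound by the negative coefficient $(-\alpha f + (\alpha f)^2)$. Once this sign bookkeeping is handled the remaining algebra is purely routine.
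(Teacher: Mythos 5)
Your proposal is correct and follows essentially the same route as the paper's proof in Appendix~\ref{app:1_proof_beta}: split at $n\gamma_1$, discard the negative-coefficient terms on the easy range where $\Psi_i\le 1$, lower-bound $\sum_{i>n\gamma_1}p_i\Psi_i$ by $\tfrac{\gamma_2+\epsilon}{n\gamma_2}(\Psi-n\gamma_1)$ using the co-monotonicity of $p_i$ and $\Psi_i$ together with $\sum_{i>n\gamma_1}p_i\ge\gamma_2+\epsilon$, flip the sign via the negative factor $(-\alpha f+(\alpha f)^2)$, and finish with $\alpha f=\epsilon/2$ to get $-\tfrac{\epsilon^2}{2n\gamma_2}\Psi+\tfrac{\epsilon^2}{4n}\Psi+1\le-\tfrac{\epsilon^2}{4n}\Psi+1$. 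Your invocation of Chebyshev's sum inequality is just a more explicit formalization of the paper's extremal-configuration argument, and your slightly different bookkeeping of the additive constant (charging it to the easy-range $\alpha f/n$ piece rather than the $-n\gamma_1$ correction) changes nothing of substance.
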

\begin{proof}
    From equation~\eqref{eq-psi-base-beta}, we get,
\begin{equation}\label{eq-psi-1-beta}
\begin{aligned}
  \EE[ \Psi(t+1) - \Psi(t) | x(t)] & \le \sum_{i=1}^{n} (p_i * (-\alpha.f + (\alpha.f)^2) + \alpha.f/n). e^{-\alpha.s_i}\\
   & \le \sum_{i \ge n\gamma_1} p_i * (-\alpha.f + (\alpha.f)^2).e^{-\alpha.s_i} + \alpha.f/n * \Psi
\end{aligned}
\end{equation}
   The last inequality follows since the $(-\alpha.f + (\alpha.f)^2)$ is negative.
  Now, we need to upper bound the term $\sum_{i \ge n\gamma_1} p_i * (-\alpha.f + (\alpha.f)^2).e^{-\alpha.s_i}$. Since, $(-\alpha.f + (\alpha.f)^2)$ is negative, we need to find the minimum value of $\sum_{i \ge n\gamma_1} p_i * e^{-\alpha.s_i}$. Further, $x_{n\gamma_1}(t) \ge 0$, so $\sum_{ i \ge n\gamma_1} \Psi_i \ge \Psi - n\gamma_1$. Since $p_i$ is non-decreasing and $\Psi_i$ is non-decreasing, the minimum value of $\sum_{i \ge n\gamma_1} p_i * e^{-\alpha.s_i}$ is achieved, when, $\Psi_i = (\Psi - n\gamma_1)/n\gamma_2$ for each $i \ge n\gamma_1$. Using the assumption, that $\sum_{i \ge n\gamma_1} p_i \ge (\gamma_2 + \epsilon)$, the minimum value is: $(\gamma_2 + \epsilon)(\Psi - n\gamma_1)/(n\gamma_2)$. Thus,
\begin{equation}
   \sum_{i \ge n\gamma_1} p_i * (-\alpha.f + (\alpha.f)^2).e^{-\alpha.s_i} \le (-\alpha.f + (\alpha.f)^2)* (\gamma_2 + \epsilon)* (\Psi - n\gamma_1)/(n\gamma_2)
\end{equation}
  Thus, the expected change in $\Psi$ can be computed, using equation~\eqref{eq-psi-1-beta} and the above bound, as follows:
\begin{equation}
\begin{aligned}
  \EE[ \Psi(t+1) - \Psi(t) | x(t)] & \le (-\alpha.f + (\alpha.f)^2)(\gamma_2 + \epsilon)(\Psi - n\gamma_1)/(n\gamma_2) + \alpha.f/n * \Psi\\ 
   & \le \frac{-\epsilon^2}{2n\gamma_2}\Psi + \frac{\epsilon^2}{4n}\Psi + 1\\
   & \le \frac{-\epsilon^2}{4n}\Psi + 1
\end{aligned}
\end{equation}
%
%
\end{proof}

\section{Proof of Lemma~\ref{lemma-s-gamma-psi-beta}}
\label{app:2_proof_beta}

\begin{proof}
    From equation~\eqref{eq-psi-base-beta}, we get:
\begin{equation}
\begin{aligned}
   \EE[\Delta\Psi | x(t)] & \le \sum_{i=1}^{n} (p_i * (-\alpha.f + (\alpha.f)^2) + \alpha.f/n).e^{-\alpha.s_i}\\
    & \le \sum_{i > (n\gamma_4)} (p_i * (-\alpha.f + (\alpha.f)^2) + \alpha.f/n).e^{-\alpha.s_i} + \sum_{i \le (n\gamma_4)} (p_i * (-\alpha.f + (\alpha.f)^2) + \alpha.f/n) * e^{-\alpha.s_i}\\
    & \le  [((1 + \theta\epsilon)/n) * (-\alpha.f + (\alpha.f)^2) + \alpha.f/n)].\Psi_{> (n\gamma_4)} + (\alpha.f).\Psi_{\le n\gamma_4}/n\\
    & \le [-\alpha f\theta\epsilon^2/n + \epsilon^2/4n].\Psi_{> (n\gamma_4)} + \alpha.f.\Phi_{\le (n\gamma_4)}/n\\
    & \le (-\epsilon^2/(2n\gamma_1) + \epsilon^2/(4n)) * \Psi + ( \alpha.f/n + \epsilon^2/(2n\gamma_1)).\Psi_{\le (n\gamma_4)}
\end{aligned}
\end{equation}
  In the above, the third inequality follows since, $(-\alpha.f + (\alpha.f)^2)$ is negative and $p_i \ge 0$; and $\forall i > (n\gamma_4), p_i > (1+\theta\epsilon)/n$. Now, since $\EE[\Delta\Psi|x(t)] \ge -
  \epsilon^2\Psi/4n$, we get that, $\epsilon^2\gamma_2\Psi/(2n\gamma_1) \le (\epsilon/2n + \epsilon^2/(2n\gamma_1))* \Psi_{\le (n\gamma_4)}$. Thus, we get: $\Psi \le \frac{(1+\theta\epsilon)}{\epsilon\gamma_2} * \Psi_{\le (n\gamma_4)}$.
  Let, $B = \sum_{i} max(0, s_i)$ (as mentioned in Lemma~\ref{lemma-s-gamma-phi-beta}). One can observe that, $\Psi_{\le (n\gamma_4)} \le (n\gamma_4) * e^{ \alpha.B/(n\gamma_3)}$. This implies that, $\Psi \le \frac{(1+\theta\epsilon)}{\epsilon\gamma_2} * (n\gamma_4) * e^{\alpha.B/(n\gamma_3)}$.

Since, $s_{n\gamma_1} < 0$, so, $\Phi \ge n\gamma_1 * e^{\alpha.B/(n\gamma_1)}$. If $\Psi < \epsilon\gamma_1 * \Phi$, then we are done. Else, $\Psi \ge \epsilon\gamma_1 * \Phi$. This implies:
\begin{equation}\nonumber
  ((n(1+\theta\epsilon)\gamma_4)/(\epsilon\gamma_2)).e^{\alpha.B/(n\gamma_3)} \ge \Psi \ge \epsilon\gamma_1 * \Phi \ge (\epsilon\gamma_1^2n) * e^{\alpha B/(n\gamma_1)}
\end{equation}
  Thus, $e^{\alpha.B/n} \le (\frac{(1+\theta\epsilon)\gamma_4}{\epsilon^2\gamma_2\gamma_1^2})^\frac{\gamma_1\gamma_3}{\gamma_3 - \gamma_1}$. So, $\Gamma \le ((1+\theta)/\epsilon) * \Psi \le ((1+\theta)/\epsilon) * \frac{(1+\theta\epsilon)\gamma_4n}{\epsilon\gamma_2}.e^{\alpha.B/(n\gamma_3)}$. Hence, $\Gamma \le cn$, where, $c = poly(1/\epsilon)$.
%
\end{proof}
\end{document}